\numberwithin{equation}{section}
\numberwithin{figure}{section}
\theoremstyle{plain}
\newtheorem{thm}{\protect\theoremname}[section]
\theoremstyle{plain}
\newtheorem{lem}[thm]{\protect\lemmaname}
\theoremstyle{remark}
\newtheorem{rem}[thm]{\protect\remarkname}
\theoremstyle{definition}
\theoremstyle{plain}
\newtheorem{prop}[thm]{\protect\propositionname}
\theoremstyle{definition}
\newtheorem{defn}[thm]{\protect\definitionname}
\DeclareMathAlphabet{\mathcal}{OMS}{cmsy}{m}{n}
\providecommand{\definitionname}{Definition}
\providecommand{\examplename}{Example}
\providecommand{\lemmaname}{Lemma}
\providecommand{\propositionname}{Proposition}
\providecommand{\remarkname}{Remark}
\providecommand{\theoremname}{Theorem}
\begin{document}



\global\long\def\P{\mathbb{P}}%


\global\long\def\R{\mathbb{R}}

\global\long\def\Pr{\mathrm{Pr}}

\newcommand{\rmd}{\mathrm{d}}

\title{Twin Brownian particle method for \\ the study of Oberbeck-Boussinesq
fluid flows}
\author{Jiawei Li\thanks{School of Mathematics, University of Edinburgh, James Clerk Maxwell Building, Peter Guthrie Tait Rd, Edinburgh, United Kingdom, EH9 3FD. Email:
\protect\href{mailto:jiawei.li@ed.ac.uk}{jiawei.li@ed.ac.uk}}, \ Zhongmin Qian\thanks{Mathematical Institute, University of Oxford, Oxford, United Kingdom, OX2 6GG, and Oxford Suzhou Centre for Advanced Research, Suzhou, China.  Email:
\protect\href{mailto:qianz@maths.ox.ac.uk}{qianz@maths.ox.ac.uk}}, and Mingyu Xu\thanks{Department of Mathematics, Fudan University, Shanghai, China. Email: 
\protect\href{mailto:xumy@fudan.edu.cn}{xumy@fudan.edu.cn}}}
\maketitle
\begin{abstract}
We establish stochastic functional integral representations for solutions of Oberbeck-Boussinesq equations in the form of 
McKean-Vlasov-type mean field equations, which 
can be used to design numerical schemes for calculating solutions and for implementing 
Monte-Carlo simulations of
Oberbeck-Boussinesq flows. Our approach is based on 
the duality of conditional laws for a class of diffusion processes
associated with solenoidal vector fields, which allows us to 
obtain a novel integral representation theorem for solutions
of some linear parabolic equations in terms of the
Green function and the pinned measure of the associated diffusion. We demonstrate via numerical
experiments the 
efficiency of the numerical schemes, which are capable 
of  revealing numerically the details of 
Oberbeck-Boussinesq flows within their thin boundary layer, including  B{\'e}nard's convection feature. 

\medskip

\emph{Key words}: Boussinesq approximation, conditional laws, diffusion processes, incompressible fluid flow,
Monte-Carlo simulation, random vortex method. 

\medskip

\emph{MSC classifications}: 76M35, 76M23, 60H30, 65C05, 68Q10
\end{abstract}

\newpage
\tableofcontents
\newpage
\section{Introduction}

By Oberbeck-Boussinesq flows, we mean fluid flows governed by approximation equations of motion
 for heat-conducting fluid flows
(see e.g. Landau-Lifshitz \citep[Chapters II and V]{LandauLifishitzFluid}),
for details, the reader may refer to Chandrasekhar \citep{Chandrasekhar1961}
and Drazin-Reid \citep[Chapter 2]{Drazin-Reid1981}. These
approximation equations in the form of partial differential equations
were proposed independently by Oberbeck \citep{Oberbeck1879} and Boussinesq \citep{Boussinesq1903}, cf. also Rayleigh \citep{Rayleigh1916}, in which the approximation equations were
derived under the assumption that the fluid density $\rho$ is almost
constant.

The primary goal of the paper is to develop Monte-Carlo-type numerical methods for the study of Oberbeck-Boussinesq flows based on exact stochastic formulations
of the Oberbeck-Boussinesq flows to be established in the paper. This will be achieved by establishing the functional integral representations for
solutions of the Oberbeck-Boussinesq equations. These stochastic integral representations, as well as the approach presented in this work, appear to have independent interests on  their own. 
Indeed we hope these ideas will be useful in the study, 
theoretically and numerically,  of other non-linear systems of partial differential equations.

The Oberbeck-Boussinesq equations are composed of the Navier-Stokes
equations for the velocity $u(x,t)$ coupled with a transport equation
for the temperature $\theta(x,t)$:
\begin{equation}
\frac{\partial}{\partial t}u+(u\cdot\nabla)u=\nu\Delta u-\nabla P+f(\theta),\label{NS-g1}
\end{equation}
\begin{equation}
\nabla\cdot u=0,\label{NS-g2}
\end{equation}
and 
\begin{equation}
\frac{\partial}{\partial t}\theta+(u\cdot\nabla)\theta=\kappa\Delta\theta,\label{NS-g3}
\end{equation}
where $u(x,t)$ is subject to the no-slip condition, i.e. $u(x,t)$
vanishes along a solid boundary, and $\theta(x,t)$ the temperature generally possesses a non-trivial value since the heat is supplied to the fluid system from the solid boundary.
$\nu>0$ is the kinematic viscosity, and $\kappa>0$ is the thermal diffusivity. Hence the Oberbeck-Boussinesq equations are more sophisticated
than the Navier-Stokes equations alone. Nevertheless, they still serve as 
approximation equations for the much more complicated motion equations
governing viscous fluid flows with thermal conduction
when the fluid density remains relatively constant during the heating process.
The Oberbeck-Boussinesq model provides a good explanation for the regular cellular pattern of the fluid motion - when the fluid at the bottom receives heat and expands with increasing temperature to the level at which
the buoyancy dominates over the viscosity effect, a phenomenon called
the B\'enard convection occurs, reported first by B\'enard \citep{Benard1900}. 

The Oberbeck-Boussinesq model has been investigated as one of the very
successful examples in the theory of hydrodynamic stability, see Chandrasekhar
\citep{Chandrasekhar1961}, Joseph \citep{Joseph1965,Joseph1966,Joseph1976}
and Drazin-Reid \citep{Drazin-Reid1981}. In recent years, with
the development of computational power, various numerical methods have
been employed in the study of various fluid flows, including the questions
of hydrodynamic stability and transition to turbulence, see \cite{Criminale-StabilityB2003} for example.

In the study of fluid dynamics over a century, statistical and probabilistic ideas have penetrated gradually into the research area of fluid mechanics, though at the beginning, only very primitive concepts in statistics were borrowed to the study of isotropic and homogeneous turbulent flows, as seen in the seminal work by Taylor \cite{Taylor1935}. In fact, in the statistical theory of turbulence put forward by Taylor \cite{Taylor1935}, Von~K{\'a}rm{\'a}n \cite{Karman1931}, and etc., only the idea of averaging was adopted to describe the mean motions of turbulent flows. Later in K41 theory, Kolmogorov \cite{K41a, K41b} applied more sophisticated concept of conditional laws for random fields and introduced the concept of locally isotropic turbulent flows. 
Moreover, the ideas of random walks and diffusions in fluid flows emerged as means to describe 
turbulent flows. Taylor \cite{Taylor1921} formally introduced Brownian fluid particles into the study of fluid dynamics and made the study of diffusions in turbulence a useful tool in the description of 
various aspects of turbulent flows, see \cite{Pope2000}, \cite{Majda and Bertozzi 2002} and \cite{Falkovich2001} for
a very detailed review. These probabilistic studies of fluid dynamics, i.e. statistical fluid mechanics, occurred before the 
major development in probability theory, such as 
the creation of stochastic calculus by Doob, It{\^o}, etc. Since then, stochastic calculus has been
gradually applied to the study of fluid mechanics too. For example, in LeJan and Sznitman \cite{LeJanSznitman1997}, the energy dissipation cascades in turbulence were interpreted in terms of random walks, while in LeJan and Raimond \cite{LejanRaimond2002, LejanRaimond2004}, Brownian particles were studied.  

Particularly in the study of incompressible fluid flows, 
vorticity has been singled out
as a crucial fluid dynamical variable, in addition to the flow velocity. 
It seems that Helmholtz \citep{Helmholtz1858} was the first person who emphasised
the significance of vortex motions in the study of fluid mechanics. 
Since then, motions of vortices in fluid flows have
been studied throughout the history of fluid dynamics.  
The random vortex method, originated by Chorin \citep{Chorin 1973}, 
is a probabilistic method for incompressible flows developed based on the following simple but fundamental observation. 
The motion of vortices in turbulent flows in nature exhibits (approximately) statistical independence, 
leading to a potentially easier description of the fluid flow via its vortex motion. 
The vorticity equations governing the evolution of the vorticity, which appears as a parabolic transport equation, demonstrate that the vortices are transported along the fluid flow. 
The random vortex method,
though limited to two-dimensional fluid flows, based on the exact fluid dynamic
equations, was first discovered by Goodman \citep{Goodman1987} (see also
 Long \citep{Long1988}). The
relatively novel applications of vortex dynamics in numerical
schemes for solving fluid dynamic equations have been rapidly established
as an important branch of fluid mechanics, see e.g.\citep{AndersonGreengard1988LNM,AndersonGreengard1991LAM},
\citep{CottetKoumoutsakos2000}, \citep{Majda and Bertozzi 2002} and etc.
for excellent reviews on the vortex method. 
%

According to Feynman \cite{Feynman1948} and Kac \cite{Kac1949}, it is possible to express
solutions of certain linear parabolic and elliptic equations in terms of path integrals. 
In fact, the idea of Feynman-Kac has been generalised to a class of semi-linear parabolic equations by Pardoux and Peng in \cite{PardouxPeng1990}, where a nonlinear version of Feynman-Kac formula was obtained via
backward stochastic differential equations. 
In recent years, the random vortex methods have been greatly enhanced under the name of the stochastic Lagrangian 
(vorticity) approach and great progress has been made in Holm \cite{Holm1998},  Busnello \cite{Bunello1999}, 
Busnello, Flandoli and Romito \cite{Busnello2005},  Constantin  \cite{Constantin2001a, Constantin2001b}. In particular
Constantin and Iyer \cite{ConstantinIyer2011} obtained a stochastic integral 
representation for solutions of the Navier-Stokes
equations, cf. Zhang \cite{Zhang2010} as well in which stochastic integral representation has been established
for solutions to backward stochastic Navier-Stokes equations. 
Their representations were established by using the version
of It{\^o}'s formula for stochastic flows established in \cite{LeJanSznitman1997}. 
These stochastic integral representations are applicable to incompressible fluid flows on the whole space 
or flows with periodic boundary conditions. Later on in Constantin and Iyer \cite{ConstantinIyer2011} 
the stochastic Lagrangian approach has been generalised to incompressible fluid
flows constrained in a domain with boundary, and further extended by Iyer \cite{Iyer2006} to inviscid fluid flows 
(with a stochastic perturbation). These stochastic integral representations 
not only
utilise 
Taylor's diffusion driven by the fluid flow velocity but also its backward flows. 

The stochastic Lagrangian approach, more precisely, various stochastic formulations and 
stochastic integral representations (though implicit) for solutions of incompressible fluid
flows are very useful in the study of fluid dynamics, in particular in gaining information
through numerical simulations. In a series of remarkable papers by   Drivas and Eyink \cite{Drivas2017a, Drivas2017},
Eyink, Gupta and Zaki \cite{EyinkGuptaZaki2020a, EyinkGuptaZaki2020b}, the stochastic Lagrangian approach
has been applied successfully to the study of isotropic turbulent flows, verifying the small-scale theory,
such as those proposed in \cite{K41a, K41b}, \cite{Oboukhov1949}, \cite{Corrsin1951} and \cite{Kraichnan1968}, also Wung-Tseng \cite{WungTseng1992}, Yu et al. \cite{YuKanovPerlman2012} and the literature therein.

There are other interesting applications of stochastic calculus in the mathematical study of 
the Navier-Stokes equations, and in the descriptions of fluid dynamics including turbulent flows, let us mention only a few of them: \cite{CelaniCenciniMazzinoVergassola2004}, \cite{Freidlin1985}, \cite{GriebelETC2007}, \cite{Peskin}, \cite{ShlesingerWestKlafter1987}, \cite{ThalbardKrstulovicBec2014}, \cite{Vanden-Eijnden2000, Vanden-Eijnden2001}, and certainly there are more interesting works
the present authors must apologize for their ignorance not due to less significance of the work we are not aware of. 

In this paper, based on the approach developed in recent works \cite{Qian2022Stochastic} and \cite{QSZ3D}, we aim to deal with highly complicated yet important fluid flows with heat transfer by bringing in several new ideas. From the perspective of solving the Navier-Stokes equations numerically
for fluid flows with heat conduction, we devise the following approach, inspired mainly by Taylor \cite{Taylor1921}, 
the fundamental 
ideas in Goodman \citep{Goodman1987} and Long \cite{Long1988}. 
For simplicity, let us illustrate our method for two-dimensional fluid flows. 
To determine the velocity $u(x,t)$ of the flow, it is equivalent to describe the "fictitious" Brownian motion particles with velocity $u(x,t)$, departed from all possible site $\xi$. These Brownian fluid particles are denoted by  $X^{\xi}=(X_{t}^{\xi})_{t\geq0}$, 
which are diffusion processes defined as the weak solutions of 
It{\^o}'s stochastic differential equations
\[
\rmd X_{t}^{\xi}=u(X_{t}^{\xi},t)\rmd t+\sqrt{2\nu}\rmd B_{t},\quad X_{0}^{\xi}=\xi, 
\]
(called Taylor's diffusion (cf. Taylor \citep{Taylor1921})). 
By borrowing the idea from the mean field theory, 
the key step in our approach 
is to turn the preceding stochastic differential equation into a 
McKean-Vlasov type stochastic differential equations (cf. \cite{McKea1966}) by using the law of the Brownian particles $X^{\xi}$ and the governing 
fluid dynamical equations (\ref{NS-g1}, \ref{NS-g2} and \ref{NS-g3}). 
In random vortex methods, this is achieved by using the vorticity transport equation for $\omega(x,t)$. Indeed, by taking
curl operation on both sides of the Navier-Stokes equation (\ref{NS-g1}), 
$\omega(x,t)$
evolves according to a linear (considering the velocity and the temperature $\theta$ as known
fluid dynamical variables) parabolic equation 
\[
\nu\Delta\omega-(u\cdot\nabla)\omega-\frac{\partial}{\partial t}\omega+g=0.
\]
where for simplicity we use $g$  to denote the curl of $f$. While handling the boundary condition imposed on the vorticity $\omega$ for 
wall-bounded flows poses a technical issue (for example see
\cite{Anderson1986BV} and \cite{Chorin1980}), we shall for now focus on unbounded flows on the plane for the sake of clarity.
Nonetheless, we should emphasize the importance of the interaction term $g$ in
the vorticity equation - which always occurs for 
wall-bounded flows, and the underlying reasons will be explained in the subsequent sections. 
Let $p(s,x,t,y)$ denote the transition probability density function of the diffusion process $X^{\xi}$.
Since $\nabla\cdot u=0$, $p(s,x,t,y)$ (for $s<t$ and $x,y\in\mathbb{R}^{2}$)
is also the Green function of the forward parabolic equation
\begin{equation}
    \left(\nu\Delta-u\cdot\nabla-\frac{\partial}{\partial t}\right)v=0.\label{lpe}
\end{equation}
Therefore, according to the vorticity transport equation, the following representation (see \cite{Friedman 1964}) holds:
\begin{equation}
  \omega(x,t)=\int\omega(y,0)p(0,y,t,x)\rmd y +\int^t_0
\int g(y,s) p(s,y,t,x) \rmd y. \label{crep-01} 
\end{equation}
 According to the Biot-Savart law, 
\begin{align}
u(x,t) & =\int K(y,x)\wedge\omega(y,t)\rmd y\nonumber\\
 & =\int \mathbb{E}\left[K(X^{z}_{t},x)\wedge\omega(z,0)\right]\rmd z
  +\int^t_0\iint K(y,x)\wedge g(\xi, s)p(s,\xi,t,y)\rmd y \rmd\xi\rmd s \label{lrep-01}
\end{align}
where $K(y,x)=(2\pi)^{-1}(y-x)/|y-x|^{2}$ is the Biot-Savart singular
integral kernel. The second integral on the right-hand side of (\ref{lrep-01}) may be written in terms of an expectation
\[
\int^t_0\int \mathbb{E}\left[K(X^{\xi,s}_t,x)\wedge g(\xi,s) \right] \rmd \xi\rmd s,
\]
where $X^{\xi,s}$ is the
Taylor's diffusion starting from all possible site $\xi$ and at instance $s$ (for all $s\geq 0$):
\begin{equation}
   \rmd X_{t}^{\xi,s}=u(X_{t}^{\xi,s},t)\rmd t+\sqrt{2\nu}\rmd B_{t},\quad X_{\tau}^{\xi,s}=\xi, \text{ for }\tau\leq s. 
   \label{sT01}
\end{equation}

Therefore 
\begin{equation}
  u(x,t)=\int \mathbb{E}\left[K(X^{z}_{t},x)\wedge\omega(z,0)\right]\rmd z
+\int^t_0\int \mathbb{E}\left[K(X^{\xi,s}_t,x)\wedge g(\xi,s) \right] \rmd \xi\rmd s, \label{lrep-02} 
\end{equation}
which allows us to reformulate the stochastic differential
equation (\ref{sT01}) into a McKean-Vlasov-type mean field 
equation. More precisely,
\begin{align}
   \rmd X_{t}^{\xi,s} = &\left.\int \mathbb{E}\left[K(X_{t}^{z,s},x)\wedge\omega(z,0)\right]\rmd z + \int^t_0\int \mathbb{E}\left[K(X^{\xi,s}_t,x)\wedge g(\xi,s) \right]\right|_{x=X_{t}^{\xi,s}} \rmd \xi\rmd s\nonumber \\
   &+\sqrt{2\nu}\rmd B_{t},\quad X_{\tau}^{\xi,s}=\xi\quad \text{ for all }\tau\leq s.
   \label{sT02}
\end{align}
For the case where the external force $f$ can be computed separately or it is known, then 
the McKean-Vlasov type mean field equation \eqref{sT02}, in combining the strong law of large number, can be used to design Monte-Carlo type numerical schemes for solving the velocity of \eqref{sT02} accordingly. The velocity $u(x,t)$ may be determined by $X^{\xi}$, so the previous
equation, which is the kind of ordinary stochastic differential equations
involving the law of the solution $X^{\xi}$, has an advantage for
numerically calculating the velocity $u(x,t)$, and hence provides a method
for numerically solving the Navier-Stokes equations. There is extensive literature 
on numerical solutions of both ordinary stochastic differential
equations and McKean-Vlasov type mean field equations, see \cite{Kloeden1992} for example. Analogous stochastic 
integral representations may be established for
a 3D flow in a domain with or without boundary in terms of 
the Taylor's diffusion \eqref{sT01} 
alone, as seen in \cite{QSZ3D} and \cite{Qian2022Stochastic}. In these works, new stochastic integral representation theorems were established by using the duality of conditional distributions of a class of diffusion processes and a forward-type Feynman-Kac formula 
for solutions of parabolic equations. 

There is however  a serious disadvantage, and indeed, it is an obstacle to implementing the numerical schemes for computing numerically the solutions of 
the nonlinear mean field equation \eqref{sT02} which 
is established  based on the classical representation
\eqref{crep-01}. In fact, numerical methods for solving the mean field equation \eqref{sT02} require numerically simulating Brownian fluid particles $X^{\xi, s}$ starting not only from any site $\xi$ in the region of fluid, but also for every instance $s\geq 0$. This becomes unavoidable (at least under the current technology) when the interaction force $f$ is not trivial -- 
unfortunately, it is the case for wall-bounded flows and also for fluid flows with internal interaction force applying to the underlying fluid.  The requirement for simulating diffusion paths starting at every instance 
substantially increases the computational cost for computing the solutions of the mean field equation (\ref{sT02}). 
In this paper, to overcome this obstacle for implementing the random vortex approach to the Oberbeck-Boussinesq flows where 
a conducting force is an essential feature, we utilise the divergence-free condition that $\nabla\cdot u=0$ and the duality of conditional laws established in the previous work \cite{QSZ3D}, and derive a new integral representation theorem for solutions of 
the linear parabolic equation \eqref{lpe}.


To the best knowledge of
the present authors, vortex methods have been studied for the
Navier-Stokes equations without the consideration of energy transfer
through heat or other fields. Indeed, substantial novel ideas need to
be introduced in order to extend the random vortex method to important fluid flows appearing in applications, such as fluid
flows with thermal conduction described by (\ref{NS-g1}, \ref{NS-g2},
\ref{NS-g3}). 

In this paper, our goal is to develop new numerical schemes by establishing stochastic functional integral representation theorems  for  solutions to Oberbeck-Boussinesq fluid flows with thermal conduction in unbounded and wall-bounded domains.  

Of course we must point out that the Computational Fluid Dynamics (CFD) is a huge subject, and 
there is a large volume of literature, see for example \cite{Fletcher1991, Wesseling2001} for a small sample.  CFD by default covers all aspects of computational techniques 
for calculating numerically solutions of all interesting fluid flows. In the past, however, CFD mainly concerns with 
the finite difference and finite element methods applying to
fluid flows in science and engineering. Numerical simulations for turbulent flows have become popular too, due to the increasing computational capability over recent years. Various simulation tools have been developed in recent years,
such as Direct Numerical Simulations (DNS), Large Eddy Simulations (LES), Probability Density Function (PDF) method and etc. The reader may refer to \cite{Fletcher1991, Lesieur2005, Pope2000, SenguptaBhaumik2019} for an overview of these aspects.

There is also good literature about
the numerical solutions of Oberbeck-Boussinseq flows, and in general about Monte-Carlo
simulations for fluid flows,
see for example \cite{alanko2016, AldaEtc1997}.


The paper is organised as follows: in Section 2, some preliminary results on Taylor's diffusions associated with solenoidal vector fields are recalled, and a new functional integral representation formula is derived for the solution of the parabolic equation associated with Taylor's diffusion. In section 3, we review the Biot-Savart law and introduce a variation of the classical Biot-Savart law to link the flow velocity field to its vorticity field for the latter sections. To handle the coupled equations governing the fluid flow with thermal conduction, we introduce the twin Brownian particles in Section 4, which serve as Taylor's diffusions in the representation formula discovered in section 2. We establish the results for the unbounded case first. In Section 5, by applying the representation formula with the twin Brownian particles, the probabilistic representations for the fluid dynamical variables in Oberbeck-Boussinesq flows in $\R^2$ and $\R^3$ are established, and this allows us to formulate a closed random vortex dynamical system for Oberbeck-Boussinesq flows on unbounded domains. In section 6, we handle the flows in wall-bounded domains. With the twin Brownian particles in bounded domains, we first derive the representation results for Oberbeck-Boussinesq flows in half-spaces using dynamical variables mollified in a thin layer adjacent to the boundary. Additionally, the random vortex dynamics are characterised using these representations for wall-bounded flows in dimensions two and three. Next, we send the thickness of the layer to zero and find limiting representations of these variables. Finally,  in section 7, we provide numerical schemes based on the representation formulae in Sections 5 and 6, for unbounded and wall-bounded Oberbeck-Boussinesq flows, along with some numerical experiment results with different Prandtl numbers. 

\emph{Notations and Conventions:} Unless otherwise specified, Einstein's summation convention over repeated indices is assumed throughout the paper. For two-dimensional vectors $a=(a_{1},a_{2})$ and $b=(b_{1},b_{2})$, $a\wedge b =a_{1}b_{2}-a_{2}b_{1}$. For a real number $c$,
$a\wedge c$ is identified with the vector $(a_{2}c,-a_{1}c)$.

\section{Divergence-free vector fields}

In this section, we recall several results on divergence-free vector
fields on $\mathbb{R}^{d}$, where the dimension $d\geq2$, although
we are only interested in the case where $d=2$ or $d=3$. 

Suppose that $b(x,t)$ is a time-dependent, bounded and Borel measurable vector field on $\mathbb{R}^{d}$,
such that $\nabla\cdot b=0$ on $\mathbb{R}^{d}$ in the sense of distribution for every $t$. 
Let $\lambda>0$ be a constant. Then 
\[
L_{\lambda,b}=\lambda\Delta+b\cdot\nabla
\]
is a second-order elliptic operator on $\mathbb{R}^{d}$. Furthermore, since $\nabla\cdot b=0$, the adjoint operator $L_{\lambda,b}^{\star}$
of $L_{\lambda,b}$ is $L_{\lambda,-b}$.

For every $\tau\geq0$ and $\xi\in\mathbb{R}^{d}$, there is a unique
weak solution, denoted by $X^{\xi,\tau}=(X_{t}^{\xi,\tau})_{t\geq0}$,
of the stochastic differential equation (SDE):
\[
\rmd X_{t}=b(X_{t},t)\rmd t+\sqrt{2\lambda}\rmd B_{t},\quad X_{\tau}=\xi,\quad\textrm{ for }t\geq\tau,
\]
where $B$ is a $d$-dimensional Brownian motion on some probability
space. If $\tau=0$, then $X^{\xi,0}$ will be denoted by $X^{\xi}$ for simplicity.
$X_{t}^{\xi,\tau}$ may be also denoted by $X(\xi,\tau;t)$ (for $t\geq0$).
The distribution of $X^{\xi,\tau}$ is denoted by $\mathbb{P}^{\xi,\tau}$
and by $\mathbb{P}^{\xi}$ if $\tau=0$, which are probability measures
on the path space $C([0,\infty);\mathbb{R}^{d})$. $X^{\xi,\tau}$
(also $\mathbb{P}^{\xi,\tau}$) is called the diffusion with infinitesimal
generator $L_{\lambda,b}$, or called the $L_{\lambda,b}$-diffusion.

It is known that the law of $X_{t}^{\xi,\tau}$ for every $t>\tau\geq0$
has a positive and continuous probability density function with respect
to the Lebesgue measure, denoted by $p_{\lambda,b}(\tau,\xi,t,x)$.
The function $p_{\lambda,b}(\tau,\xi,t,x)$ for $t>\tau\geq0$ and
$\xi,x\in\mathbb{R}^{d}$ is the transition probability density function
of the $L_{\lambda,b}$-diffusion. Since $b(x,t)$ is a Borel measurable,
bounded vector field, $p_{\lambda,b}(\tau,\xi,t,x)$ is jointly H\"older
continuous in $t>\tau$ and $\xi,x\in\mathbb{R}^{d}$. Moreover, if
$b(x,t)$ is smooth, so is $p_{\lambda,b}$. 

The conditional law $\mathbb{P}^{\xi,\tau}[\ \cdot \ |X_{T}^{\xi,\tau}=\eta]$
(also called the pinned measure) is denoted by $\mathbb{P}^{\xi,\tau\rightarrow\eta,T}$,
and by $\mathbb{P}^{\xi\rightarrow\eta}$ if $\tau=0$ and $T>0$
is given. For the construction of the conditional laws, cf. \citep{QSZ3D}
and \citep{Qian2022Stochastic}.

It is known that $p_{\lambda,b}(\tau,\xi,t,x)$ is the Green function
of the backward parabolic operator $L_{\lambda,b}+\frac{\partial}{\partial t}$
(cf. \citep{Stroock-VaradhanMDP}), so that $p_{\lambda,b}(\tau,\xi,t,x)$
coincides with the Green function of the forward parabolic operator
$L_{\lambda,b}^{\star}-\frac{\partial}{\partial t}$ (cf. \citep[Chapter 1, Theorem 15]{Friedman 1964}).
Since $b(x,t)$ is divergence-free on $\mathbb{R}^{d}$ in the distribution
sense, $L_{\lambda,b}^{\star}=L_{\lambda,-b}$, therefore $p_{\lambda,b}(\tau,\xi,t,x)$
is the Green function of the forward parabolic operator $L_{\lambda,-b}-\frac{\partial}{\partial t}$
on $\mathbb{R}^{d}$.

Suppose that $D\subset\mathbb{R}^{d}$ is a domain with a Lipschitz
continuous boundary $\partial D$. Let $p_{\lambda,b}^{D}(\tau,\xi,t,x)$
(for $t>\tau\geq0$, $\xi,x\in D$) be the transition density function
of the diffusion $X$ killed on leaving the domain $D$, that is
\[
\mathbb{E}\left[\varphi(X_{t}^{\xi,\tau})1_{\left\{ t<\zeta(X^{\xi,\tau})\right\} }\right]=\int_{D}p_{\lambda,b}^{D}(\tau,\xi,t,x)\varphi(x)\rmd x
\]
for any bounded and Borel measurable function $\varphi$, where $\zeta(\psi) = \inf\{t\geq 0: \psi(t)\notin D\}$. Since $\nabla\cdot b=0$
in the distribution sense in $\mathbb{R}^{d}$, $p_{\lambda,b}^{D}(\tau,\xi,t,x)$
is the Green function to the Dirichlet problem of the forward parabolic
equation operator 
\begin{equation}
\left(L_{\lambda,-b}-\frac{\partial}{\partial t}\right)w(x,t)=0\quad\textrm{ in }D\times(0,\infty)
\label{p-01}
\end{equation}
subject to the Dirichlet boundary condition that
\begin{equation}
w(x,t)=0\quad\textrm{ for }x\in\partial D.\label{p-02}
\end{equation}


Next, we establish the main technical tool for the present work. 

Let  $\varPsi(x,t)=(\varPsi^{1}(x,t),\cdots,\varPsi^{n}(x,t))$ be
a smooth solution of the parabolic equations
\begin{equation}
\left(\lambda\Delta-b\cdot\nabla-\frac{\partial}{\partial t}\right)\varPsi^{i}+q_{l}^{i}\varPsi^{l}+F^{i}=0\quad\textrm{ in }D\times(0,\infty),\label{Phi-01}
\end{equation}
\begin{equation}
\varPsi(x,t)=0\quad\textrm{ for }x\in\partial D,\label{B-Phi-01}
\end{equation}
for $i=1,\cdots,n$, where $q(x,t)=(q_{l}^{i}(x,t))_{1\leq i,l\leq n}$
is a bounded, Borel measurable $n\times n$ matrix-valued function
on $D$. $q_{l}^{i}(x,t)=0$ for $x\notin D$, or otherwise we replace
$q(x,t)$ by $1_{D}(x)q(x,t)$ instead. $F(x,t)=(F^{i}(x,t))_{1\leq i\leq n}$
is a family of functions which are $C^{2,1}(D\times[0,\infty))$.

Let $\Omega=C([0,\infty);\mathbb{R}^{d})$, and let $\tau_{T}$ denote
the time reversal at $T$ on $\Omega$. That is, $\psi\circ\tau_{T}(s)=\psi(T-s)$
for $s\in[0,T]$. Also, as defined before, $\zeta(\psi)=\inf\left\{ t:\psi(t)\notin D\right\}$, the first exit time from the region $D$, and $\gamma_{T}(\psi)=\sup\left\{ t\in(0,T):\psi(t)\notin D\right\}$, the last exit time before time $T$ from $D$.

\begin{thm}
\label{thm7.2-1} Let $T>0$ and $\eta\in\mathbb{R}^{d}$. Let $t\mapsto Q_{j}^{i}(\eta,T;t)$
(for $t\in[0,T]$) be the solutions to the ordinary differential equations
\begin{equation}
\frac{\rmd}{\rmd t}Q_{j}^{i}(\eta,T;t)=-Q_{k}^{i}(\eta,T;t)q_{j}^{k}\left(X_{t}^{\eta},t\right),\quad Q_{j}^{i}(\eta,T;T)=\delta_{ij}\label{Q-s01}
\end{equation}
for $i,j=1,\cdots,n$. Then
\begin{align}
\varPsi^{i}(\xi,T)  =&\int_{D}\mathbb{E}\left[\left.1_{\{T<\zeta(X^{\eta})\}}Q_{j}^{i}(\eta,T;0)\varPsi^{j}(\eta,0)\right|X_{T}^{\eta}=\xi\right]p_{\lambda,b}(0,\eta,T,\xi)\rmd \eta\nonumber \\
 & +\int_{0}^{T}\int_{D}\mathbb{E}\left[\left.1_{\left\{ t>\gamma_{T}(X^{\eta})\right\} }Q_{j}^{i}(\eta,T;t)F^{j}(X_{t}^{\eta},t)\right|X_{T}^{\eta}=\xi\right]p_{\lambda,b}(0,\eta,T,\xi)\rmd \eta\rmd t\label{rep-m02-1}
\end{align}
for every $\xi\in D$ and $T>0$, $i=1,\cdots,n$. 
\end{thm}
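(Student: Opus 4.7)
The plan is to prove (\ref{rep-m02-1}) by combining the classical Duhamel (Feynman--Kac) representation for the Dirichlet initial-boundary value problem (\ref{Phi-01})--(\ref{B-Phi-01}) with a path-wise elimination of the coupling term $q^i_l\Psi^l$ via the matrix $Q^i_j$ defined by the ODE (\ref{Q-s01}). The conceptual pivot is the duality already recalled in Section~2: since $\nabla\cdot b=0$, the Dirichlet-killed transition density $p^D_{\lambda,b}(s,\eta,T,\xi)$ coincides with the Green function of the forward operator $L_{\lambda,-b}-\partial_t$ in $D$, and the PDE (\ref{Phi-01}) rewrites exactly as $\partial_t\Psi^i=L_{\lambda,-b}\Psi^i+q^i_l\Psi^l+F^i$ with zero boundary data. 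The proof opens with the analytic Duhamel identity
\[
\Psi^i(\xi,T)=\int_D p^D_{\lambda,b}(0,\eta,T,\xi)\Psi^i(\eta,0)\,\rmd\eta+\int_0^T\!\!\int_D p^D_{\lambda,b}(s,\eta,T,\xi)\bigl[q^i_l\Psi^l+F^i\bigr](\eta,s)\,\rmd\eta\,\rmd s,
\]
which reduces the problem to two separate issues: (a) rewriting $p^D_{\lambda,b}$ as an expectation under the law of $X^\eta$ with an exit-time indicator, and (b) absorbing the recursive $q^i_l\Psi^l$ contribution into the factor $Q^i_j$.

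For (a), the initial-data integrand is immediate: conditioning on the endpoint gives $p^D_{\lambda,b}(0,\eta,T,\xi)=\mathbb{E}[\mathbf{1}_{\{T<\zeta(X^\eta)\}}\mid X^\eta_T=\xi]\,p_{\lambda,b}(0,\eta,T,\xi)$. For the source integrand I would decompose the path of $X^\eta$ at the intermediate time $t$ by the Markov property: the portion on $[0,t]$ contributes the marginal sampling $X^\eta_t=y$ with density $p_{\lambda,b}(0,\eta,t,y)$, while the portion on $[t,T]$ must stay inside $D$ in order that the source at time $t$ propagate to $\xi$ through the Dirichlet kernel, and this confinement is exactly the event $\{t>\gamma_T(X^\eta)\}$. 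Fubini plus integration in $\eta$ then converts $\int_D p^D_{\lambda,b}(t,\cdot,T,\xi)(\cdot)\,\rmd\cdot$ into the pinned-measure form with $F^j$ evaluated at the sampled position $X^\eta_t$ and with the last-exit-time indicator, as advertised. For (b), I would iterate the Duhamel formula, substituting the same representation for $\Psi^l(\eta,s)$ into the $q^i_l\Psi^l$ integrand; this yields a Neumann series whose $n$-th term is an $n$-fold time-ordered integral of products $q^{i}_{k_1}(X^\eta_{s_1},s_1)\cdots q^{k_{n-1}}_{j}(X^\eta_{s_n},s_n)$ along the pinned diffusion, assembled from the intermediate spatial variables via Chapman--Kolmogorov for the Dirichlet-killed densities and collected into a single conditional expectation. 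This series is precisely the Dyson expansion of the solution $Q^i_j(\eta,T;\cdot)$ of the linear ODE (\ref{Q-s01}), evaluated at $0$ in the initial-data term and at $t$ in the source term; summing it produces the two $Q^i_j$ factors in (\ref{rep-m02-1}).

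The hard part is the simultaneous execution of steps (a) and (b): one must justify absolute convergence of the Neumann expansion under the standing boundedness of $b$, $q$ and $F$, and carry out the iterated Fubini / Markov decomposition in a way that commutes with the exit-time and last-exit-time indicators so that the $q$-chain truly emerges as the path-ordered matrix $Q$ along the conditioned diffusion. A secondary but delicate ingredient is the conversion of $p^D_{\lambda,b}(t,\eta,T,\xi)$ into a conditional-expectation form featuring $\gamma_T$, which relies on the duality of conditional laws developed in \cite{QSZ3D} and must be invoked uniformly across the iteration levels.
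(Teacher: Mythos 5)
Your route is genuinely different from the paper's. The paper never writes the Duhamel formula: it runs the time-reversed diffusion $\tilde X^{\xi}$ (drift $-b(\cdot,T-t)$, started at the observation point $\xi$), applies It\^o's formula to $\tilde Q^{i}_{j}(t)\varPsi^{j}(\tilde X^{\xi}_{t\wedge T_{\xi}},T-t)$ so that the gauge term $q^{i}_{l}\varPsi^{l}$ and the Dirichlet condition are absorbed in one stroke, takes expectations, and only then conditions on $\tilde X^{\xi}_{T}=\eta$ and invokes the duality of conditional laws of \cite{QSZ3D} ($\tilde{\mathbb P}^{\xi\rightarrow\eta}=\mathbb P^{\eta\rightarrow\xi}\circ\tau_{T}$, valid because $\nabla\cdot b=0$) to transport everything onto the forward diffusion $X^{\eta}$ pinned at $\xi$; the matrix $Q$ and the indicators $1_{\{T<\zeta\}}$, $1_{\{t>\gamma_{T}\}}$ then appear by reading $\tilde Q$ and the exit time of the reversed path through the time reversal. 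Your forward Duhamel $+$ Markov decomposition $+$ Dyson resummation avoids the time reversal and the path-space duality altogether: in your scheme $\nabla\cdot b=0$ enters only through (i) the identification of $p^{D}_{\lambda,b}$ as the Green function of the forward Dirichlet problem for $L_{\lambda,-b}-\frac{\partial}{\partial t}$ (so that Duhamel may be written with this kernel) and (ii) the doubly stochastic normalisation $\int_{\mathbb R^{d}}p_{\lambda,b}(0,\eta,t,y)\,\rmd\eta=1$ needed to graft the free segment $[0,t]$ of the path onto the killed segment $[t,T]$. Your closing remark that the $\gamma_{T}$-conversion ``relies on the duality of conditional laws'' is therefore a misattribution: your argument needs only the Markov property together with (i) and (ii); the path-space duality is the engine of the paper's proof, not of yours. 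The Dyson resummation itself is unproblematic (boundedness of $q$ gives $\|Q\|\le e^{T\|q\|_{\infty}}$ and absolute convergence), and its interleaving with the killing is consistent: every segment of the iterated chain for the initial-data term is killed, producing $1_{\{T<\zeta\}}$, while for the source term only the segments after the $F$-insertion are killed, producing $1_{\{t>\gamma_{T}\}}$.

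One concrete point you must resolve before the argument closes. The event $\{t>\gamma_{T}(X^{\eta})\}$ constrains the path only on $[t,T]$, so grafting the free segment $[0,t]$ via $\int p_{\lambda,b}(0,\eta,t,y)\,\rmd\eta=1$ forces the $\eta$-integral in the source term to run over all of $\mathbb R^{d}$, not over $D$: a path may start at $\eta\notin D$, enter $D$ before time $t$, and still satisfy $t>\gamma_{T}$. Your computation therefore yields the second term of (\ref{rep-m02-1}) with $\int_{\mathbb R^{d}}$ in place of $\int_{D}$, and matching the stated restriction to $D$ requires a separate justification (the paper handles this point by asserting that the corresponding conditional expectation vanishes for $\eta\notin D$). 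Make this step explicit; apart from it, and from actually writing out the iterated Fubini/Markov bookkeeping you have only sketched, your outline goes through.
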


\begin{proof} The subscript $\lambda$ will be omitted in the proof. For each $T>0$, let $\tilde{X}^{\xi}$
be the solution to the stochastic differential equation
\[
\rmd \tilde{X}_{t}^{\xi}=-b(\tilde{X}_{t}^{\xi},T-t)\rmd t+\sqrt{2\lambda}\rmd B_{t},\quad\tilde{X}_{0}^{\xi}=\xi
\]
for every $\xi\in\mathbb{R}^{d}$ and $B$ is a standard Brownian
motion in $\mathbb{R}^{d}$ on some probability space, where $b(x,t)=0$
for $t<0$. Define
\[
\rmd\tilde{Q}_{j}^{i}(t)=\tilde{Q}_{k}^{i}(t)q_{j}^{k}(\tilde{X}_{t}^{\xi},T-t)\rmd t,\quad\tilde{Q}_{j}^{i}(0)=\delta_{ij}
\]
(note that we assume that $q_{j}^{i}(x,t)=0$ for $x\notin D$), where
$i,j=1,\cdots,n$. Let 
\[
Y_{t}=\varPsi(\tilde{X}_{t\wedge T_{\xi}}^{\xi},T-t)=1_{\{t<T_{\xi}\}}\varPsi(\tilde{X}_{t}^{\xi},T-t),
\]
where $T_{\xi}=\inf\left\{ t:\tilde{X}_{t}^{\xi}\notin D\right\} $.
Here the second equality is ensured as $\varPsi$ vanishes along the
boundary $\partial D$. Let $M^{i}=\tilde{Q}_{j}^{i}Y^{j}$. Then
by It\^o's formula and Eq. (\ref{Phi-01}, \ref{B-Phi-01}) we obtain
that
\begin{align*}
M_{t}^{i} =&Y_{0}^{i}+\sqrt{2\lambda}\int_{0}^{t}1_{\{s<T_{\xi}\}}\tilde{Q}_{j}^{i}(s)\nabla\varPsi^{j}(\tilde{X}_{s}^{\xi},T-s)\cdot\rmd B_{s}\\
 & -\int_{0}^{t}1_{\{s<T_{\xi}\}}\tilde{Q}_{j}^{i}(s)F^{j}(\tilde{X}_{s}^{\xi},T-s)\rmd s.
\end{align*}
Taking expectation on both sides and using the fact that $\varPsi$ vanishes
along the boundary, we obtain that 
\begin{align}
\varPsi^{i}(\xi,T) =& \mathbb{E}\left[\tilde{Q}_{j}^{i}(T)\varPsi^{j}(\tilde{X}_{T}^{\xi},0)1_{\{T<T_{\xi}\}}\right]\nonumber \\
 &+\int_{0}^{T}\mathbb{E}\left[1_{\{t<T_{\xi}\}}\tilde{Q}_{j}^{i}(t)F^{j}(\tilde{X}_{t}^{\xi},T-t)\right]\rmd t\nonumber \\
  \equiv& J_{1}^{i}+J_{2}^{i}.\label{Wei-11}
\end{align}
Note that $T_{\xi}$ is a stopping time with respect to the filtration
generated by $\tilde{X}$, so that $\tilde{Q}_{j}^{i}(t)1_{\{t<T_{\xi}\}}$
is therefore measurable with respect to $\tilde{X}$ running up to
time $t$, which allows us to take conditional expectation on giving
$\tilde{X}_{T}=\eta$, to obtain that
\begin{align*}
J_{2}^{i} & =\int_{0}^{T}\int_{\mathbb{R}^{d}}\mathbb{E}\left[\left.\tilde{Q}_{j}^{i}(t)1_{\{t<T_{\xi}\}}F^{j}(\tilde{X}_{t}^{\xi},T-t)\right|\tilde{X}_{T}^{\xi}=\eta\right]\mathbb{P}\left[\tilde{X}_{T}^{\xi}\in\rmd \eta\right]\rmd t\\
 & =\int_{0}^{T}\int_{D}\mathbb{E}\left[\left.\tilde{Q}_{j}^{i}(t)1_{\{t<T_{\xi}\}}F^{j}(\tilde{X}_{t}^{\xi},T-t)\right|\tilde{X}_{T}^{\xi}=\eta\right]p_{-b_{T}}(0,\xi,T,\eta)\rmd \eta\rmd t
\end{align*}
where $p_{-b_{T}}(\tau,\xi,t,\eta)$ is the transition probability
density function of the diffusion $\tilde{X}^{\xi}$, and the second equality follows from the fact that the conditional
expectation is zero if $\eta\notin D$, so we may restrict the integral
for $\eta\in D$ only. Similarly
\[
J_{1}^{i}=\int_{D}\mathbb{E}\left[\left.\tilde{Q}_{j}^{i}(T)1_{\{T<T_{\xi}\}}\right|\tilde{X}_{T}^{\xi}=\eta\right]\varPsi^{j}(\eta,0)p_{-b_{T}}(0,\xi,T,\eta)\rmd \eta.
\]

Since $\nabla\cdot b=0$, so that we can replace $p_{-b_{T}}(0,\xi,T,\eta)$
by $p_{b}(0,\eta,T,\xi)$. Hence
\begin{equation}
J_{1}^{i}=\int_{D}\mathbb{E}\left[\left.\tilde{Q}_{j}^{i}(T)1_{\{T<T_{\xi}\}}\right|\tilde{X}_{T}^{\xi}=\eta\right]\varPsi^{j}(\eta,0)p_{b}(0,\eta,T,\xi)\rmd \eta\label{J1-i01}
\end{equation}
and
\begin{equation}
J_{2}^{i}=\int_{0}^{T}\int_{D}\mathbb{E}\left[\left.\tilde{Q}_{j}^{i}(t)1_{\{t<T_{\xi}\}}F^{j}(\tilde{X}_{t}^{\xi},T-t)\right|\tilde{X}_{T}^{\xi}=\eta\right]p_{b}(0,\eta,T,\xi)\rmd\eta\rmd t\label{J2-i03}
\end{equation}
for $i=1,\cdots,n$. Next we utilise the approach in \citep{QSZ3D}
and rewrite the conditional expectations in terms of the diffusion
process with infinitesimal generator $\lambda\Delta+b\cdot\nabla$.
To this end, we first introduce a few notations. Let $\tilde{\mathbb{P}}^{\xi}$
denote the law of $\tilde{X}^{\xi}$ and $\tilde{\mathbb{P}}^{\xi\rightarrow\eta}$
denote the conditional law of $\tilde{X}^{\xi}$ given the terminal
value that $\tilde{X}_{T}^{\xi}=\eta$. Let $\tilde{Q}(\psi;t)$ denote
the solution to the linear system of ordinary differential equations
\[
\frac{\rmd}{\rmd t}\tilde{Q}_{j}^{i}(\psi;t)=\tilde{Q}_{k}^{i}(\psi;t)q_{j}^{k}(\psi(t),T-t),\quad\tilde{Q}_{j}^{i}(\psi;0)=\delta_{ij}
\]
for every $\psi\in C([0,T];\R^{d})$, where $i,j=1,\cdots,n$.
The representations (\ref{J1-i01}, \ref{J2-i03}) can be rewritten
as:
\[
J_{1}^{i}=\int_{D}\tilde{\mathbb{P}}^{\xi\rightarrow\eta}\left[\tilde{Q}_{j}^{i}(\psi;T)1_{\{T<\zeta(\psi)\}}\right]\varPsi^{j}(\eta,0)p_{b}(0,\eta,T,\xi)\rmd \eta
\]
and
\[
J_{2}^{i}=\int_{0}^{T}\int_{D}\tilde{\mathbb{P}}^{\xi\rightarrow\eta}\left[\tilde{Q}_{j}^{i}(\psi;t)F^{j}(\psi(t),T-t)1_{\{t<\zeta(\psi)\}}\right]p_{b}(0,\eta,T,\xi)\rmd\eta\rmd t.
\]
Since $\nabla\cdot b=0$, according to the duality for the conditional
laws (cf. \citep[Theorem 3.1]{QSZ3D}), the conditional law $\tilde{\mathbb{P}}^{\xi\rightarrow\eta}$
coincides with the conditional law $\mathbb{P}^{\eta\rightarrow\xi}\circ\tau_{T}$, where $\tau_T$ denotes the time reversal at $T$.
Therefore we may rewrite
\[
J_{1}^{i}=\int_{D}\P^{\eta\rightarrow\xi}\left[1_{\{T<\zeta(\psi\circ\tau_{T})\}}\tilde{Q}_{j}^{i}(\psi\circ\tau_{T};T)\right]\varPsi^{j}(\eta,0)p_{b}(0,\eta,T,\xi)\rmd \eta
\]
and
\[
J_{2}^{i}=\int_{0}^{T}\int_{D}\mathbb{P}^{\eta\rightarrow\xi}\left[1_{\{t<\zeta(\psi\circ\tau_{T})\}}\tilde{Q}_{j}^{i}(\psi\circ\tau_{T};t)F^{j}(\psi(T-t),T-t)\right]p_{b}(0,\eta,T,\xi)\rmd\eta\rmd t.
\]
It remains to identify the flow $\tilde{Q}_{j}^{i}(\psi\circ\tau_{T};t)$
for $t\in[0,T]$. Consider $Q_{j}^{i}(\psi,T;s)=\tilde{Q}_{j}^{i}(\psi\circ\tau_{T};T-s)$
for $0<s\leq T$. Then one can easily verify that $Q_{j}^{i}(\psi,T;s)$
is exactly the unique solution of Eq. (\ref{Q-s01}). Therefore
\[
J_{1}^{i}=\int_{D}\mathbb{P}^{\eta\rightarrow\xi}\left[1_{\{T<\zeta(\psi\circ\tau_{T})\}}Q_{j}^{i}(\psi,T;0)\right]\varPsi^{j}(\eta,0)p_{b}(0,\eta,T,\xi)\rmd \eta
\]
and
\begin{align*}
J_{2}^{i} & =\int_{0}^{T}\int_{D}\mathbb{P}^{\eta\rightarrow\xi}\left[1_{\{t<\zeta(\psi\circ\tau_{T})\}}Q_{j}^{i}(\psi,T;T-t)F^{j}(\psi(T-t),T-t)\right]p_{b}(0,\eta,T,\xi)\rmd \eta\rmd t\\
 & =\int_{0}^{T}\int_{D}\mathbb{P}^{\eta\rightarrow\xi}\left[1_{\{T-t<\zeta(\psi\circ\tau_{T})\}}Q_{j}^{i}(\psi,T;t)F^{j}(\psi(t),t)\right]p_{b}(0,\eta,T,\xi)\rmd \eta\rmd t.
\end{align*}
Hence, by using equalities in (\ref{Wei-11}), we obtain
\begin{align*}
\varPsi^{i}(\xi,T)  =&\int_{D}\mathbb{P}^{\eta\rightarrow\xi}\left[1_{\{T<\zeta(\psi\circ\tau_{T})\}}Q_{j}^{i}(\psi,T;0)\right]p_{b}(0,\eta,T,\xi)\varPsi^{j}(\eta,0)\rmd\eta\\
 & +\int_{0}^{T}\int_{D}\mathbb{P}^{\eta\rightarrow\xi}\left[1_{\{T-t<\zeta(\psi\circ\tau_{T})\}}Q_{j}^{i}(\psi,T;t)F^{j}(\psi(t),t)\right]p_{b}(0,\eta,T,\xi)\rmd \eta\rmd t.
\end{align*}
 Now we make the following observation. For $\xi,\eta\in D$, then
for a path $\psi\in C([0,T];\mathbb{R}^{d})$ with $\psi(0)=\eta$
and $\psi(T)=\xi$, then $T<\zeta(\psi\circ\tau_{T})$ is equivalent
to say $\psi(T-t)\in D$ for all $t\in(0,T)$, and therefore the condition
that $T<\zeta(\psi\circ\tau_{T})$ almost surely under the conditional
law $\mathbb{P}^{\eta\rightarrow\xi}$ is equivalent to that $T<\zeta(\psi)$
almost surely w.r.t. $\mathbb{P}^{\eta\rightarrow\xi}$. Therefore
the indicator function $1_{\{T<\zeta(\psi\circ\tau_{T})\}}$ can
be replaced by $1_{\{T<\zeta(\psi)\}}$. The treatment for the random function $1_{\{T-t<\zeta(\psi\circ\tau_{T})\}}$
(for $t\in(0,T)$) in the second term is more subtle. Under the conditional law $\mathbb{P}^{\eta\rightarrow\xi}$
for $\xi,\eta\in D$, we may replace 
\begin{align*}
\zeta(\psi\circ\tau_{T}) & =\inf\left\{ s>0:\psi(T-s)\in\partial D\right\} \wedge T\\
 & =\inf\left\{ T-s>0:\psi(s)\in\partial D\right\} \wedge T\\
 & =T-\sup\left\{ s\in(0,T):\psi(s)\in\partial D\right\} 
\end{align*}
with the convention that $\sup\emptyset=0$. Thus $T-t<\zeta(\psi\circ\tau_{T})$
is equivalent to 
\[t>\sup\left\{ s\in(0,T):\psi(s)\in\partial D\right\} .
\]
Therefore it is useful to introduce the notation that
\[
\gamma_{T}(\psi)=\sup\left\{ s\in(0,T):\psi(s)\in\partial D\right\} 
\]
for every path $\psi$ and $T>0$. Then $1_{\{T-t<\zeta(\psi\circ\tau_{T})\}}$
can be replaced by $1_{\left\{ t>\gamma_{T}(\psi)\right\} }$. Hence
(\ref{rep-m02-1}) follows immediately.
\end{proof}

\begin{rem}
In handling the term $J_{2}^{i}$ ($i=1,\cdots,n$), one may use the
conditional law given $\tilde{X}_{t}=\eta$ in place of $\tilde{X}_{T}=\eta$,
so that
\begin{align*}
J_{2}^{i} & =\int_{0}^{T}\int_{\mathbb{R}^{d}}\mathbb{E}\left[\left.\tilde{Q}_{j}^{i}(t)1_{\{t<T_{\xi}\}}F^{j}(\tilde{X}_{t}^{\xi},T-t)\right|\tilde{X}_{t}^{\xi}=\eta\right]\mathbb{P}\left[\tilde{X}_{t}^{\xi}\in\rmd \eta\right]\rmd t\\
 & =\int_{0}^{T}\int_{D}\mathbb{E}\left[\left.\tilde{Q}_{j}^{i}(t)1_{\{t<T_{\xi}\}}\right|\tilde{X}_{t}^{\xi}=\eta\right]F^{j}(\eta,T-t)p_{-b_{T}}(0,\xi,t,\eta)\rmd \eta\rmd t\\
 & =\int_{0}^{T}\int_{D}\mathbb{E}\left[\left.\tilde{Q}_{j}^{i}(t)1_{\{t<T_{\xi}\}}\right|\tilde{X}_{t}^{\xi}=\eta\right]F^{j}(\eta,T-t)p_{b}(T-t,\eta,T,\xi)\rmd \eta\rmd t.
\end{align*}
This leads to the similar formula in the case where $D=\mathbb{R}^{d}$
and $q_{j}^{i}\equiv0$. Indeed, for this case $T_{\xi}=\infty$ and
$\tilde{Q}_{j}^{i}=\delta_{ij}$ so that 
\[
\mathbb{E}\left[\left.\tilde{Q}_{j}^{i}(t)1_{\{t<T_{\xi}\}}\right|\tilde{X}_{t}^{\xi}=\eta\right]=\delta_{ij}
\]
and therefore 
\begin{align*}
J_{2}^{i} & =\int_{0}^{T}\int_{\mathbb{R}^{d}}F^{i}(\eta,T-t)p_{b}(T-t,\eta,T,\xi)\rmd\eta\rmd t\\
 & =\int_{0}^{T}\int_{\mathbb{R}^{d}}F^{i}(\eta,t)p_{b}(t,\eta,T,\xi)\rmd \eta\rmd t.
\end{align*}
\end{rem}

\begin{rem}
    The duality of conditional laws among 
    certain diffusion processes used in the proof of the preceding theorem is the main tool developed
    in \cite{QSZ3D} in order to formulate a random vortex
    method for three dimensional incompressible fluid flows by using only the forward Taylor diffusion of the flow velocity.  The conditional
    law techniques for diffusions have their origin from the study of
    symmetric diffusion processes and Dirichlet forms. The
    notion of duality of diffusion distributions was certainly developed from the well-known concept of self-adjoint operators, and for diffusion semigroups, the path-space
    version of the self-adjointness was first discovered in
    a seminal work by Lyons and Zheng \cite{LyonsZheng1988}. In fact, Lyons-Zheng \cite{LyonsZheng1990} has utilised the conditional
    laws of symmetric diffusions to the study of heat kernels for a class of non-symmetric diffusion processes. The reader may find a detailed account in \cite{Qian2022Stochastic} about
    conditional law duality and its applications 
    to forward type Feynman-Kac formulas. 
\end{rem}

While in the presence of a non-trivial gauge function $q(x,t)$, or
the presence of non-empty boundary, it seems that the formulation
is more useful by conditioning on $\tilde{X}_{T}$. Due to these
are important cases, we may formulate the following theorem, which
is indeed an extension of the theorem proved in \citep{QSZ3D}.

\begin{thm}
\label{thm7.2-1-1} Under the same assumptions and notations as in
Theorem \ref{thm7.2-1}, suppose that $\varPsi=(\varPsi^{i})_{1\leq i\leq n}$ is a solution
to Eq. (\ref{Phi-01}) on $\mathbb{R}^{d}$, then
\begin{align*}
\varPsi^{i}(\xi,T)  =&\int_{\mathbb{R}^{d}}\mathbb{E}\left[\left.Q_{j}^{i}(\eta,T;0)\varPsi^{j}(\eta,0)\right|X_{T}^{\eta}=\xi\right]p_{b}(0,\eta,T,\xi)\rmd \eta\nonumber \\
 & +\int_{0}^{T}\int_{\mathbb{R}^{d}}\mathbb{E}\left[\left.Q_{j}^{i}(\eta,T;t)F^{j}(X_{t}^{\eta},t)\right|X_{T}^{\eta}=\xi\right]p_{b}(0,\eta,T,\xi)\rmd \eta\rmd t
 \end{align*}
for every $\xi\in\mathbb{R}^{d}$, $T>0$ and $i=1,\cdots,n$. 
\end{thm}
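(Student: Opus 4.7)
The plan is to follow the proof of Theorem \ref{thm7.2-1} essentially verbatim, with the simplification that the absence of a boundary eliminates all first/last exit time indicators. Concretely, I fix $T>0$ and $\xi \in \mathbb{R}^d$ and introduce the time-reversed diffusion
\[
\rmd\tilde{X}_t^\xi = -b(\tilde{X}_t^\xi, T-t)\rmd t + \sqrt{2\lambda}\,\rmd B_t, \qquad \tilde{X}_0^\xi = \xi,
\]
together with the matrix flow $\tilde{Q}^i_j(t)$ defined by $\rmd \tilde{Q}^i_j = \tilde{Q}^i_k\, q_j^k(\tilde{X}_t^\xi, T-t)\rmd t$ with $\tilde{Q}^i_j(0) = \delta_{ij}$. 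Since the domain is all of $\mathbb{R}^d$, there are no killing or boundary-condition issues, so there is no need for the stopping times $T_\xi$, $\zeta$ or $\gamma_T$: the process $Y_t := \varPsi(\tilde{X}_t^\xi, T-t)$ is well-defined for all $t\in[0,T]$ without truncation.

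Applying It\^o's formula to $M^i_t := \tilde{Q}^i_j(t) Y^j_t$ and using Eq. \eqref{Phi-01} (without boundary term), the drift in $\rmd M^i_t$ reduces to $-\tilde{Q}^i_j(t) F^j(\tilde{X}_t^\xi, T-t)\rmd t$, so that taking expectations yields
\[
\varPsi^i(\xi, T) = \mathbb{E}\left[\tilde{Q}^i_j(T)\varPsi^j(\tilde{X}^\xi_T, 0)\right] + \int_0^T \mathbb{E}\left[\tilde{Q}^i_j(t) F^j(\tilde{X}_t^\xi, T-t)\right]\rmd t.
\]
I would then condition on $\tilde{X}^\xi_T = \eta$ in both terms exactly as in the proof of Theorem \ref{thm7.2-1}, which rewrites the right-hand side as an integral against the transition density $p_{-b_T}(0,\xi,T,\eta)$ of $\tilde{X}^\xi$. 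The divergence-free condition $\nabla\cdot b = 0$ then lets me replace $p_{-b_T}(0,\xi,T,\eta)$ by $p_b(0,\eta,T,\xi)$.

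The crucial step, as before, is to invoke the duality of conditional laws from \citep[Theorem 3.1]{QSZ3D}, namely $\tilde{\mathbb{P}}^{\xi \to \eta} = \mathbb{P}^{\eta \to \xi} \circ \tau_T$, and then to identify the pathwise flow under time reversal: setting $Q^i_j(\psi, T; s) := \tilde{Q}^i_j(\psi \circ \tau_T; T-s)$, one checks directly that $Q$ satisfies the backward ODE \eqref{Q-s01}. Under this substitution the first expectation becomes $\mathbb{E}[Q^i_j(\eta, T; 0)\varPsi^j(\eta,0) \mid X_T^\eta = \xi]$, and in the second, the time shift converts $F^j(\psi(T-t), T-t)$ into $F^j(\psi(t), t)$ after the change of variables $t \mapsto T-t$, producing $\mathbb{E}[Q^i_j(\eta, T; t) F^j(X_t^\eta, t) \mid X_T^\eta = \xi]$.

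I expect no substantial obstacle: the only thing to verify carefully is the pathwise identity $\tilde{Q}(\psi\circ\tau_T; T-s) = Q(\psi, T; s)$ at the level of ODEs (a straightforward check by differentiation, recalling that $q_j^i$ is evaluated at $(\psi(s), s)$ for $Q$ versus $(\psi(T-r), T-r)$ with $r = T-s$ for $\tilde{Q}$), and the change-of-variables step in the second term. Everything else that made the proof of Theorem \ref{thm7.2-1} delicate, namely tracking the indicators $1_{\{T<\zeta(\psi\circ\tau_T)\}}$ and $1_{\{T-t<\zeta(\psi\circ\tau_T)\}}$ and recasting them via $\gamma_T$, is vacuous here since $D = \mathbb{R}^d$ has empty boundary, so the indicators are identically $1$ and drop out. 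This yields precisely the stated representation.
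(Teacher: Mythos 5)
Your proposal is correct and matches the paper's intent exactly: the paper states Theorem \ref{thm7.2-1-1} without a separate proof, presenting it as the specialisation of the argument for Theorem \ref{thm7.2-1} to $D=\mathbb{R}^{d}$, which is precisely what you carry out (time-reversed diffusion, It\^o's formula for $\tilde{Q}^{i}_{j}Y^{j}$, conditioning on $\tilde{X}^{\xi}_{T}$, the density swap via $\nabla\cdot b=0$, the conditional-law duality from \citep{QSZ3D}, and the identification $Q(\psi,T;s)=\tilde{Q}(\psi\circ\tau_{T};T-s)$). Your observation that the exit-time indicators are vacuous on the whole space is exactly the simplification the paper relies on.
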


\section{The Biot-Savart laws}

In this section we formulate the Biot-Savart laws we need in this
work for velocity, vorticity and temperature and its gradient.

Recall that the Green function in $\mathbb{R}^{d}$ is given by the following
formula
\[
\Gamma_{d}(y,x)=
\begin{cases}
-\frac{1}{(d-2)s_{d-1}}\frac{1}{|y-x|^{d-2}}, & \textrm{ if }d>2,\\
\frac{1}{2\pi}\ln|y-x|,\quad & \textrm{ if }d=2,
\end{cases}
\]
where $s_{d-1}$ is the surface area of a unit sphere in $\mathbb{R}^{d}$,
so $s_{2}=4\pi$. We are only interested in the cases where $d=2$
and $d=3$. The Biot-Savart singular integral kernels $K_{d}(y,x)=\nabla_{y}\Gamma_{d}(y,x)$,
so that
\[
K_{2}(y,x)=\frac{1}{2\pi}\frac{y-x}{|y-x|^{2}}\quad\textrm{ for }y\neq x
\]
and
\[
K_{3}(y,x)=\frac{1}{4\pi}\frac{y-x}{|y-x|^{3}}\quad\textrm{ for }y\neq x.
\]
 
Recall that under our convention for two-dimensional vectors, $\omega=\nabla\wedge u$ is identified
with the scalar function $\frac{\partial}{\partial x_{1}}u^{2}-\frac{\partial}{\partial x_{2}}u^{1}$,
and for $a=(a_1,a_2)$, $a\wedge\omega$ is identified with $(a_{2}\omega,-a_{1}\omega)$. 

\begin{lem}[The Biot-Savart law]
\label{lem4.9}
Let $d=2$ or $d=3$. 
\begin{enumerate}[{(1)}]
    \item If $u$ is a vector field on $\mathbb{R}^{d}$ such that $\nabla\cdot u=0$ and $\omega=\nabla\wedge u$, then 
    \begin{equation}
    u(x)=\int_{\mathbb{R}^{d}}K_{d}(y,x)\wedge\omega(y)\rmd y,\quad\forall x\in\mathbb{R}^{d}.\label{t0t-BS01}
    \end{equation}
    \item If $\theta$ is a function on $\mathbb{R}^{d}$ and $\varTheta=\nabla\cdot\theta$,
    then 
    \begin{equation}
    \theta(x)=-\int_{\mathbb{R}^{d}}K_{d}(y,x)\cdot\varTheta(y)\rmd y,\quad\forall x\in\mathbb{R}^{d}.\label{t0t-BS01-1}
\end{equation}
\end{enumerate}

\end{lem}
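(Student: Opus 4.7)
The plan is to reduce both statements to the fact that $\Gamma_d$ is a fundamental solution of the Laplacian on $\mathbb{R}^d$, namely $\Delta \Gamma_d(\cdot, x) = \delta_x$, combined with the elementary identity $\nabla_x \Gamma_d(y,x) = -\nabla_y \Gamma_d(y,x) = -K_d(y,x)$. For Part (1), since $\nabla \cdot u = 0$ on $\mathbb{R}^d$, a Hodge-type decomposition (valid for fields with sufficient decay at infinity) supplies a vector potential $A$ with $u = \nabla \wedge A$ and $\nabla \cdot A = 0$; in $d=2$ this reduces to a stream function $\psi$ with $u = \nabla \wedge \psi = (\partial_2\psi, -\partial_1\psi)$, consistent with the convention $a \wedge c = (a_2 c, -a_1 c)$ for a 2-vector $a$ and a scalar $c$. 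Taking the curl of $u$ and applying the identity $\nabla \wedge (\nabla \wedge A) = \nabla(\nabla \cdot A) - \Delta A = -\Delta A$ yields $\omega = -\Delta A$ (and analogously $\omega = -\Delta\psi$ when $d=2$). Inverting the Laplacian by the Newtonian potential gives
\begin{equation*}
A(x) = -\int_{\mathbb{R}^d} \Gamma_d(y,x) \omega(y) \rmd y,
\end{equation*}
and differentiating under the integral sign using $\nabla_x \Gamma_d = -K_d$ then produces
\begin{equation*}
u(x) = (\nabla \wedge A)(x) = \int_{\mathbb{R}^d} K_d(y,x) \wedge \omega(y) \rmd y,
\end{equation*}
which is \eqref{t0t-BS01}.

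For Part (2), interpreting $\Theta$ as the gradient $\nabla\theta$ (as is forced by the right-hand side of the stated formula) and taking its divergence yields $\Delta\theta = \nabla\cdot\Theta$. Representing $\theta$ by the Newtonian potential and integrating by parts (again justified by decay at infinity to discard the boundary term at $|y|=\infty$) gives
\begin{equation*}
\theta(x) = \int_{\mathbb{R}^d} \Gamma_d(y,x) (\nabla\cdot\Theta)(y) \rmd y = -\int_{\mathbb{R}^d} \nabla_y\Gamma_d(y,x) \cdot \Theta(y) \rmd y = -\int_{\mathbb{R}^d} K_d(y,x) \cdot \Theta(y) \rmd y,
\end{equation*}
which is \eqref{t0t-BS01-1}.

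The main technical point is to justify the decay-at-infinity hypotheses used implicitly throughout: they are needed to guarantee uniqueness of the Poisson inversion of $-\Delta$ (equivalently, to rule out nontrivial harmonic corrections) and to discard boundary contributions in the integration by parts used in Part (2). In concrete applications one therefore restricts to smooth fields in a Schwartz or suitable weighted Sobolev class. A secondary subtlety is that $K_d$ is singular of order $-(d-1)$, so the differentiation under the integral sign in the passage from $A$ to $u = \nabla \wedge A$ in Part (1) must be carried out carefully, for instance by excising a small ball around $y=x$ and taking the limit, or by mollifying $\omega$ first and then passing to the limit.
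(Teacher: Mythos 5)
Your proof is correct, and for Part (2) it is exactly the paper's (one-line) argument: represent $\theta$ as the Newtonian potential of $\Delta\theta=\nabla\cdot\varTheta$ and integrate by parts once to move the derivative onto $\Gamma_{d}$. For Part (1) you take a slightly different route from the one the paper intends. The paper's pattern (spelled out in its proof of the half-space analogue, Lemma \ref{lem211}) is to use $\nabla\cdot u=0$ to write $\Delta u=-\nabla\wedge\nabla\wedge u=-\nabla\wedge\omega$, apply the Green representation directly to $u$, and integrate by parts once; this keeps the kernel at the integrable singularity $\Gamma_{d}$ until the final step and never requires a potential. You instead introduce a vector potential $A$ (stream function in $d=2$) with $\omega=-\Delta A$, invert the Laplacian on $A$, and then differentiate the potential under the integral sign. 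Both are standard and correct; your route buys nothing extra here and costs you two additional justifications — the existence of the divergence-free potential with adequate decay, and the differentiation through the order-$(d-1)$ singularity of $K_{d}$ — which you rightly flag but which the direct route avoids entirely. Your remarks on decay hypotheses are apt: the paper states the lemma without any such hypotheses, and some decay (enough to kill boundary terms at infinity and rule out harmonic additions) is indeed implicitly assumed.
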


These formulae follow immediately from the Green function and integration
by parts.

The Green function for the upper half space $\mathbb{R}_{+}^{d} = \{x=(x_1,x_2,\cdots, x_d)\in \mathbb{R}^d: x_d>0\}$
is given by
\[
G_{d}(y,x)=\Gamma_{d}(y,x)-\Gamma_{d}(y,\overline{x})
\]
for every $x,y\in\mathbb{R}_{+}^{d}$, and $\overline{x}=(x_{1},\cdots,x_{d-1},-x_{d})$
is the reflection of $x$ about the hyperplane  $\{x_{d}=0\}$. Then the
Green formula for $\mathbb{R}_{+}^{d}$ implies that
\begin{equation}
\varphi(x)=\int_{\mathbb{R}_{+}^{d}}G_{d}(y,x)\Delta\varphi(y)\rmd y-\int_{\{y_{d}=0\}}\varphi(y)\frac{\partial}{\partial y_{d}}G_{d}(y,x)\rmd y_{1}\cdots\rmd y_{d-1}\label{Gr-03}
\end{equation}
for $x\in\mathbb{R}_{+}^{d}$ , where $\varphi$ is $C^{2}$, continuous
up to the boundary where $x_{d}=0$, and vanishes at the infinity.

Similarly, we define $\varLambda_{d}(y,x)=\nabla_{y}G_{d}(y,x)$, which
may be called the Biot-Savart singular kernel on $\mathbb{R}_{+}^{d}$.
Then 
\begin{equation}
\varLambda_{2}(y,x)=\frac{1}{2\pi}\left(\frac{y-x}{|y-x|^{2}}-\frac{y-\overline{x}}{|y-\overline{x}|^{2}}\right)\quad\textrm{ for }y\neq x\textrm{ or }\overline{x}\label{2D-H-01}
\end{equation}
and 
\[
\varLambda_{3}(y,x)=\frac{1}{4\pi}\left(\frac{y-x}{|y-x|^{3}}-\frac{y-\overline{x}}{|y-\overline{x}|^{3}}\right)\quad\textrm{ for }y\neq x\textrm{ or }\overline{x}.
\]
The Biot-Savart laws we need in this paper follow from the Green formula
(\ref{Gr-03}), and are stated as two lemmas below. 
\begin{lem}
\label{lem211} Let $d=2$ or $d=3$. If $u$ is a vector field on
$\mathbb{R}_{+}^{d}$ (continuous up to the boundary, decays to zero
at infinity) such that $\nabla\cdot u=0$ in $\mathbb{R}_{+}^{d}$
and $u(x)=0$ when $x_{d}=0$. Let $\omega=\nabla\wedge u$. Then
\begin{equation}
u(x)=\int_{\mathbb{R}_{+}^{d}}\varLambda_{d}(y,x)\wedge\omega(y)\rmd y,\quad\textrm{ for }x\in\mathbb{R}_{+}^{d}.\label{3D-BS-01}
\end{equation}
\end{lem}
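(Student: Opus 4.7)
The plan is to apply the Green's formula (\ref{Gr-03}) componentwise to the vector field $u$ on the upper half-space, then use a standard vector identity to convert the Laplacian of $u$ into a curl of $\omega$, and finally integrate by parts, exploiting the fact that the Green function $G_d$ vanishes on the hyperplane $\{y_d=0\}$.

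First I would fix an index $i\in\{1,\dots,d\}$ and apply (\ref{Gr-03}) to $\varphi=u^i$. Since $u$ vanishes on the boundary $\{y_d=0\}$ by assumption, the boundary integral in (\ref{Gr-03}) is identically zero, leaving
\[
u^i(x)=\int_{\mathbb{R}_{+}^{d}}G_{d}(y,x)\,\Delta u^{i}(y)\rmd y.
\]
Next I would use the vector identity $\Delta u=\nabla(\nabla\cdot u)-\nabla\wedge(\nabla\wedge u)$; since $\nabla\cdot u=0$, this reduces to $\Delta u=-\nabla\wedge\omega$. In the two-dimensional case, one checks directly using the convention $a\wedge c=(a_{2}c,-a_{1}c)$ that the same identity $\Delta u^i=-(\nabla\wedge\omega)^i$ holds, so the argument runs uniformly in $d=2$ and $d=3$.

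Substituting, $u^i(x)=-\int_{\mathbb{R}_{+}^{d}}G_{d}(y,x)(\nabla\wedge\omega)^{i}(y)\rmd y$. Now I would integrate by parts to move the differentiation from $\omega$ onto $G_d$. Writing $(\nabla\wedge\omega)^i=\epsilon_{ijk}\partial_j\omega^k$ in three dimensions (and the analogous scalar expression in two dimensions), integration by parts produces a volume term $\int\epsilon_{ijk}(\partial_j G_d)\omega^k\rmd y=\int(\nabla_y G_d\wedge\omega)^i\rmd y=\int(\Lambda_d(y,x)\wedge\omega(y))^i\rmd y$, together with a surface integral on $\{y_d=0\}$. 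The key observation that makes the whole argument work is that $G_d(y,x)=\Gamma_d(y,x)-\Gamma_d(y,\overline{x})$ vanishes whenever $y_d=0$, because then $|y-x|=|y-\overline{x}|$. Thus the surface term is zero, and collecting the contributions gives (\ref{3D-BS-01}).

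The only subtle points are the legitimacy of the integrations by parts (decay of $u$ and $\omega$ at infinity, justifying the vanishing of contributions from a large hemisphere), and the bookkeeping of the two-dimensional wedge convention, which must be checked component by component; I would address these by invoking the standing decay assumption on $u$ stated in the lemma's hypotheses and by verifying the 2D case explicitly for $i=1,2$ using the identifications introduced in Section~2. The main obstacle, if any, is not conceptual but notational: making sure the signs and index placements in the wedge/cross product are consistent with the convention so that the final formula matches (\ref{3D-BS-01}) exactly.
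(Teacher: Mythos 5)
Your proposal is correct and follows essentially the same route as the paper's proof: apply the Green representation \eqref{Gr-03} with the boundary term killed by the no-slip condition, use $\Delta u=-\nabla\wedge\nabla\wedge u=-\nabla\wedge\omega$ from $\nabla\cdot u=0$, and integrate by parts onto $G_d$, the surface contribution vanishing since $G_d(y,x)=0$ on $\{y_d=0\}$. The extra care you take with the two-dimensional wedge convention and the vanishing of the boundary term in the final integration by parts simply makes explicit what the paper leaves implicit.
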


\begin{proof}
Since $\nabla\cdot u=0$ and $\nabla\wedge u=\omega$, 
\[
\Delta u=-\nabla\wedge\nabla\wedge u=-\nabla\wedge\omega.
\]
Now as $u(x)=0$ when $x_{d}=0$, by Green's formula we obtain that
\[
u(x)=\int_{\mathbb{R}_{+}^{d}}G_{d}(y,x)\Delta u(y)\rmd y=-\int_{\mathbb{R}_{+}^{d}}G_{d}(y,x)\nabla\wedge\omega(y)\rmd y
\]
and the claim follows immediately after applying integration by parts.
\end{proof}
Next, we formulate a similar law of Biot-Savart's for the temperature. 
\begin{lem}
\label{lem212}Let $\theta$ be a scalar function on $\mathbb{R}_{+}^{d}$
and $\varTheta=\nabla\theta$ be its gradient with its components
$\varTheta_{i}=\frac{\partial}{\partial x_{i}}\theta$ for $i=1,\cdots,d$.
Suppose 
\[\theta(x)=\theta_{0}(x_{1},\cdots,x_{d-1}), \quad\forall x=(x_{1},\cdots,x_{d-1},0)
\]
is the trace of $\theta$ along the boundary $\{x_{d}=0\}$.
\begin{enumerate}[{(1)}]
    \item If $d=3$, then 
    \[
    \theta(x)=-\int_{\mathbb{R}_{+}^{3}}\varLambda_{3}(y,x)\cdot\varTheta(y)\rmd y+\frac{x_{3}}{2\pi}\int_{\mathbb{R}^{2}}\frac{\theta_{0}(y_{1},y_{2})}{|y_{1}-x_{2}|^{2}+|y_{2}-x_{2}|^{2}+|x_{3}|^{2}}\rmd y_{1}\rmd y_{2}\]
    for $x=(x_{1},x_{2},x_{3})$ with $x_{3}>0$.
    \item If $d=2$, then
    \[
    \theta(x)=-\int_{\mathbb{R}_{+}^{2}}\varLambda_{2}(y,x)\cdot\varTheta(y)\rmd y+\frac{x_{2}}{\pi}\int_{-\infty}^{\infty}\frac{\theta_{0}(y_{1})}{|y_{1}-x_{1}|^{2}+x_{2}^{2}}\rmd y_{1}
    \]
    for $x=(x_{1},x_{2})$ with $x_{2}>0$.
\end{enumerate}
\end{lem}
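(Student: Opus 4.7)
The plan is to apply Green's formula (\ref{Gr-03}) for the upper half space directly to $\theta$ and then exploit the fact that $G_{d}(y,x)$ vanishes on the boundary $\{y_{d}=0\}$. Since $\varTheta=\nabla\theta$ gives $\Delta\theta=\nabla\cdot\varTheta$, equation (\ref{Gr-03}) becomes
\[
\theta(x)=\int_{\mathbb{R}_{+}^{d}}G_{d}(y,x)\,\nabla\cdot\varTheta(y)\,\rmd y-\int_{\{y_{d}=0\}}\theta_{0}(y')\left.\frac{\partial G_{d}}{\partial y_{d}}(y,x)\right|_{y_{d}=0}\rmd y',
\]
where $y'=(y_{1},\dots,y_{d-1})$. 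The two terms will correspond, respectively, to the volume integral and the boundary integral in the statement.

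The first step is to integrate by parts in the volume integral. Because $G_{d}(y,x)$ vanishes identically on $\{y_{d}=0\}$ by construction, the surface contribution produced by the divergence theorem on that face disappears; assuming the decay of $\varTheta$ at spatial infinity implicit in the use of (\ref{Gr-03}), the remaining surface term at infinity also vanishes, leaving
\[
\int_{\mathbb{R}_{+}^{d}}G_{d}(y,x)\,\nabla\cdot\varTheta(y)\,\rmd y=-\int_{\mathbb{R}_{+}^{d}}\nabla_{y}G_{d}(y,x)\cdot\varTheta(y)\,\rmd y=-\int_{\mathbb{R}_{+}^{d}}\varLambda_{d}(y,x)\cdot\varTheta(y)\,\rmd y,
\]
recalling the definition $\varLambda_{d}(y,x)=\nabla_{y}G_{d}(y,x)$. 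This reproduces the first term in both stated identities.

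The second step is to identify $\left.\partial G_{d}/\partial y_{d}\right|_{y_{d}=0}$, which, up to a sign, is the classical Poisson kernel for $\mathbb{R}_{+}^{d}$. For $d=2$, differentiating $G_{2}(y,x)=\frac{1}{2\pi}\bigl(\ln|y-x|-\ln|y-\bar{x}|\bigr)$ in $y_{2}$ and then using that $|y-x|$ and $|y-\bar{x}|$ agree when $y_{2}=0$ gives $-\frac{x_{2}}{\pi(|y_{1}-x_{1}|^{2}+x_{2}^{2})}$; substituting into the boundary term from Step~1 yields the 2D expression. For $d=3$, the analogous calculation on $G_{3}(y,x)=-\frac{1}{4\pi|y-x|}+\frac{1}{4\pi|y-\bar{x}|}$ produces the standard Poisson kernel with prefactor $\frac{x_{3}}{2\pi}$, recovering the stated 3D formula.

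No serious obstacle is expected: the argument is a direct application of Green's identity followed by a pointwise evaluation of a known derivative, and the crucial structural fact that $G_{d}$ vanishes on the boundary is already built into its construction via the method of images. The only care needed is consistent sign tracking in the integration by parts and verification that the surface terms at infinity drop out, both of which are standard under the decay assumptions implicit in (\ref{Gr-03}).
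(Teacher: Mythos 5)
Your proposal is correct and follows essentially the same route as the paper: apply the Green formula (\ref{Gr-03}) with $\Delta\theta=\nabla\cdot\varTheta$, integrate by parts using that $G_{d}$ vanishes on $\{y_{d}=0\}$ to produce $-\int_{\mathbb{R}_{+}^{d}}\varLambda_{d}(y,x)\cdot\varTheta(y)\,\rmd y$, and identify the boundary term with the Poisson kernel. The paper's proof is just a terser version of the same two steps.
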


\begin{proof}
By definition, $\Delta\theta=\nabla\cdot\varTheta$, so according
to the Green formula,
\[
\theta(x)=-\int_{\mathbb{R}_{+}^{d}}\nabla_{y}G_{d}(y,x)\cdot\varTheta(y)\rmd y-\int_{\{y_{d}=0\}}\theta_{0}(y)\frac{\partial}{\partial y_{d}}G_{d}(y,x)\rmd y_{1}\cdots\rmd y_{d-1}.
\]
The claims follow immediately.
\end{proof}

\section{Twin Brownian particles}

In the remainder of the paper, $u(x,t)$ is a time-dependent vector
field on $\mathbb{R}^{d}$ (where $d=2$ or $d=3$). $u(x,t)$ may
be the velocity of an incompressible fluid flow in $\mathbb{R}^{d}$
or the appropriate (divergence-free) extension of the velocity of
an incompressible fluid flow in $D\subset\mathbb{R}^{d}$. Let $\nu>0$
be the kinematic viscosity constant, and $\kappa>0$ the thermal diffusivity constant.
We then introduce two families of random particles $X$ and $Y$,
defined by the following stochastic differential equations:
\begin{equation}
\rmd X_{t}^{\xi}=u(X_{t}^{\xi},t)\rmd t+\sqrt{2\nu}\rmd B_{t}^{\nu},\quad X_{0}^{\xi}=\xi\label{X-P-01}
\end{equation}
and
\begin{equation}
\rmd Y_{t}^{\xi}=u(Y_{t}^{\xi},t)\rmd t+\sqrt{2\kappa}\rmd B_{t}^{\kappa},\quad Y_{0}^{\xi}=\xi\label{Y-P-01}
\end{equation}
for every $\xi\in\mathbb{R}^{d}$. Here $B^{\nu}$ and $B^{\kappa}$
are two independent standard $d$-dimensional Brownian motions on
some probability space. 

The transition probability density functions for $X$ (resp. for $Y$)
are denoted by $p_{\nu}(s,x,t,y)$ and $p_{\kappa}(s,x,t,y)$ respectively. 

Let $A_{j}^{i}=\frac{\partial}{\partial x_{j}}u^{i}$ be the entries of Jacobian matrix of $u(x,t)$. Given a domain $D\subset\mathbb{R}^{d}$,
we introduce two gauge functionals $Q=\left(Q_{j}^{i}(\eta,t;s)\right)$
and $R=\left(R_{j}^{i}(\eta,t;s)\right)$ (for $0\leq s\leq t$ and
$\eta\in\mathbb{R}^{d}$), where $i,j=1,\cdots,d$, defined by the
following ordinary differential equations:
\begin{equation}
\frac{\rmd}{\rmd s}Q_{j}^{i}(\eta,t;s)=-Q_{k}^{i}(\eta,t;s)1_{D}(X_{s}^{\eta})A_{j}^{k}(X_{s}^{\eta},s),\quad Q_{j}^{i}(\eta,t;t)=\delta_{ij}\label{Ga-01}
\end{equation}
and
\begin{equation}
\frac{\rmd}{\rmd s}R_{i}^{j}(\eta,t;s)=R_{i}^{l}(\eta,t;s)1_{D}(Y_{s}^{\eta})A_{l}^{j}(Y_{s}^{\eta},s),\quad R_{i}^{j}(\eta,t;t)=\delta_{ij}\label{Ga-02}
\end{equation}
respectively. 

The diffusions $X^{\xi,\tau}$ and $Y^{\eta,\tau}$ (where $\xi,\eta\in\mathbb{R}^{d}$
and $\tau\geq0$) are defined as the (weak) solutions of the stochastic
differential equations: 
\begin{equation}
\rmd X^{\xi,\tau}_t=u(X^{\xi,\tau}_t,t)\rmd t+\sqrt{2\nu}\rmd B^{\nu}_t,\quad X_{s}^{\xi,\tau}=\xi \ \ \text{ for } s\leq \tau\label{eq:T-01-1}
\end{equation}
and
\begin{equation}
\rmd Y_{t}^{\eta,\tau}=u(Y_{t}^{\eta,\tau},t)\rmd t+\sqrt{2\kappa}\rmd B^{\kappa}_t,\quad Y_{s}^{\eta,\tau}=\eta \ \ \text{ for } s\leq\tau\label{eq:T-01-2}
\end{equation}
for $t\geq 0$.

\section{Unbounded fluid flows}

In this section, we consider an ideal fluid flow in $\R^d$, $d=2$ or $d=3$, with an external source that supplies heat to the fluid. 
\subsection{The Oberbeck-Boussinesq equations in \texorpdfstring{$\R^d$}{Rd}}
In the Oberbeck-Boussinesq model, the fluid flow has no space constraint,
so such a model serves as a model of homogeneous turbulent flows
with heat transfer. Suppose the temperature gradient is small so
that the density of the fluid is nearly constant. Therefore the model
is described by the Boussinesq equations on the whole space $\mathbb{R}^{d}$.
Suppose the heat source is represented by temperature $\theta(x)$,
so that the flow is described by (\ref{NS-g1}, \ref{NS-g2}, \ref{NS-g3})
in $\mathbb{R}^{d}$. The Navier-Stokes equations may be written as
\begin{equation}
\left(\nu\Delta-u\cdot\nabla-\frac{\partial}{\partial t}\right)u-\nabla P+f(\theta)=0\quad\textrm{ in }\mathbb{R}^{d}\label{M-NS01}
\end{equation}
where the dimension $d=2$ or $3$, and the interaction force $f=(f_{1},\cdots,f_{d})$.
$u$ is divergence-free, i.e. $\nabla\cdot u=0$, and the temperature transport
equation may be written as
\begin{equation}
\left(\kappa\Delta-u\cdot\nabla-\frac{\partial}{\partial t}\right)\theta=0\quad\textrm{ in }\mathbb{R}^{d}.\label{M-TNS-01}
\end{equation}

Let $\omega=\nabla\wedge u$ and $\varTheta=\nabla\theta$. If $d=3$,
the vorticity transport equation for $\omega$ is the following
PDE
\[
\left(\nu\Delta-u\cdot\nabla-\frac{\partial}{\partial t}\right)\omega+A\omega+F=0\quad\textrm{ in }\mathbb{R}^{3},
\]
where $(A\omega)^{i}=A_{l}^{i}\omega^{l}$ and
\[
F(x,t)=\varTheta(x,t)\wedge f'(\theta(x,t)),\quad\textrm{ with }f'=(f_{1}',f_{2}',f_{3}').
\]
For example, we may take $f_{i}(\theta)=(\theta-\theta_{0})\delta_{i3}$, $i=1,2,3$, then 
\[
F^{i}=\delta_{i1}\varTheta_{2}-\delta_{i2}\varTheta_{1}-\delta_{i1}\frac{\partial\theta_{0}}{\partial x_{2}}+\delta_{i2}\frac{\partial\theta_{0}}{\partial x_{1}}.
\]

For two-dimensional flows,
\[
\left(\nu\Delta-u\cdot\nabla-\frac{\partial}{\partial t}\right)\omega+F=0\quad\textrm{ in }\mathbb{R}^{2}
\]
where $F=\varTheta_{2}f_{1}'(\theta)-\varTheta_{1}f_{2}'(\theta)$.

The temperature gradient $\varTheta$ satisfies the following transport
equations
\begin{equation}
\left(\kappa\Delta-u\cdot\nabla-\frac{\partial}{\partial t}\right)\varTheta-A\varTheta=0\quad\textrm{ in }\mathbb{R}^{d}\label{t-32}
\end{equation}
where $(A\varTheta)_{j}=A_{j}^{l}\varTheta_{l}$. 

\subsection{Representations of the vorticity and the temperature gradient}

In this part, we shall work out the functional integral representations for
the vorticity and the temperature gradient. We simplify our notation and omit $u$ in the notation of transition probability densities, using $p_\nu(s,\eta, t,\xi)$ and $p_\kappa(s,\eta, t,\xi)$ instead of $p_{\nu,u}(s,\eta, t,\xi)$ and $p_{\kappa,u}(s,\eta, t,\xi)$.
\begin{lem}
\label{lem6.2}The temperature gradient $\varTheta=\nabla\theta$
has the following integral representation:
\[
\varTheta(\xi,t)=\int_{\mathbb{R}^{d}}\mathbb{E}\left[\left.R(\eta,t;0)\varTheta(\eta,0)\right|Y_{t}^{\eta}=\xi\right]p_{\kappa}(0,\eta,t,\xi)\rmd \eta
\]
for $\xi\in\mathbb{R}^{d}$ and $t>0$, where $R$ are defined by
(\ref{Ga-02}) with $D=\mathbb{R}^{d}$ ($d=2$
or $d=3$). Here we have used the convention that $(R\varTheta)_{j}=R_{j}^{l}\varTheta_{l}$. 
\end{lem}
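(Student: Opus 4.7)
The plan is to obtain this as a direct specialisation of Theorem \ref{thm7.2-1-1} (the whole-space version of the master representation) applied to the parabolic system (\ref{t-32}) satisfied by $\varTheta$. First I would rewrite (\ref{t-32}) componentwise as $(\kappa\Delta - u\cdot\nabla - \frac{\partial}{\partial t})\varTheta_{i} - A_{i}^{l}\varTheta_{l} = 0$ for $i=1,\ldots,d$, and match it with the generic form (\ref{Phi-01}) of Theorem \ref{thm7.2-1-1} via the identifications $\lambda = \kappa$, $b = u$, $\varPsi^{i} = \varTheta_{i}$, $F^{i} \equiv 0$, and $q_{l}^{i} = -A_{i}^{l}$. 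The divergence-free condition $\nabla\cdot u = 0$ is the hypothesis required to invoke the theorem, and the abstract diffusion $X^{\xi,\tau}$ there becomes exactly the Brownian particle $Y^{\xi}$ from (\ref{Y-P-01}), with transition density $p_{\kappa}$.

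Next I would verify that the abstract multiplicative functional $Q$ from Theorem \ref{thm7.2-1-1} coincides with the concrete gauge $R$ of (\ref{Ga-02}). Substituting $q_{j}^{k} = -A_{k}^{j}$ into (\ref{Q-s01}) turns that ODE into $\tfrac{\rmd}{\rmd t}Q_{j}^{i} = Q_{k}^{i}A_{k}^{j}(Y_{t}^{\eta},t)$ with $Q_{j}^{i}(\eta,T;T) = \delta_{ij}$; as a linear matrix ODE on $[0,T]$ this is the same system, with the same identity terminal data, as (\ref{Ga-02}) in the case $D = \mathbb{R}^{d}$ (where the indicator $1_{D}$ is trivial). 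Uniqueness then gives $Q_{j}^{i}(\eta,T;t) = R_{i}^{j}(\eta,T;t)$ after the obvious transposition of upper and lower indices.

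With these two ingredients in place, Theorem \ref{thm7.2-1-1} (the source term dropping out since $F \equiv 0$) delivers
\[
\varTheta_{i}(\xi,T) = \int_{\mathbb{R}^{d}}\mathbb{E}\left[\left.R_{i}^{j}(\eta,T;0)\,\varTheta_{j}(\eta,0)\,\right|\,Y_{T}^{\eta} = \xi\right]p_{\kappa}(0,\eta,T,\xi)\,\rmd\eta,
\]
which is exactly the claimed identity once the convention $(R\varTheta)_{i} = R_{i}^{l}\varTheta_{l}$ stated in the lemma is invoked. There is no substantial analytic obstacle, since all the hard work (the duality of conditional laws, time reversal, and It\^o expansion) has been absorbed into Theorem \ref{thm7.2-1-1}; the only point demanding care is the bookkeeping of index positions in translating between $q_{l}^{i}$ and $-A_{i}^{l}$, because a sign or transposition error at that stage would propagate into an incorrect sign in the definition of $R$ and destroy the match with (\ref{Ga-02}).
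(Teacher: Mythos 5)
Your proposal is correct and is exactly the paper's argument: the paper's proof consists of the single sentence ``Apply Theorem \ref{thm7.2-1-1} to the parabolic equations (\ref{t-32}),'' and you have simply filled in the details of that application. Your index bookkeeping checks out — with $q_{l}^{i}=-A_{i}^{l}$ the ODE (\ref{Q-s01}) becomes $\frac{\rmd}{\rmd t}Q_{j}^{i}=Q_{k}^{i}A_{k}^{j}$, which under the identification $Q_{j}^{i}=R_{i}^{j}$ is precisely (\ref{Ga-02}) with $1_{D}\equiv 1$, so the conclusion follows as you state.
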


\begin{proof}
Apply Theorem \ref{thm7.2-1-1} to the parabolic equations (\ref{t-32}). 
\end{proof}
%
\begin{lem}
\begin{enumerate}[{(1)}]
    \item Suppose $d=3$. The vorticity $\omega=\nabla\wedge u$
    possesses the following integral representation:
    \begin{align}
    \omega(\xi,t)  =&\int_{\mathbb{R}^{3}}\mathbb{E}\left[\left.Q(\eta,t;0)\omega(\eta,0)\right|X_{t}^{\eta}=\xi\right]p_{\nu}(0,\eta,t,\xi)\rmd \eta\nonumber \\
     & +\int_{0}^{t}\int_{\mathbb{R}^{3}}\mathbb{E}\left[\left.Q(\eta,t;s)F(X_{s}^{\eta},s)\right|X_{t}^{\eta}=\xi\right]p_{\nu}(0,\eta,t,\xi)\rmd \eta\rmd s\label{om-re51}
    \end{align}
    for $\xi\in\mathbb{R}^{3}$ and $t>0$, where $Q$ is defined by
    (\ref{Ga-01}) with $D=\mathbb{R}^{3}$. Here $(Q\omega)^{i}=Q_{l}^{i}\omega^{l}$ and $(QF)^{i}=Q_{l}^{i}F^{l}$.
    
    \item Suppose $d=2$, so that $\omega=\frac{\partial}{\partial x_{1}}u^{2}-\frac{\partial}{\partial x_{2}}u^{1}$,
    and $F=\varTheta_{1}f_{2}'(\theta)-\varTheta_{2}f_{1}'(\theta)$.
    $\omega$ has the following representation:
    \begin{align}
    \omega(\xi,t)  =&\int_{\mathbb{R}^{2}}p_{\nu}(0,\eta,t,\xi)\omega(\eta,0)\rmd \eta\nonumber \\
     & +\int_{0}^{t}\int_{\mathbb{R}^{2}}\mathbb{E}\left[\left.F(X_{s}^{\eta},s)\right|X_{t}^{\eta}=\xi\right]p_{\nu}(0,\eta,t,\xi)\rmd\eta\rmd s\label{2D-new-91}
    \end{align}
    for $\xi\in\mathbb{R}^{2}$ and $t>0$.
    \end{enumerate}
\end{lem}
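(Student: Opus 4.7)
The plan is to apply Theorem \ref{thm7.2-1-1} (the whole-space functional integral representation) to the vorticity transport equations derived in Section 5.1, with the diffusion process $X^{\xi}$ from (\ref{X-P-01}) playing the role of the generic $L_{\lambda,b}$-diffusion and the gauge functional $Q$ from (\ref{Ga-01}) playing the role of the matrix flow in (\ref{Q-s01}).

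For part (1), the three-dimensional vorticity $\omega=\nabla\wedge u$ satisfies
\[
\left(\nu\Delta-u\cdot\nabla-\tfrac{\partial}{\partial t}\right)\omega^{i}+A_{l}^{i}\omega^{l}+F^{i}=0\quad\text{in }\mathbb{R}^{3},
\]
which is exactly the form of equation (\ref{Phi-01}) with $n=d=3$, $\lambda=\nu$, $b=u$, $q_{l}^{i}=A_{l}^{i}=\partial_j u^i$, $D=\mathbb{R}^{3}$, and inhomogeneous term $F^{i}=\bigl(\varTheta\wedge f'(\theta)\bigr)^{i}$. Since $\nabla\cdot u=0$ and $u$ is smooth enough for classical solutions, the hypotheses of Theorem \ref{thm7.2-1-1} are met. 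The defining ODE (\ref{Q-s01}) for the matrix flow associated with Theorem \ref{thm7.2-1-1} coincides with (\ref{Ga-01}) (the indicator $1_{D}$ is identically $1$ on $\mathbb{R}^3$), so the $Q(\eta,t;s)$ appearing in (\ref{om-re51}) is precisely the gauge functional of Theorem \ref{thm7.2-1-1}. Substituting into the conclusion of Theorem \ref{thm7.2-1-1} and writing out $(Q\omega)^{i}=Q_{l}^{i}\omega^{l}$, $(QF)^{i}=Q_{l}^{i}F^{l}$ yields the stated formula (\ref{om-re51}).

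For part (2), in two dimensions the vorticity is a scalar and the vortex-stretching term is absent, so the transport equation reduces to
\[
\left(\nu\Delta-u\cdot\nabla-\tfrac{\partial}{\partial t}\right)\omega+F=0\quad\text{in }\mathbb{R}^{2},
\]
which is the $n=1$ case of (\ref{Phi-01}) with $q\equiv 0$. The matrix ODE (\ref{Q-s01}) then forces $Q\equiv 1$, so the conditional expectation of $Q(\eta,t;0)\omega(\eta,0)$ reduces to the deterministic value $\omega(\eta,0)$, and the first integral in Theorem \ref{thm7.2-1-1} collapses to the heat-kernel convolution $\int p_{\nu}(0,\eta,t,\xi)\omega(\eta,0)\,\rmd\eta$. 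The second integral keeps its conditional expectation, giving (\ref{2D-new-91}).

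The only real subtlety is confirming the sign and index matching between the vortex-stretching coefficient $A_{l}^{i}$ in the vorticity equation and the gauge coefficient $q_{l}^{i}$ in (\ref{Phi-01})–(\ref{Q-s01}): indeed the sign in (\ref{Q-s01}) is fixed so that the ODE integrates the velocity gradient along the backward-run trajectory exactly as required by (\ref{Ga-01}). Once this bookkeeping is verified, the lemma is an immediate specialisation of Theorem \ref{thm7.2-1-1}; no further estimates are needed, since the only inputs—smoothness of $u$, divergence-freeness, and existence of a classical solution—are already assumed in this section.
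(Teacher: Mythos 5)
Your proof is correct and matches the paper's: the paper likewise obtains both parts as immediate specialisations of Theorem \ref{thm7.2-1-1} applied to the vorticity transport equations, with $q^i_l=A^i_l$ in dimension three and $q\equiv 0$ (hence $Q\equiv 1$) in dimension two. Your write-up simply makes explicit the parameter identifications that the paper leaves to the reader.
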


The proof also follows from Theorem \ref{thm7.2-1-1} immediately.

\subsection{Random vortex dynamics}
Using the integral representations derived above, we are able to identify the random vortex dynamical systems associated with the heat-conducting incompressible fluid flows in $\R^2$ and $\R^3$.
\subsubsection{\label{subsec52}Random vortex dynamics in two-dimensional case}

Now we are in a position to formulate the vortex dynamics for the
model. Let us start with the two-dimensional model.
As we have pointed out that for two-dimensional flows the vorticity
$\omega$ is a scalar function, and $A\omega$ vanishes, so that one
does not need the gauge functional $Q$. 

Let $u(x,t)$ and $\theta(x,t)$ be the regular solutions to the model
(\ref{M-NS01}, \ref{M-TNS-01}) when $d=2$. 
\begin{thm}
\label{thm6.4} The following representation holds:
\begin{align*}
u(x,t)  =&\int_{\mathbb{R}^{2}}\mathbb{E}\left[K_{2}(X_{t}^{\eta},x)\wedge\omega(\eta,0)\right]\rmd \eta\nonumber \\
 & +\int_{0}^{t}\int_{\mathbb{R}^{2}}\mathbb{E}\left[K_{2}(X_{t}^{\eta},x)\wedge F(X_{s}^{\eta},s)\right]\rmd\eta\rmd s
\end{align*}
for $x\in\mathbb{R}^{2}$ and $t>0$.
\end{thm}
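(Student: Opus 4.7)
The plan is to substitute the functional-integral representation \eqref{2D-new-91} of the vorticity $\omega(y,t)$ into the two-dimensional Biot-Savart formula and then exchange the orders of integration. Both ingredients are already in hand: Lemma \ref{lem4.9}(1) with $d=2$, together with $\nabla\cdot u=0$ and $\omega=\nabla\wedge u$, yields
$$u(x,t)=\int_{\mathbb{R}^{2}}K_{2}(y,x)\wedge\omega(y,t)\,\rmd y,$$
and \eqref{2D-new-91} expresses $\omega(y,t)$ as the sum of an initial-data term and a source term driven by $F$.

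For the initial-data contribution I would swap the $y$- and $\eta$-integrals to obtain
$$\int_{\mathbb{R}^{2}}\left(\int_{\mathbb{R}^{2}}K_{2}(y,x)\,p_{\nu}(0,\eta,t,y)\,\rmd y\right)\wedge\omega(\eta,0)\,\rmd\eta=\int_{\mathbb{R}^{2}}\mathbb{E}[K_{2}(X_{t}^{\eta},x)]\wedge\omega(\eta,0)\,\rmd\eta.$$
Since $\omega(\eta,0)$ is deterministic and $a\wedge c$ is linear in the scalar slot $c$, this coincides with the first integral appearing in the statement. For the source contribution, after moving the $(s,\eta)$-integrals outside, the inner $y$-integral takes the form
$$\int_{\mathbb{R}^{2}}K_{2}(y,x)\wedge\mathbb{E}\bigl[F(X_{s}^{\eta},s)\,\big|\,X_{t}^{\eta}=y\bigr]\,p_{\nu}(0,\eta,t,y)\,\rmd y,$$
which equals $\mathbb{E}[K_{2}(X_{t}^{\eta},x)\wedge F(X_{s}^{\eta},s)]$: taking out the $\sigma(X_{t}^{\eta})$-measurable factor $K_{2}(X_{t}^{\eta},x)$ and using the defining identity of conditional expectation, together with the fact that $p_{\nu}(0,\eta,t,\cdot)$ is the density of $X_{t}^{\eta}$, we are left precisely with the second integral in the claim.

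The main obstacle is analytic rather than structural: the Biot-Savart kernel $K_{2}(y,x)$ has a $|y-x|^{-1}$ singularity, so the Fubini interchanges between the $y$-integration and the $(\eta,s)$-integration, and between the $y$-integration and the expectation over the diffusion paths, need to be justified. Standard Gaussian-type upper bounds on $p_{\nu}(0,\eta,t,y)$ (available from the regularity and boundedness of $u$) together with the boundedness of $F=\varTheta_{2}f_{1}'(\theta)-\varTheta_{1}f_{2}'(\theta)$, which follows from the assumed regularity of $u$, $\theta$ and $f$, make $|K_{2}(y,x)|\,p_{\nu}(0,\eta,t,y)$ locally integrable in $y$ around $y=x$ (a two-dimensional $r^{-1}$ singularity is integrable) and the full integrand absolutely integrable in $(y,\eta,s)$ on $[0,t]\times\mathbb{R}^{2}\times\mathbb{R}^{2}$. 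This legitimises every interchange and completes the derivation.
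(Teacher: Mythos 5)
Your proposal is correct and follows the same route as the paper: apply the two-dimensional Biot--Savart law, substitute the vorticity representation \eqref{2D-new-91}, and conclude by Fubini's theorem, recognising the inner $y$-integrals as expectations via the tower property. Your additional remarks on the local integrability of the $|y-x|^{-1}$ singularity justify the interchange that the paper leaves implicit.
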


\begin{proof}
By using the Biot-Savart law (see Lemma \ref{lem4.9}, Eq. (\ref{2D-new-91})),
we obtain that
\begin{align*}
u(x,t)  =&\int_{\mathbb{R}^{2}}K_{2}(\xi,x)\wedge\omega(\xi,t)\rmd\xi\\
  =&\int_{\mathbb{R}^{2}}K_{2}(\xi,x)\wedge\left( \int_{\mathbb{R}^{2}}p_{\nu}(0,\eta,t,\xi)\omega(\eta,0)\rmd\eta\right) \rmd \xi\\
 & +\int_{\mathbb{R}^{2}}K_{2}(\xi,x)\wedge\left( \int_{0}^{t}\int_{\mathbb{R}^{2}}\mathbb{E}\left[\left.F(X_{s}^{\eta},s)\right|X_{t}^{\eta}=\xi\right]p_{\nu}(0,\eta,t,\xi)\rmd\eta\rmd s\right) \rmd\xi
\end{align*}
and the conclusion then follows from an application of Fubini's theorem.
\end{proof}
Similarly, we have the following theorem.
\begin{thm}
\label{thm6.5} The following representation for the temperature $\theta(x,t)$
holds:
\[
\theta(x,t)=-\int_{\mathbb{R}^{2}}\mathbb{E}\left[K_{2}(Y_{t}^{\eta},x)\cdot R(\eta,t;0)\varTheta(\eta,0)\right]\rmd \eta
\]
for $x\in\mathbb{R}^{2}$ and $t>0$, where $R=(R_{i}^{j})$ is defined
by Eq. (\ref{Ga-02}), and $R\varTheta=\left(R_{i}^{l}\varTheta_{l}\right)$.
\end{thm}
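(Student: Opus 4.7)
The plan is to mirror exactly the proof of Theorem \ref{thm6.4}, but now using the Biot-Savart law for a scalar field from Lemma \ref{lem4.9}(2) together with the integral representation of the temperature gradient $\varTheta$ from Lemma \ref{lem6.2}, instead of the vorticity representation. No new analytic input is needed beyond these two ingredients and Fubini's theorem combined with the tower property of conditional expectation.

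More precisely, the starting point is the identity
\[
\theta(x,t)=-\int_{\mathbb{R}^{2}}K_{2}(\xi,x)\cdot\varTheta(\xi,t)\,\rmd\xi,
\]
which is just Lemma \ref{lem4.9}(2) applied to $\theta(\cdot,t)$ at fixed time $t$. Next I substitute the representation for $\varTheta(\xi,t)$ provided by Lemma \ref{lem6.2} (specialised to $d=2$), which yields
\[
\theta(x,t)=-\int_{\mathbb{R}^{2}}\int_{\mathbb{R}^{2}}K_{2}(\xi,x)\cdot\mathbb{E}\bigl[R(\eta,t;0)\varTheta(\eta,0)\bigm|Y_{t}^{\eta}=\xi\bigr]\,p_{\kappa}(0,\eta,t,\xi)\,\rmd\eta\,\rmd\xi.
\]
After swapping the order of integration by Fubini's theorem, the inner $\xi$-integral is, for each fixed $\eta$, precisely
\[
\int_{\mathbb{R}^{2}}K_{2}(\xi,x)\cdot\mathbb{E}\bigl[R(\eta,t;0)\varTheta(\eta,0)\bigm|Y_{t}^{\eta}=\xi\bigr]\,p_{\kappa}(0,\eta,t,\xi)\,\rmd\xi
=\mathbb{E}\bigl[K_{2}(Y_{t}^{\eta},x)\cdot R(\eta,t;0)\varTheta(\eta,0)\bigr],
\]
by the tower property and the fact that $p_{\kappa}(0,\eta,t,\cdot)$ is the density of $Y_{t}^{\eta}$. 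This gives exactly the stated formula.

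The only step that requires genuine care is the application of Fubini and the undoing of the conditional expectation, since $K_{2}(\xi,x)$ has a local singularity of order $|\xi-x|^{-1}$ in two dimensions and only slow decay at infinity. What one must check is that
\[
\int_{\mathbb{R}^{2}}\mathbb{E}\bigl[|K_{2}(Y_{t}^{\eta},x)|\,\|R(\eta,t;0)\|\,|\varTheta(\eta,0)|\bigr]\,\rmd\eta<\infty
\]
under the standing regularity and decay assumptions on the solution $(u,\theta)$ (which ensure $\varTheta(\cdot,0)$ and $\omega(\cdot,0)$ are smooth and decay sufficiently fast, while $A=\nabla u$ is bounded so that $R(\eta,t;\cdot)$ is uniformly bounded in $\eta$). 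The local singularity of $K_{2}$ is integrable in $\xi$ against the smooth bounded density $p_{\kappa}$, and the decay at infinity of $\varTheta(\eta,0)$ takes care of the $\eta$-integration, so Fubini applies. Note also that $R(\eta,t;0)$ is measurable with respect to the whole path $(Y_{s}^{\eta})_{0\le s\le t}$ rather than just $Y_{t}^{\eta}$, but this causes no issue for the tower property, since we are only conditioning on the terminal random variable $Y_{t}^{\eta}$. Once these points are observed, the proof is a verbatim adaptation of Theorem \ref{thm6.4}.
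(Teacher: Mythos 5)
Your proposal is correct and follows exactly the route the paper intends: the paper's proof of Theorem \ref{thm6.5} simply states that it follows from Lemma \ref{lem4.9} and Lemma \ref{lem6.2}, and your combination of the scalar Biot-Savart law, the representation of $\varTheta$, Fubini's theorem, and the recombination of the conditional expectation against the density $p_{\kappa}(0,\eta,t,\cdot)$ is precisely the argument spelled out for the velocity in Theorem \ref{thm6.4}. Your added remarks on the integrability needed to justify Fubini go slightly beyond what the paper records, but they are consistent with its standing regularity assumptions.
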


\begin{proof}
This follows from (\ref{lem4.9}) and Lemma \ref{lem6.2} immediately.
\end{proof}
Theorem \ref{thm6.4}, Theorem \ref{thm6.5}, the $X$ particle equation
(\ref{X-P-01}), the $Y$ particle equation (\ref{Y-P-01}), the gauge
equation (\ref{Ga-02}) (with $D=\mathbb{R}^{2}$), together with
the equations
\begin{equation}
A_{j}^{i}=\frac{\partial}{\partial x_{j}}u^{i},\quad\varTheta_{i}=\frac{\partial}{\partial x_{i}}\theta\label{S-01}
\end{equation}
and
\[
F(x,t)=\varTheta_{2}(x,t)f_{1}'(\theta(x,t))-\varTheta_{1}(x,t)f_{2}'(\theta(x,t))
\]
give rise to a closed random vortex system for the two-dimensional
model. 

\subsubsection{Random vortex dynamics in three-dimensional case}

In this part, we work out the random vortex system for three-dimensional flows defined by the equations of motion (\ref{M-NS01},
\ref{M-TNS-01}) where $d=3$, and the details of computations will
be omitted.

Let $u(x,t)$ and $\theta(x,t)$ be the regular solutions to (\ref{M-NS01},
\ref{M-TNS-01}) with $d=3$.
\begin{thm}
\label{thm66} The following representation formula for the velocity
$u(x,t)$ holds:
\begin{align*}
u(x,t)  =&\int_{\mathbb{R}^{3}}\mathbb{E}\left[K_{3}(X_{t}^{\eta},x)\wedge Q(\eta,t;0)\omega(\eta,0)\right]\rmd \eta\nonumber \\
 & +\int_{\mathbb{R}^{3}}\int_{0}^{t}\mathbb{E}\left[K_{3}(X_{t}^{\eta},x)\wedge Q(\eta,t;s)F(X_{s}^{\eta},s)\right]\rmd\eta\rmd s
\end{align*}
for $x\in\mathbb{R}^{3}$ and $t>0$, where $Q\omega=(Q_{l}^{i}\omega^{l})$
and $QF=(Q_{l}^{i}F^{l})$, $Q=(Q_{j}^{i})$ is defined by Eq. (\ref{Ga-01})
with $D=\mathbb{R}^{3}$.
\end{thm}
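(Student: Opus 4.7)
The plan is to mimic the argument used for Theorem \ref{thm6.4}, replacing the scalar two-dimensional vorticity representation by its three-dimensional counterpart (\ref{om-re51}) and keeping track of the gauge matrix $Q$. Since $u$ is divergence-free on $\mathbb{R}^{3}$ with $\omega=\nabla\wedge u$, the Biot-Savart law (Lemma \ref{lem4.9}(1) with $d=3$) gives
\[
u(x,t)=\int_{\mathbb{R}^{3}}K_{3}(\xi,x)\wedge\omega(\xi,t)\rmd\xi,
\]
so the representation for $\omega(\xi,t)$ in (\ref{om-re51}) can be substituted directly inside the Biot-Savart integral.

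Next, I would apply Fubini's theorem to interchange the integration in $\xi$ with the integrations in $\eta$ (and in $s$, for the second term). This produces, as the inner integral of the initial term,
\[
\int_{\mathbb{R}^{3}}K_{3}(\xi,x)\wedge\mathbb{E}\left[\left.Q(\eta,t;0)\omega(\eta,0)\right|X_{t}^{\eta}=\xi\right]p_{\nu}(0,\eta,t,\xi)\rmd\xi.
\]
Since $K_{3}(\xi,x)$ is a deterministic function of $\xi$, it can be pulled inside the conditional expectation and, on the event $\{X_{t}^{\eta}=\xi\}$, $K_{3}(\xi,x)=K_{3}(X_{t}^{\eta},x)$. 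The resulting integral against the marginal density $p_{\nu}(0,\eta,t,\cdot)$ is, by the tower property, exactly the unconditional expectation
\[
\mathbb{E}\left[K_{3}(X_{t}^{\eta},x)\wedge Q(\eta,t;0)\omega(\eta,0)\right].
\]
The same manipulation applied to the source term of (\ref{om-re51}) yields
\[
\mathbb{E}\left[K_{3}(X_{t}^{\eta},x)\wedge Q(\eta,t;s)F(X_{s}^{\eta},s)\right],
\]
and combining the two contributions gives the claimed formula.

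The main technical obstacle is the justification of Fubini's theorem in view of the singularity of the Biot-Savart kernel $K_{3}(y,x)\sim|y-x|^{-2}$: one needs a uniform-in-$\eta$ bound on the transition density $p_{\nu}(0,\eta,t,\xi)$ together with suitable decay and integrability properties of $\omega(\cdot,0)$, of $F$, and of the matrix flow $Q$, to ensure absolute integrability of the triple integrand on $\mathbb{R}^{3}\times\mathbb{R}^{3}\times[0,t]$. For smooth solutions of (\ref{M-NS01}, \ref{M-TNS-01}) with sufficient spatial decay of the data, these bounds follow from standard Gaussian-type estimates for $p_{\nu}$ (recalled in Section 2), Gr\"onwall's inequality applied to (\ref{Ga-01}), and boundedness of $A=\nabla u$ along the Taylor paths; all other manipulations are routine.
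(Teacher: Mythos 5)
Your proposal is correct and follows essentially the same route as the paper, which states that the formula follows from the Biot-Savart law and the vorticity representation \eqref{om-re51}, exactly as in the detailed proof of Theorem \ref{thm6.4} for the two-dimensional case (substitute the representation into the Biot-Savart integral, apply Fubini, and use the tower property to absorb the conditional expectation and the density into an unconditional expectation). Your additional remark on justifying Fubini in the presence of the kernel singularity is a point the paper passes over silently, but it does not change the argument.
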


This formula follows from Eq. (\ref{t0t-BS01-1}) and Eq. (\ref{om-re51}).
\begin{thm}
\label{thm67}The temperature $\theta(x,t)$ possesses the following
representation:
\[
\theta(x,t)=-\int_{\mathbb{R}^{3}}\mathbb{E}\left[K_{3}(Y_{t}^{\eta},x)\cdot R(\eta,t;0)\varTheta(\eta,0)\right]\rmd \eta
\]
for every $x\in\mathbb{R}^{3}$ and $t>0$, where $R=(R_{i}^{j})$
is defined by (\ref{Ga-02}) with $D=\mathbb{R}^{3}$, and $R\varTheta=(R_{i}^{l}\varTheta_{l})$.
\end{thm}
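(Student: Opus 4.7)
The plan is to follow exactly the same strategy as in the proof of Theorem \ref{thm6.5} (the two-dimensional analogue), substituting the three-dimensional Biot-Savart law and the three-dimensional representation for the temperature gradient. I would proceed in three steps.

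First, I would apply part (2) of Lemma \ref{lem4.9} with $d=3$, which expresses the temperature at time $t$ in terms of its gradient via the Green kernel:
\[
\theta(x,t)=-\int_{\mathbb{R}^{3}}K_{3}(\xi,x)\cdot\varTheta(\xi,t)\rmd \xi.
\]
(Here one has to keep in mind the assumption implicit throughout this part of the paper that $\theta$ and $u$ are regular enough and decay sufficiently at infinity so that this Biot-Savart-type identity is valid.)

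Second, I would invoke Lemma \ref{lem6.2} with $d=3$ to rewrite the temperature gradient $\varTheta(\xi,t)$ as a conditional expectation along the $Y$-diffusion:
\[
\varTheta(\xi,t)=\int_{\mathbb{R}^{3}}\mathbb{E}\left[\left.R(\eta,t;0)\varTheta(\eta,0)\right|Y_{t}^{\eta}=\xi\right]p_{\kappa}(0,\eta,t,\xi)\rmd\eta.
\]
Plugging this into the formula from step one and swapping the order of integration via Fubini's theorem yields
\[
\theta(x,t)=-\int_{\mathbb{R}^{3}}\int_{\mathbb{R}^{3}}K_{3}(\xi,x)\cdot\mathbb{E}\left[\left.R(\eta,t;0)\varTheta(\eta,0)\right|Y_{t}^{\eta}=\xi\right]p_{\kappa}(0,\eta,t,\xi)\rmd\xi\rmd\eta.
\]

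Third, I would recognize the inner integral as the unconditional expectation of $K_{3}(Y_{t}^{\eta},x)\cdot R(\eta,t;0)\varTheta(\eta,0)$: since $p_{\kappa}(0,\eta,t,\xi)$ is the density of $Y_{t}^{\eta}$, integrating the conditional expectation against this density produces the full expectation by the tower property. This collapses the formula to
\[
\theta(x,t)=-\int_{\mathbb{R}^{3}}\mathbb{E}\left[K_{3}(Y_{t}^{\eta},x)\cdot R(\eta,t;0)\varTheta(\eta,0)\right]\rmd\eta,
\]
which is the claimed representation.

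The only non-routine step is the Fubini interchange, which I expect to be the main obstacle because the Biot-Savart kernel $K_{3}(\xi,x)=\frac{1}{4\pi}(\xi-x)/|\xi-x|^{3}$ is singular at $\xi=x$. However, this singularity is locally integrable in dimension three (the singularity is of order $|\xi-x|^{-2}$ against a three-dimensional volume element), so given the standing regularity and decay assumptions on the solution together with the boundedness of the gauge $R(\eta,t;0)$ implied by the boundedness of $A=\nabla u$, Tonelli-type absolute integrability of the integrand is assured, and the swap is legitimate.
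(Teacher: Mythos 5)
Your proposal is correct and follows exactly the route the paper intends: the paper derives Theorem \ref{thm67} (as with its two-dimensional counterpart, Theorem \ref{thm6.5}) by combining the Biot-Savart identity (\ref{t0t-BS01-1}) with the conditional-expectation representation of $\varTheta$ from Lemma \ref{lem6.2}, then collapsing the conditional expectation against the density $p_{\kappa}(0,\eta,t,\xi)$ via Fubini and the tower property. Your additional remark on the local integrability of $K_{3}$ justifying the interchange is a sensible elaboration of a point the paper leaves implicit.
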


Theorem \ref{thm66}, Theorem \ref{thm67}, the particle defining
equations (\ref{X-P-01}, \ref{Y-P-01}), the gauge functional defining
equations (\ref{Ga-01}, \ref{Ga-02}), together with the relations
\[
A_{j}^{i}=\frac{\partial}{\partial x_{j}}u^{i},\quad\varTheta_{i}=\frac{\partial}{\partial x_{i}}\theta,\quad F=\varTheta\wedge f'(\theta)
\]
give rise to the random vortex dynamic system for three-dimensional
flows.

\section{Wall-bounded flows }

In this section, we study the random vortex dynamics associated with
the Oberbeck-Boussinesq flows along a flat plate, which is very important
examples of wall-bounded fluid flows, in particular, wall-bounded turbulent
flows with thermal conduction from the solid wall.
\subsection{The Oberbeck-Boussinesq equations in \texorpdfstring{$\R_+^d$}{Rd}}

In the Oberbeck-Boussinesq model, it is assumed that the fluid density
$\rho$ diverges little from the fluid density $\rho_{0}$ at
the fluid temperature $\theta_{0}$:
\[
\rho=\rho_{0}\left(1-\alpha\left(\theta-\theta_{0}\right)\right),
\]
where $\theta$ is the fluid temperature and $\alpha$ is a very small
constant. Therefore the density $\rho$ is considered a constant.
The Oberbeck-Boussinesq model is composed of the following partial differential
equations on $\mathbb{R}_{+}^{d}$ (where the dimension $d=3$ or
$d=2$). The Navier-Stokes equations with interaction from the
heat supply on the boundary
\[
\left(\nu\Delta-u\cdot\nabla-\frac{\partial}{\partial t}\right)u-\nabla\left(\frac{P}{\rho_{0}}+gx_{d}\right)+f(\theta-\theta_{0})=0\quad\textrm{ in }\mathbb{R}_{+}^{d},
\]
where the interaction force $f(\theta-\theta_{0})$, with its components
\[f_{i}(\theta-\theta_{0})=\alpha g(\theta-\theta_{0})\delta_{id},\quad 
i=1,\cdots,d.
\] 
Here $g$ is the gravity constant, and $P$ denotes the
pressure. The continuity equation
\[
\nabla\cdot u=0\quad\textrm{ in }\mathbb{R}_{+}^{d},
\]
and the equation of energy in terms of the temperature $\theta$:
\begin{equation}
\left(\kappa\Delta-u\cdot\nabla-\frac{\partial}{\partial t}\right)\theta=0\quad\textrm{ in }\mathbb{R}_{+}^{d},\label{eq:B-heat-01}
\end{equation}
where $\kappa=k/(\rho_{0}c_{P})$ (perfect gas) or $k/(\rho_{0}c)$
(for liquid) is a positive constant. 

The velocity $u(x,t)$ satisfies the no-slip condition, i.e. $u(x,t)=0$
when $x_{d}=0$, so that $u(x,t)$ is extended via reflection about
$\{x_{d}=0\}$ to a time-dependent vector field on $\mathbb{R}^{d}$:
\[
u^{i}(x,t)=u^{i}(\bar{x},t)\quad\textrm{ for }i=1,\cdots,d-1,\quad u^{d}(x,t)=-u^{d}(\bar{x},t),
\]
so that $u(\cdot,t)$ is divergence-free on the whole space $\mathbb{R}^{d}$
in the distribution sense.

Let $\omega=\nabla\wedge u$ and $\varTheta=\nabla\theta$. We assume
that they are continuous up to the boundary $\{x_{d}=0\}$.

The vorticity transport equation for three-dimensional flows is
given as the following: 
\[
\left(\nu\Delta-u\cdot\nabla-\frac{\partial}{\partial t}\right)\omega+A\omega+F=0\quad\textrm{ in }\mathbb{R}_{+}^{3},
\]
where $A=(A_{l}^{i})$ with $A_{l}^{i}=\frac{\partial}{\partial x_{l}}u^{i}$
for $i,l=1,2,3$, $(A\omega)^{i}=A_{l}^{i}\omega^{l}$ and 
\[
F=(\varTheta-\nabla\theta_{0})\wedge f'(\theta-\theta_{0}).
\]

For two-dimensional flows, the non-linear vorticity stretching term
$A\omega$ vanishes identically. Therefore
\[
\left(\nu\Delta-u\cdot\nabla-\frac{\partial}{\partial t}\right)\omega+F=0\quad\textrm{ in }\mathbb{R}_{+}^{2},
\]
where
\[
F=(\varTheta_{1}-\frac{\partial}{\partial x_{1}}\theta_{0})f_{2}'(\theta-\theta_{0})-(\varTheta_{2}-\frac{\partial}{\partial x_{2}}\theta_{0})f_{1}'(\theta-\theta_{0}).
\]

Similarly, by differentiating the heat equation (\ref{eq:B-heat-01})
one obtains the evolution equations for the temperature gradient
\[
\left(\kappa\Delta-u\cdot\nabla-\frac{\partial}{\partial t}\right)\varTheta-A\varTheta=0\quad\textrm{ in }\mathbb{R}_{+}^{d}.
\]
Here we have used the convention that $(A\varTheta)_{j}=A_{j}^{l}\varTheta_{l}$.

An essential difference exists between this bounded case and the case discussed in Section
4, where there is no physical boundary for the fluid flow. For wall-bounded
flows, it is impossible to determine the boundary value of the vorticity $\omega$ and the temperature
gradient $\varTheta$, although the boundary
vorticity can be identified as the normal stress at the boundary, cf. Anderson \cite{Anderson 1986}. Nevertheless, the boundary
the temperature gradient can not be specified either, while by the definition
of the Oberbeck-Boussinesq motion, the boundary temperature gradient
is considered small, so it can be treated as zero in numerical
experiments.

\subsection{Diffusions in \texorpdfstring{$\mathbb{R}^d_+$}{Rd}}

Let $D=\mathbb{R}_{+}^{d}$. The reflection $\mathscr{R}$ about the
hyperplane $\{x_{d}=0\}$ on $\mathbb{R}^{d}$ is the linear map which
sends $x=(x_{1},\cdots,x_{d})$ to $\mathscr{R}x=\overline{x}=(x_{1},\cdots,x_{d-1},-x_{d})$
for every $x\in\mathbb{R}^{d}$.

The velocity $u(x,t)$ is extended to a time-dependent and divergence-free
vector field on $\mathbb{R}^{d}$ such that $\mathscr{R}(b(x,t))=b(\mathscr{R}x,t)$
for every $x\in\mathbb{R}^{d}$ and $t\geq0$. That is
\[
u^{i}(\overline{x},t)=u^{i}(x,t)\quad\textrm{ for }i=1,\ldots,d-1\quad\textrm{ and }u^{d}(\bar{x},t)=-u^{d}(x,t))
\]
for all $x\in\mathbb{R}^{d}$. Then $\overline{X^{\xi,\tau}}$ and
$X^{\overline{\xi},\tau}$ have the same distribution, so that $p_{\lambda,u}(\tau,\bar{\xi},t,\overline{x})=p_{\lambda,u}(\tau,\xi,t,x)$,
and 
\begin{equation}
p_{\lambda,u}^{D}(\tau,\xi,t,x)=p_{\lambda,u}(\tau,\xi,t,x)-p_{\lambda,u}(\tau,\overline{\xi},t,x)\label{eq:ST-pdf-01}
\end{equation}
for any $\xi,x\in\mathbb{R}_{+}^{d}$. The last equality may be verified
by checking that the right-hand side is indeed the Green function
of the backward parabolic operator $\lambda\Delta+b\cdot\nabla+\frac{\partial}{\partial t}$.
In particular
\begin{equation}
\mathbb{E}\left[\varphi(X_{t}^{\xi,\tau})1_{\left\{ t<\zeta(X^{\xi,\tau})\right\} }\right]=\mathbb{E}\left[1_{D}(X_{t}^{\xi,\tau})\varphi(X_{t}^{\xi,\tau})\right]-\mathbb{E}\left[1_{D}(X_{t}^{\overline{\xi},\tau})\varphi(X_{t}^{\overline{\xi},\tau})\right]\label{stop-for01}
\end{equation}
for any bounded and Borel measurable function $\varphi$. This approach has been put forward in \cite{QQZW2022} for simulating
the solutions of the Navier-Stokes equations within the boundary layer. 

\subsection{Representations of the vorticity and the temperature gradient}

We observe that the vorticity $\omega$, the temperature $\theta$
and the temperature gradient $\varTheta$ possess non-homogeneous
Dirichlet boundary conditions, i.e. 
\[
    \omega(x,t) = \sigma(x_{1},\cdots,x_{d-1},t) \quad\text{on } \{x_d=0\},
\]
\[
    \varTheta(x,t) = \gamma(x_{1},\cdots,x_{d-1},t) \quad\text{on } \{x_d=0\}
\]
and 
\[
\theta(x,t) = \theta_0(x_{1},\cdots,x_{d-1}) \quad\text{on } \{x_d=0\}
\]
for all $t\geq 0$, where $\theta_0$ is considered the external heat supply on the boundary. 
so we use the cutting-off technique
to reduce their boundary conditions to homogeneous ones.


Let $\phi:[0,\infty)\rightarrow[0,1]$ be a smooth cut-off function
such that $\phi(r)=1$ for $r\in[0,1/3)$ and $\phi(r)=0$ for $r\geq2/3$.
Let
\[
\omega^{\varepsilon}(x,t)=\omega(x,t)-\sigma(x_{1},\cdots,x_{d-1},t)\phi\left(\frac{x_{d}}{\varepsilon}\right)
\]
and
\[
\theta^{\varepsilon}(x,t)=\theta(x,t)-\theta_{0}(x_{1},\cdots,x_{d-1})\phi\left(\frac{x_{d}}{\varepsilon}\right)
\]
for every $\varepsilon>0$. It is clear from the definition that
\[
\lim_{\varepsilon\rightarrow\infty}\omega^{\varepsilon}(x,t)=\omega(x,t)-1_{\{x_{d}\geq0\}}\sigma\left(x_{1},\cdots,x_{d-1},t\right),
\]
\[
\lim_{\varepsilon\rightarrow\infty}\theta^{\varepsilon}(x,t)=\theta(x,t)-1_{\{x_{d}\geq0\}}\theta_{0}\left(x_{1},\cdots,x_{d-1}\right),
\]
\[
\lim_{\varepsilon\rightarrow0+}\omega^{\varepsilon}(x,t)=\omega(x,t)-1_{\{x_{d}=0\}}\sigma(x_{1},\cdots,x_{d-1},t)
\]
and
\[
\lim_{\varepsilon\rightarrow0+}\theta^{\varepsilon}(x,t)=\theta(x,t)-1_{\{x_{d}=0\}}\theta_{0}(x_{1},\cdots,x_{d-1}).
\]
Then $\omega^{\varepsilon}$ and $\theta^{\varepsilon}$ satisfy the
homogeneous Dirichlet boundary conditions $\omega^{\varepsilon}(x,t)=0$
and $\theta^{\varepsilon}(x,t)=0$ on $\{x_{d}=0\}$. $\omega^{\varepsilon}$
and $\theta^{\varepsilon}$, for every $\varepsilon>0$, evolve according
to the following parabolic equations:
\begin{equation}
\left(\nu\Delta-u\cdot\nabla-\frac{\partial}{\partial t}\right)\omega^{\varepsilon}+A\omega^{\varepsilon}+F+\chi_{\varepsilon}=0\quad\textrm{ in }\mathbb{R}_{+}^{d},\label{eq:ome-01}
\end{equation}
(here if the dimension $d=2$, then the non-linear stretching term
$A\omega^{\varepsilon}$ may be dropped),

\begin{equation}
\left(\kappa\Delta-u\cdot\nabla-\frac{\partial}{\partial t}\right)\varTheta^{\varepsilon}-A\varTheta^{\varepsilon}+\alpha_{\varepsilon}=0\quad\textrm{ in }\mathbb{R}_{+}^{d}\label{Th-51}
\end{equation}
and
\begin{equation}
\left(\kappa\Delta-u\cdot\nabla-\frac{\partial}{\partial t}\right)\theta^{\varepsilon}+\beta_{\varepsilon}=0\quad\textrm{ in }\mathbb{R}_{+}^{d},\label{eq:Tem-e-01}
\end{equation}
where the correction terms are given by the following formulae:
\begin{align}
\chi_{\varepsilon}(x,t)  =&\phi\left(\frac{x_{d}}{\varepsilon}\right)\left(\nu\Delta-\sum_{j=1}^{d-1}u^{j}(x,t)\frac{\partial}{\partial x_{j}}-\frac{\partial}{\partial t}+A\right)\sigma(x_{1},\cdots,x_{d-1},t)\nonumber \\
 & -\frac{1}{\varepsilon}\phi'\left(\frac{x_{d}}{\varepsilon}\right)\sigma(x_{1},\cdots,x_{d-1},t)u^{d}(x,t)+\nu\frac{1}{\varepsilon^{2}}\phi''\left(\frac{x_{d}}{\varepsilon}\right)\sigma(x_{1},\cdots,x_{d-1},t),\label{eq:oe-er-01}
\end{align}
(if the dimension $d=2$, then the term involving $A$ may be dropped),
\begin{align*}
\alpha_{\varepsilon}(x,t)  =&\phi\left(\frac{x_{d}}{\varepsilon}\right)\left(\kappa\Delta-\sum_{j=1}^{d-1}u^{j}(x,t)\frac{\partial}{\partial x_{j}}-\frac{\partial}{\partial t}-A\right)\gamma(x_{1},\cdots,x_{d-1},t)\nonumber \\
 & -\frac{1}{\varepsilon}\phi'\left(\frac{x_{d}}{\varepsilon}\right)\gamma(x_{1},\cdots,x_{d-1},t)u^{d}(x,t)+\kappa\frac{1}{\varepsilon^{2}}\phi''\left(\frac{x_{d}}{\varepsilon}\right)\gamma(x_{1},\cdots,x_{d-1},t),
\end{align*}
and
\begin{align*}
\beta_{\varepsilon}(x,t)  =&\phi\left(\frac{x_{d}}{\varepsilon}\right)\left(\kappa\Delta-\sum_{j=1}^{d-1}u^{j}\frac{\partial}{\partial x_{j}}\right)\theta_{0}(x_{1},\cdots,x_{d-1})\nonumber \\
 & -\frac{1}{\varepsilon}\phi'\left(\frac{x_{d}}{\varepsilon}\right)\theta_{0}(x_{1},\cdots,x_{d-1})u^{d}(x,t)+\kappa\frac{1}{\varepsilon^{2}}\phi''\left(\frac{x_{d}}{\varepsilon}\right)\theta_{0}(x_{1},\cdots,x_{d-1}).
\end{align*}
Here $x=(x_{1},\cdots,x_{d})$
and the Laplacian $\Delta$ on $\sigma$ and $\theta_{0}$ is the
(boundary) $(d-1)$-dimensional Laplacian $\sum_{j=1}^{d-1}\frac{\partial^{2}}{\partial x_{j}^{2}}$. 

It is clear from the definition that the following limits
\[
\lim_{\varepsilon\rightarrow\infty}\chi_{\varepsilon}(x,t)=1_{\{x_{d}\geq0\}}\left(\nu\Delta-\sum_{j=1}^{d-1}u^{j}(x,t)\frac{\partial}{\partial x_{j}}-\frac{\partial}{\partial t}+A\right)\sigma(x_{1},\cdots,x_{d-1},t),
\]
\[
\lim_{\varepsilon\rightarrow\infty}\alpha_{\varepsilon}(x,t)=1_{\{x_{d}\geq0\}}\left(\kappa\Delta-\sum_{j=1}^{d-1}u^{j}(x,t)\frac{\partial}{\partial x_{j}}-\frac{\partial}{\partial t}-A\right)\gamma(x_{1},\cdots,x_{d-1},t),
\]
and
\[
\lim_{\varepsilon\rightarrow\infty}\beta_{\varepsilon}(x,t)=1_{\{x_{d}\geq0\}}\left(\kappa\Delta-\sum_{j=1}^{d-1}u^{j}(x,t)\frac{\partial}{\partial x_{j}}\right)\theta_{0}(x_{1},\cdots,x_{d-1})
\]
exist, where $x=(x_{1},\cdots, x_d)$. Unfortunately, these limits involve
the velocity of the outer layer flow, and the limits as $\varepsilon\rightarrow0+$
do not exist in the ordinary sense, so we need to evaluate them under
integration. However, since $u(x,t)$ satisfies the no-slip condition,
so that 
\begin{equation}
\lim_{\varepsilon\rightarrow0+}\phi\left(\frac{x_{d}}{\varepsilon}\right)\left(\nu\Delta-\sum_{j=1}^{d-1}u^{j}\frac{\partial}{\partial x_{j}}-\frac{\partial}{\partial t}+A\right)\sigma=1_{\{x_{d}=0\}}\left(\nu\Delta-\frac{\partial}{\partial t}+A\right)\sigma,\label{eq:oe-lim0p}
\end{equation}
\[
\lim_{\varepsilon\rightarrow0+}\phi\left(\frac{x_{d}}{\varepsilon}\right)\left(\kappa\Delta-\sum_{j=1}^{d-1}u^{j}\frac{\partial}{\partial x_{j}}-\frac{\partial}{\partial t}-A\right)\gamma=1_{\{x_{d}=0\}}\left(\kappa\Delta-\frac{\partial}{\partial t}-A\right)\gamma,
\]
and
\[
\lim_{\varepsilon\rightarrow0+}\phi\left(\frac{x_{2}}{\varepsilon}\right)\left(\kappa\Delta-\sum_{j=1}^{d-1}u^{j}\frac{\partial}{\partial x_{j}}\right)\theta_{0}=\kappa1_{\{x_{d}=0\}}\Delta\theta_{0},
\]
which vanish almost surely in $x=(x_{1},\cdots,x_{d})$ with respect
to the Lebesgue measure on $\mathbb{R}^{d}$. 


By using the forward Feynman-Kac formula, we may work out the functional
integral representations for $\theta^{\varepsilon}$, $\varTheta^{\varepsilon}$
and $\omega^{\varepsilon}$ respectively. Again, as in previous sections, we drop $u$ in the notation of the transition probability density functions. 
\begin{prop}
For every $\varepsilon>0$, the temperature $\theta$
has the following representation
\begin{align*}
\theta(\xi,t) =& \phi\left(\frac{\xi_{d}}{\varepsilon}\right)\theta_{0}(\xi_{1},\cdots,\xi_{d-1})+\int_{\mathbb{R}_{+}^{d}}\mathbb{E}\left[\left.1_{\{t<\zeta(Y^{\eta})\}}\right|Y_{t}^{\eta}=\xi\right]p_{\kappa}(0,\eta,t,\xi)\theta^{\varepsilon}(\eta,0)\rmd\eta\nonumber \\
 & +\int_{0}^{t}\int_{\mathbb{R}_{+}^{d}}\mathbb{E}\left[\left.1_{\left\{ s>\gamma_{t}(Y^{\eta})\right\} }\beta_{\varepsilon}\left(Y_{s}^{\eta},s\right)\right|Y_{t}^{\eta}=\xi\right]p_{\kappa}(0,\eta,t,\xi)\rmd\eta\rmd t
\end{align*}
for every $\xi\in\mathbb{R}_{+}^{d}$ and $t>0$, where $d=2$ or
$d=3$. 
\end{prop}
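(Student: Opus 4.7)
The plan is to apply Theorem \ref{thm7.2-1} directly to the regularised temperature $\theta^{\varepsilon}$, and then recover $\theta$ by adding back the boundary cut-off term. The function $\theta^{\varepsilon}(x,t) = \theta(x,t) - \phi(x_{d}/\varepsilon)\theta_{0}(x_{1},\ldots,x_{d-1})$ has been constructed precisely so that it satisfies the homogeneous Dirichlet condition $\theta^{\varepsilon} = 0$ on $\{x_{d}=0\}$, which is exactly what is needed to invoke the half-space representation.

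First I would verify that $\theta^{\varepsilon}$ fits the abstract framework of \eqref{Phi-01}--\eqref{B-Phi-01}. By equation \eqref{eq:Tem-e-01}, $\theta^{\varepsilon}$ satisfies $(\kappa\Delta - u\cdot\nabla - \partial_{t})\theta^{\varepsilon} + \beta_{\varepsilon} = 0$ in $\mathbb{R}^{d}_{+}$; this is the scalar ($n=1$) version of \eqref{Phi-01} with diffusivity $\lambda = \kappa$, drift $b = u$, vanishing gauge matrix $q\equiv 0$, source $F = \beta_{\varepsilon}$, and domain $D = \mathbb{R}^{d}_{+}$. The associated $L_{\kappa,u}$-diffusion is precisely the $Y$-particle defined by \eqref{Y-P-01}, and its transition density is $p_{\kappa}$. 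Because $q$ is identically zero, the gauge ODE \eqref{Q-s01} degenerates to $\frac{\rmd}{\rmd s}Q = 0$ with terminal value $\delta_{ij}$, so $Q_{j}^{i}(\eta,t;s) \equiv \delta_{ij}$ throughout $[0,t]$, and all the $Q$-factors drop out of the conclusion.

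Substituting these data into \eqref{rep-m02-1} of Theorem \ref{thm7.2-1} immediately yields a functional integral representation for $\theta^{\varepsilon}(\xi,t)$ which is exactly the sum of the two conditional expectation integrals appearing on the right-hand side of the proposition. Rearranging the defining identity $\theta = \theta^{\varepsilon} + \phi(x_{d}/\varepsilon)\theta_{0}$ then produces the claimed formula, with the explicit boundary cut-off term $\phi(\xi_{d}/\varepsilon)\theta_{0}(\xi_{1},\ldots,\xi_{d-1})$ appearing as the first summand.

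There is essentially no substantive obstacle here; the proof is a direct invocation of the main technical theorem once the cut-off reduction is in place. The only matters to check are the regularity hypotheses of Theorem \ref{thm7.2-1}, specifically that $\beta_{\varepsilon} \in C^{2,1}(D\times[0,\infty))$ and that $\theta^{\varepsilon}$ is a classical solution vanishing on $\partial D$. Both follow from the smoothness of the cut-off $\phi$, the assumed regularity of the velocity $u$ and temperature $\theta$, and the smoothness of the boundary datum $\theta_{0}$.
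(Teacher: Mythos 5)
Your proposal is correct and is exactly the argument the paper uses: the paper's proof is the one-line remark that the representation "follows from the PDE (\ref{eq:Tem-e-01}) and Theorem \ref{thm7.2-1} immediately," i.e. apply Theorem \ref{thm7.2-1} to $\theta^{\varepsilon}$ with $n=1$, $\lambda=\kappa$, $b=u$, $q\equiv 0$ (so $Q\equiv 1$), $F=\beta_{\varepsilon}$ and $D=\mathbb{R}^{d}_{+}$, then add back the cut-off term $\phi(\xi_{d}/\varepsilon)\theta_{0}$. Your write-up simply makes the same substitutions explicit.
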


\begin{proof}
It follows from the PDE (\ref{eq:Tem-e-01}) and Theorem \ref{thm7.2-1}
immediately.
\end{proof}
There is a similar representation for the temperature gradient.
\begin{prop}
\label{prop53}For every $\varepsilon>0$, the following representation
holds:
\begin{align*}
\varTheta(\xi,t)  =&\phi\left(\frac{\xi_{d}}{\varepsilon}\right)\varTheta(\xi_{1},\cdots,\xi_{d-1},0,t) \\
 & +\int_{\mathbb{R}_{+}^{d}}\mathbb{E}\left[\left.1_{\{t<\zeta(Y^{\eta})\}}R(\eta,t;0)\varTheta^{\varepsilon}(\eta,0)\right|Y_{t}^{\eta}=\xi\right]p_{\kappa}(0,\eta,t,\xi)\rmd \eta\nonumber \\
 & +\int_{0}^{t}\int_{\mathbb{R}_{+}^{d}}\mathbb{E}\left[\left.1_{\left\{ s>\gamma_{t}(Y^{\eta})\right\} }R(\eta,t;s)\alpha_{\varepsilon}\left(Y_{s}^{\eta},s\right)\right|Y_{t}^{\eta}=\xi\right]p_{\kappa}(0,\eta,t,\xi)\rmd \eta\rmd s
\end{align*}
for all $\xi\in\mathbb{R}_{+}^{d}$ and $t>0$, where $d=2$ and $d=3$, and $(R\varTheta^\varepsilon)_{j}=R_{j}^{l}\varTheta^\varepsilon_{l}$. 
\end{prop}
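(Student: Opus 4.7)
The plan is to apply Theorem \ref{thm7.2-1} directly to the parabolic system (\ref{Th-51}) governing $\varTheta^{\varepsilon}$ on the half-space $D = \mathbb{R}_{+}^{d}$. By construction $\varTheta^{\varepsilon}$ vanishes on $\partial\mathbb{R}_{+}^{d} = \{x_{d}=0\}$, so the homogeneous Dirichlet condition (\ref{B-Phi-01}) is satisfied. Comparing (\ref{Th-51}) against the abstract equation (\ref{Phi-01}), I identify $\lambda = \kappa$, $b = u$, $q_{l}^{i} = -A_{i}^{l}$, $F^{i} = \alpha_{\varepsilon,i}$, and take the underlying diffusion to be the heat-conduction particle $Y^{\eta}$ defined in (\ref{Y-P-01}) rather than the velocity particle $X^{\eta}$.

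Under this identification, the gauge flow $Q$ of Theorem \ref{thm7.2-1} (see (\ref{Q-s01})), now driven by $Y^{\eta}$, reduces to the ODE (\ref{Ga-02}) defining $R(\eta,t;s)$: the sign from $q = -A$ cancels the minus in (\ref{Q-s01}), the prescription $q \equiv 0$ off $D$ supplies the indicator $1_{D}(Y_{s}^{\eta})$, and the index positions agree after relabelling. Consequently, the matrix-valued gauge produced by the theorem is precisely $R$, acting on $\varTheta^{\varepsilon}$ via $(R\varTheta^{\varepsilon})_{j} = R_{j}^{l}\varTheta^{\varepsilon}_{l}$. Substituting these choices into the master formula (\ref{rep-m02-1}) then yields
\begin{align*}
\varTheta^{\varepsilon}(\xi,t) =\ & \int_{\mathbb{R}_{+}^{d}} \mathbb{E}\left[\left. 1_{\{t < \zeta(Y^{\eta})\}} R(\eta,t;0)\,\varTheta^{\varepsilon}(\eta,0) \,\right|\, Y_{t}^{\eta} = \xi \right] p_{\kappa}(0,\eta,t,\xi)\,\rmd\eta \\
& + \int_{0}^{t}\!\int_{\mathbb{R}_{+}^{d}} \mathbb{E}\left[\left. 1_{\{s > \gamma_{t}(Y^{\eta})\}} R(\eta,t;s)\,\alpha_{\varepsilon}(Y_{s}^{\eta}, s) \,\right|\, Y_{t}^{\eta} = \xi \right] p_{\kappa}(0,\eta,t,\xi)\,\rmd\eta\,\rmd s.
\end{align*}

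To recover the un-cut temperature gradient itself, I invert the relation $\varTheta^{\varepsilon}(x,t) = \varTheta(x,t) - \gamma(x_{1},\ldots,x_{d-1},t)\phi(x_{d}/\varepsilon)$. Since the boundary value $\gamma(x_{1},\ldots,x_{d-1},t) = \varTheta(x_{1},\ldots,x_{d-1},0,t)$ is just the trace of $\varTheta$ on the plate, adding $\phi(\xi_{d}/\varepsilon)\varTheta(\xi_{1},\ldots,\xi_{d-1},0,t)$ to both sides of the preceding display produces exactly the claimed formula.

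The only delicate point is verifying that the index and sign conventions in (\ref{Q-s01}) and (\ref{Ga-02}) match after the assignment $q = -A$; everything else is a mechanical specialisation of Theorem \ref{thm7.2-1}. The argument mirrors the proof of the preceding proposition (for $\theta$), with the matrix gauge $R$ replacing the trivial scalar gauge and $\alpha_{\varepsilon}$ replacing $\beta_{\varepsilon}$.
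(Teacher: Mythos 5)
Your proof is correct and follows exactly the route the paper takes: the paper's entire justification is that the proposition "follows immediately from (\ref{Th-51}) and Theorem \ref{thm7.2-1}", and you have simply spelled out the identification $\lambda=\kappa$, $b=u$, $q_{l}^{i}=-A_{i}^{l}$, $F=\alpha_{\varepsilon}$, the matching of the gauge ODE (\ref{Q-s01}) with (\ref{Ga-02}) via $R_{i}^{j}=Q_{j}^{i}$, and the restoration of the cut-off boundary term. The sign and index bookkeeping you flag as the delicate point checks out.
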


The representation for the temperature follows immediately from (\ref{Th-51})
and Theorem \ref{thm7.2-1}.

Finally, the functional integral representation for the vorticity in dimension
two takes a simpler form, so we state it separately.
\begin{prop}
\label{prop54}Suppose $d=2$. Then the vorticity $\omega$
possesses the following integral representation:
\begin{align*}
\omega(\xi,t) =& \phi\left(\frac{\xi_{2}}{\varepsilon}\right)\sigma(\xi_{1},t)+\int_{\mathbb{R}_{+}^{2}}\mathbb{E}\left[\left.1_{\{t<\zeta(X^{\eta})\}}\right|X_{t}^{\eta}=\xi\right]p_{\nu}(0,\eta,t,\xi)\omega^{\varepsilon}(\eta,0)\rmd\eta \\
 & +\int_{0}^{t}\int_{\mathbb{R}_{+}^{2}}\mathbb{E}\left[\left.1_{\left\{ s>\gamma_{t}(X^{\eta})\right\} }F(X_{s}^{\eta},s)\right|X_{t}^{\eta}=\xi\right]p_{\nu}(0,\eta,t,\xi)\rmd\eta\rmd s \\
 & +\int_{0}^{t}\int_{\mathbb{R}_{+}^{2}}\mathbb{E}\left[\left.1_{\left\{ s>\gamma_{t}(X^{\eta})\right\} }\chi_{\varepsilon}(X_{s}^{\eta},s)\right|X_{t}^{\eta}=\xi\right]p_{\nu}(0,\eta,t,\xi)\rmd\eta\rmd s
\end{align*}
for every $\xi=(\xi_{1},\xi_{2})\in\mathbb{R}_{+}^{2}$ and $t>0$.
\end{prop}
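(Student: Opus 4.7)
The plan is to apply Theorem~\ref{thm7.2-1} directly to the cut-off vorticity $\omega^{\varepsilon}$, which by construction satisfies a linear parabolic equation on $\R_{+}^{2}$ with homogeneous Dirichlet boundary data, and then to add back the cut-off correction to recover $\omega$.

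First I would recall that, in two dimensions, the non-linear vortex stretching term is absent, so the PDE (\ref{eq:ome-01}) reduces to the scalar equation
\[
\left(\nu\Delta-u\cdot\nabla-\frac{\partial}{\partial t}\right)\omega^{\varepsilon}+F+\chi_{\varepsilon}=0\quad\textrm{ in }\R_{+}^{2}\times(0,\infty),
\]
together with the homogeneous Dirichlet condition $\omega^{\varepsilon}(x,t)=0$ on $\{x_{2}=0\}$. This fits the framework of Theorem~\ref{thm7.2-1} with $n=1$, $\lambda=\nu$, $b=u$, $D=\R_{+}^{2}$, vanishing gauge matrix $q_{l}^{i}\equiv 0$, and inhomogeneity $F^{1}=F+\chi_{\varepsilon}$.

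Second, because $q\equiv 0$, the ODE system (\ref{Q-s01}) trivialises to $Q_{j}^{i}(\eta,t;s)\equiv\delta_{ij}$ for all $s\in[0,t]$. Substituting this into the representation (\ref{rep-m02-1}) of Theorem~\ref{thm7.2-1} and splitting the source according to its two contributions $F$ and $\chi_{\varepsilon}$ immediately yields
\begin{align*}
\omega^{\varepsilon}(\xi,t) =& \int_{\R_{+}^{2}}\mathbb{E}\!\left[\left.1_{\{t<\zeta(X^{\eta})\}}\right|X_{t}^{\eta}=\xi\right]p_{\nu}(0,\eta,t,\xi)\,\omega^{\varepsilon}(\eta,0)\,\rmd\eta\\
&+\int_{0}^{t}\int_{\R_{+}^{2}}\mathbb{E}\!\left[\left.1_{\{s>\gamma_{t}(X^{\eta})\}}F(X_{s}^{\eta},s)\right|X_{t}^{\eta}=\xi\right]p_{\nu}(0,\eta,t,\xi)\,\rmd\eta\,\rmd s\\
&+\int_{0}^{t}\int_{\R_{+}^{2}}\mathbb{E}\!\left[\left.1_{\{s>\gamma_{t}(X^{\eta})\}}\chi_{\varepsilon}(X_{s}^{\eta},s)\right|X_{t}^{\eta}=\xi\right]p_{\nu}(0,\eta,t,\xi)\,\rmd\eta\,\rmd s.
\end{align*}
Adding back the cut-off correction through the defining identity $\omega(\xi,t)=\omega^{\varepsilon}(\xi,t)+\phi(\xi_{2}/\varepsilon)\sigma(\xi_{1},t)$ then gives precisely the stated formula.

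The main point to verify before invoking Theorem~\ref{thm7.2-1} is that its regularity hypotheses apply to the cut-off problem: $\omega^{\varepsilon}$ must be a classical solution and the inhomogeneity $F+\chi_{\varepsilon}$ must lie in $C^{2,1}(\R_{+}^{2}\times[0,\infty))$. These properties follow from the standing smoothness assumption on $u$ and $\theta$, together with the smoothness of the cut-off $\phi$ and of the boundary data $\sigma$ and $\theta_{0}$; in particular, $F=\varTheta_{1}f_{2}'(\theta)-\varTheta_{2}f_{1}'(\theta)$ is determined algebraically by the smooth fluid-dynamical variables and therefore inherits the required regularity. Beyond this regularity check, the argument is a direct specialisation ($n=1$, $q\equiv 0$) of the previously established representation theorem, and the two-dimensional simplification ensures that no further obstacle arises.
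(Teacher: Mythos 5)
Your proof is correct and follows exactly the route the paper intends: the paper's own justification is the single remark that the proposition is an immediate consequence of the transport equation (\ref{eq:ome-01}) and Theorem \ref{thm7.2-1}, and you have simply spelled out that application with $n=1$, $q\equiv 0$ (hence $Q\equiv 1$), source $F+\chi_{\varepsilon}$, and the cut-off identity $\omega(\xi,t)=\omega^{\varepsilon}(\xi,t)+\phi(\xi_{2}/\varepsilon)\sigma(\xi_{1},t)$. The only trivial inaccuracy is that in your regularity aside you quote the unbounded-domain form of $F$ rather than the wall-bounded one, $F=\left(\varTheta_{1}-\frac{\partial\theta_{0}}{\partial x_{1}}\right)f_{2}'(\theta-\theta_{0})-\left(\varTheta_{2}-\frac{\partial\theta_{0}}{\partial x_{2}}\right)f_{1}'(\theta-\theta_{0})$, but since $F$ enters the argument only as a given smooth source this does not affect anything.
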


In particular, for two-dimensional flow, we do not need to introduce
the gauge functional $Q$. However, for three-dimensional flows, we
need both gauge functionals $Q$ and $R$, defined by (\ref{Ga-01})
and (\ref{Ga-02}) respectively with $D=\mathbb{R}_{+}^{3}$. 
\begin{prop}
\label{prop55} Suppose $d=3$. Then the vorticity $\omega$ has the
functional integral representation given by
\begin{align*}
\omega(\xi,t)  =&\phi\left(\frac{\xi_{3}}{\varepsilon}\right)\sigma(\xi_{1},\xi_{2},t)+\int_{\mathbb{R}_{+}^{3}}\mathbb{E}\left[\left.1_{\{t<\zeta(X^{\eta})\}}Q(\eta,t;0)\omega^{\varepsilon}(\eta,0)\right|X_{t}^{\eta}=\xi\right]p_{\nu}(0,\eta,t,\xi)\rmd\eta\nonumber \\
 & +\int_{0}^{t}\int_{\mathbb{R}_{+}^{3}}\mathbb{E}\left[\left.1_{\left\{ s>\gamma_{t}(X^{\eta})\right\} }Q(\eta,t;s)F(X_{s}^{\eta},s)\right|X_{t}^{\eta}=\xi\right]p_{\nu}(0,\eta,t,\xi)\rmd\eta\rmd s\nonumber \\
 & +\int_{0}^{t}\int_{\mathbb{R}_{+}^{3}}\mathbb{E}\left[\left.1_{\left\{ s>\gamma_{t}(X^{\eta})\right\} }Q(\eta,t;s)\chi_{\varepsilon}(X_{s}^{\eta},s)\right|X_{t}^{\eta}=\xi\right]p_{\nu}(0,\eta,t,\xi)\rmd \eta\rmd s
\end{align*}
for every $\xi=(\xi_{1},\xi_{2},\xi_{3})\in\mathbb{R}_{+}^{3}$ and
$t>0$, where $(Q\omega^\varepsilon)^{i}=Q_{j}^{i}(\omega^\varepsilon)^{j}$ and $(QF)^{i}=Q_{j}^{i}F^{j}$.
\end{prop}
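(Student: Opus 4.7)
The plan is to reduce the statement to a direct application of Theorem \ref{thm7.2-1} applied to the cut-off vorticity $\omega^{\varepsilon}$. By construction $\omega^{\varepsilon}(x,t)=\omega(x,t)-\sigma(x_{1},x_{2},t)\phi(x_{3}/\varepsilon)$ vanishes on $\{x_{3}=0\}$, and equation (\ref{eq:ome-01}) shows that $\omega^{\varepsilon}$ satisfies
\[
\Bigl(\nu\Delta-u\cdot\nabla-\tfrac{\partial}{\partial t}\Bigr)\omega^{\varepsilon}+A\omega^{\varepsilon}+(F+\chi_{\varepsilon})=0\quad\text{in }\mathbb{R}_{+}^{3}
\]
with homogeneous Dirichlet condition on $\partial\mathbb{R}_{+}^{3}$. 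This is precisely the system covered by Theorem \ref{thm7.2-1} with $n=d=3$, $\lambda=\nu$, $b=u$, $D=\mathbb{R}_{+}^{3}$, matrix coefficient $q_{l}^{i}=A_{l}^{i}=\partial_{l}u^{i}$, and inhomogeneity $F^{i}+\chi_{\varepsilon}^{i}$.

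First I would verify that the hypotheses of Theorem \ref{thm7.2-1} are met: $u$ is divergence-free on $\mathbb{R}^{3}$ after the reflective extension, so that $p_{\nu}(0,\eta,t,\xi)$ is simultaneously the transition density of $X^{\eta}$ and the Green function of the adjoint operator; the gauge ODE (\ref{Ga-01}) for $Q=(Q_{j}^{i}(\eta,t;s))$ with $D=\mathbb{R}_{+}^{3}$ coincides with the gauge equation (\ref{Q-s01}) of the theorem (note the indicator $1_{D}(X_{s}^{\eta})$ in (\ref{Ga-01}) agrees with the convention in Theorem \ref{thm7.2-1} that $q_{l}^{i}(x,t)=0$ off $D$); and $\omega^{\varepsilon}$, $F$, $\chi_{\varepsilon}$ enjoy the needed regularity. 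Applying Theorem \ref{thm7.2-1} then yields
\begin{align*}
\omega^{\varepsilon}(\xi,t)  = & \int_{\mathbb{R}_{+}^{3}}\mathbb{E}\!\left[\left.1_{\{t<\zeta(X^{\eta})\}}Q(\eta,t;0)\omega^{\varepsilon}(\eta,0)\right|X_{t}^{\eta}=\xi\right]\!p_{\nu}(0,\eta,t,\xi)\,\rmd\eta\\
 & +\int_{0}^{t}\!\!\int_{\mathbb{R}_{+}^{3}}\mathbb{E}\!\left[\left.1_{\{s>\gamma_{t}(X^{\eta})\}}Q(\eta,t;s)\bigl(F+\chi_{\varepsilon}\bigr)(X_{s}^{\eta},s)\right|X_{t}^{\eta}=\xi\right]\!p_{\nu}(0,\eta,t,\xi)\,\rmd\eta\,\rmd s.
\end{align*}

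Next I would split the forcing integral into the $F$-piece and the $\chi_{\varepsilon}$-piece by linearity, and finally add back the cut-off correction: since $\omega(\xi,t)=\omega^{\varepsilon}(\xi,t)+\phi(\xi_{3}/\varepsilon)\sigma(\xi_{1},\xi_{2},t)$ for $\xi\in\mathbb{R}_{+}^{3}$, the representation in Proposition \ref{prop55} follows immediately. The index bookkeeping $(Q\omega^{\varepsilon})^{i}=Q_{j}^{i}(\omega^{\varepsilon})^{j}$ and $(QF)^{i}=Q_{j}^{i}F^{j}$ agrees with the convention of Theorem \ref{thm7.2-1}, so there is nothing to adjust.

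The only mild subtlety, and hence the main obstacle, is checking that the cut-off quantities $\omega^{\varepsilon}$ and $\chi_{\varepsilon}$ really do place us inside the hypotheses of Theorem \ref{thm7.2-1} (which asks for $C^{2,1}$ regularity of the inhomogeneity and a smooth solution): this reduces to regularity of $u$, $\theta$, $\sigma$ and $\theta_{0}$, which are all assumed smooth enough in the set-up. Once this is confirmed, Proposition \ref{prop55} is a direct corollary of Theorem \ref{thm7.2-1}; no further probabilistic input (duality of conditional laws, time-reversal, or the Biot-Savart law) is required here, those having been absorbed already into the statement of Theorem \ref{thm7.2-1}.
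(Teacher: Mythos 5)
Your proposal is correct and follows essentially the same route as the paper, which obtains Proposition \ref{prop55} as an immediate consequence of the transport equation (\ref{eq:ome-01}) for $\omega^{\varepsilon}$ and Theorem \ref{thm7.2-1}, with the cut-off term $\phi(\xi_{3}/\varepsilon)\sigma(\xi_{1},\xi_{2},t)$ added back at the end. Your identification of the data ($n=d=3$, $\lambda=\nu$, $b=u$, $q=A$, inhomogeneity $F+\chi_{\varepsilon}$, $D=\mathbb{R}_{+}^{3}$) and the check that the gauge ODE (\ref{Ga-01}) matches (\ref{Q-s01}) are exactly the points the paper leaves implicit.
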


These functional integral representations are consequences of the vorticity
transport equation (\ref{eq:ome-01}) and Theorem \ref{thm7.2-1}
immediately.

\subsection{Representations of the velocity and the temperature}
To avoid repetition, unlike in the unbounded domain, we state the representations of velocity and temperature for $d=2$ and $d=3$ together and establish random vortex systems in the next section separately.

Combining with the Biot-Savart laws, we may establish various integral
functional representations for the velocity and the temperature.

Firstly, using Proposition \ref{prop53} and the Biot-Savart law (cf. Lemma
\ref{lem212}), we may deduce the following.
\begin{thm}
 For the temperature $\theta$, the following functional
integral representation holds:
\begin{enumerate}[{(1)}]
\item When $d=2$,
\begin{align}
\theta(x,t) =& \frac{x_{2}}{\pi}\int_{-\infty}^{\infty}\frac{\theta_{0}(\xi_{1})}{|\xi_{1}-x_{1}|^{2}+x_{2}^{2}}\rmd\xi_{1}\nonumber \\
 & -\int_{0}^{\varepsilon}\left[\phi\left(\frac{\xi_{2}}{\varepsilon}\right)\int_{\mathbb{R}}\varLambda_{2}(\xi,x)\cdot\varTheta((\xi_{1},0),t)\rmd\xi_{1}\right]\rmd \xi_{2}\nonumber \\
 & -\int_{\mathbb{R}_{+}^{2}}\mathbb{E}\left[1_{\{t<\zeta(Y^{\eta})\}}1_{\mathbb{R}_{+}^{2}}(Y_{t}^{\eta})\varLambda_{2}(Y_{t}^{\eta},x)\cdot R(\eta,t;0)\varTheta^{\varepsilon}(\eta,0)\right]\rmd\eta\nonumber \\
 & -\int_{0}^{t}\int_{\mathbb{R}_{+}^{2}}\mathbb{E}\left[1_{\left\{ s>\gamma_{t}(Y^{\eta})\right\} }1_{\mathbb{R}_{+}^{2}}(Y_{t}^{\eta})\varLambda_{2}(Y_{t}^{\eta},x)\cdot R(\eta,t;s)\alpha_{\varepsilon}\left(Y_{s}^{\eta},s\right)\right]\rmd\eta\rmd s\label{2D-t-211}
\end{align}
for $x\in \R^2_+$ and $t>0$, where $(R\varTheta^\varepsilon)_{j}=R_{j}^{l}\varTheta^\varepsilon_{l}$.
\item When $d=3$,
\begin{align}
\theta(x,t) = &\frac{x_{3}}{2\pi}\int_{\mathbb{R}^{2}}\frac{\theta_{0}(\xi_{1},\xi_{2})}{|\xi_{1}-x_{2}|^{2}+|\xi_{2}-x_{2}|^{2}+|x_{3}|^{2}}\rmd\xi_{1}\rmd\xi_{2}\nonumber \\
 & -\int_{0}^{\varepsilon}\left[\phi\left(\frac{\xi_{3}}{\varepsilon}\right)\int_{\mathbb{R}^{2}}\varLambda_{3}(\xi,x)\cdot\varTheta(\xi_{1},\xi_{2},0,t)\rmd\xi_{1}\rmd\xi_{2}\right]\rmd\xi_{3}\nonumber \\
 & -\int_{\mathbb{R}_{+}^{3}}\mathbb{E}\left[1_{\{t<\zeta(Y^{\eta})\}}1_{\mathbb{R}_{+}^{3}}(Y_{t}^{\eta})\varLambda_{3}(Y_{t}^{\eta},x)\cdot R(\eta,t;0)\varTheta^{\varepsilon}(\eta,0)\right]\rmd\eta\nonumber \\
 & -\int_{0}^{t}\int_{\mathbb{R}_{+}^{3}}\mathbb{E}\left[1_{\left\{ s>\gamma_{t}(Y^{\eta})\right\} }1_{\mathbb{R}_{+}^{3}}(Y_{t}^{\eta})\varLambda_{3}(Y_{t}^{\eta},x)\cdot R(\eta,t;s)\alpha_{\varepsilon}\left(Y_{s}^{\eta},s\right)\right]\rmd\eta\rmd s\label{3D-t-211}
\end{align}
for $x\in \R^3_+$ and $t>0$, where $(R\varTheta^\varepsilon)_{j}=R_{j}^{l}\varTheta^\varepsilon_{l}$. 

\end{enumerate}
\end{thm}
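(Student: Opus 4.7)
The plan is to combine the Biot-Savart type representation for the temperature from Lemma \ref{lem212} with the functional integral representation for the temperature gradient $\varTheta$ given in Proposition \ref{prop53}, and then apply Fubini's theorem to interchange the spatial integration against the kernel $\varLambda_{d}$ with the expectations. Because the proof is the same in structure for $d=2$ and $d=3$, I would write it uniformly and specialise at the end.

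First I would invoke Lemma \ref{lem212} to write
\[
\theta(x,t)=-\int_{\mathbb{R}_{+}^{d}}\varLambda_{d}(\xi,x)\cdot\varTheta(\xi,t)\rmd\xi + P_d[\theta_0](x),
\]
where $P_d[\theta_0](x)$ denotes the half-space Poisson-type integral of the boundary trace $\theta_{0}$, which already matches the leading boundary term in both (\ref{2D-t-211}) and (\ref{3D-t-211}). Next I would substitute the three-term expression for $\varTheta(\xi,t)$ from Proposition \ref{prop53} into the integrand. The cutoff term $\phi(\xi_{d}/\varepsilon)\varTheta(\xi_{1},\ldots,\xi_{d-1},0,t)$, being supported in $\xi_{d}\in[0,2\varepsilon/3]$, produces the iterated integral $\int_{0}^{\varepsilon}\phi(\xi_{d}/\varepsilon)\int_{\mathbb{R}^{d-1}}\varLambda_{d}(\xi,x)\cdot\varTheta(\cdots,0,t)\rmd\xi_{1}\cdots\rmd\xi_{d-1}\rmd\xi_{d}$, which is exactly the second term in the claimed representations.

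For the two remaining terms, which are integrals over $\eta\in\mathbb{R}_{+}^{d}$ against $p_{\kappa}(0,\eta,t,\xi)$ of conditional expectations given $Y_{t}^{\eta}=\xi$, I would apply Fubini's theorem to swap the order of the $\xi$ integration with the $\eta$ integration and the expectation. Using the elementary identity
\[
\int_{\mathbb{R}_{+}^{d}}\mathbb{E}\!\left[G\mid Y_{t}^{\eta}=\xi\right]p_{\kappa}(0,\eta,t,\xi)\rmd\xi=\mathbb{E}\!\left[G\,1_{\mathbb{R}_{+}^{d}}(Y_{t}^{\eta})\right]
\]
for any integrable functional $G$ of the path $Y^{\eta}$, and pulling the factor $\varLambda_{d}(\xi,x)=\varLambda_{d}(Y_{t}^{\eta},x)$ inside the expectation (since it depends on $\xi$ only through the terminal value), I would obtain precisely the last two expectation integrals of (\ref{2D-t-211}) and (\ref{3D-t-211}), with the indicators $1_{\{t<\zeta(Y^{\eta})\}}$ and $1_{\{s>\gamma_{t}(Y^{\eta})\}}$ preserved from Proposition \ref{prop53} and the extra $1_{\mathbb{R}_{+}^{d}}(Y_{t}^{\eta})$ produced by the Fubini step.

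The only delicate point, and arguably the main technical obstacle, is justifying the interchange of integration and expectation in the presence of the singular kernel $\varLambda_{d}(\xi,x)$, whose singularity at $\xi=x$ is of order $|\xi-x|^{-(d-1)}$. I would handle this by using the Gaussian-type upper bounds on $p_{\kappa}(0,\eta,t,\xi)$ (valid because the drift $u$ is bounded) together with the $L^{1}_{\mathrm{loc}}$ integrability of $\varLambda_{d}$ to see that $\int\!\!\int|\varLambda_{d}(\xi,x)|\,p_{\kappa}(0,\eta,t,\xi)\rmd\xi\rmd\eta<\infty$, and by the boundedness (on bounded time intervals) of the gauge functional $R(\eta,t;s)$ and the regularity assumed on $\varTheta^{\varepsilon}(\eta,0)$ and $\alpha_{\varepsilon}$ inherited from the smooth cutoff $\phi$. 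Once Fubini is justified, the assembly of the three resulting integrals with the boundary Poisson term yields the claimed representation in both dimensions.
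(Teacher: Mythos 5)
Your proposal is correct and is essentially the paper's own argument: the paper derives this theorem in one line by citing exactly the two ingredients you use, namely the half-space Biot--Savart law of Lemma \ref{lem212} for $\theta$ in terms of $\varTheta$ and the boundary Poisson integral, followed by substitution of the representation of $\varTheta$ from Proposition \ref{prop53} and an interchange of the $\xi$-integration with the conditional expectations. Your additional justification of the Fubini step (local integrability of $\varLambda_{d}$ together with Gaussian bounds on $p_{\kappa}$ and boundedness of the gauge $R$) is a reasonable elaboration of a point the paper leaves implicit.
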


Similarly, by using the Biot-Savart law (cf. Lemma \ref{lem211}),
Proposition \ref{prop54} and Proposition \ref{prop55}, we may establish the
following functional integral representations.
\begin{thm}
Suppose $d=2$. Then for every $\varepsilon>0$
\begin{align}
u(x,t)  =&\int_{0}^{\varepsilon}\left[\phi\left(\frac{\xi_{2}}{\varepsilon}\right)\int_{-\infty}^{\infty}\varLambda_{2}(\xi,x)\wedge\sigma(\xi_{1},t)\rmd \xi_1\right]\rmd \xi_{2}\nonumber \\
 & +\int_{\mathbb{R}_{+}^{2}}\mathbb{E}\left[1_{\{t<\zeta(X^{\eta})\}}1_{\mathbb{R}_{+}^{2}}(X_{t}^{\eta})\varLambda_{2}(X_{t}^{\eta},x)\wedge\omega^{\varepsilon}(\eta,0)\right]\rmd \eta\nonumber \\
 & +\int_{0}^{t}\int_{\mathbb{R}_{+}^{2}}\mathbb{E}\left[1_{\left\{ s>\gamma_{t}(X^{\eta})\right\} }1_{\mathbb{R}_{+}^{2}}(X_{t}^{\eta})\varLambda_{2}(X_{t}^{\eta},x)\wedge F(X_{s}^{\eta},s)\right]\rmd\eta\rmd s\nonumber \\
 & +\int_{0}^{t}\int_{\mathbb{R}_{+}^{2}}\mathbb{E}\left[1_{\left\{ s>\gamma_{t}(X^{\eta})\right\} }1_{\mathbb{R}_{+}^{2}}(X_{t}^{\eta})\varLambda_{2}(X_{t}^{\eta},x)\wedge\chi_{\varepsilon}(X_{s}^{\eta},s)\right]\rmd \eta\rmd s\label{2Du01}
\end{align}
for $x\in\mathbb{R}_{+}^{2}$ and $t>0$, and $u(x,t)=\mathscr{R}(u(\overline{x},t))$
if $x_{2}<0$.
\end{thm}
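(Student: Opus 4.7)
The plan is to combine the Biot-Savart law from Lemma \ref{lem211} with the integral representation for the vorticity $\omega$ derived in Proposition \ref{prop54}, and then simplify the resulting double integrals by invoking Fubini's theorem and the tower property of conditional expectation. Since the extended velocity $u(\cdot,t)$ is divergence-free on $\mathbb{R}^{2}$ and satisfies $u = 0$ on $\{x_2 = 0\}$, Lemma \ref{lem211} gives
\[
u(x,t) = \int_{\mathbb{R}_{+}^{2}} \varLambda_{2}(\xi,x) \wedge \omega(\xi,t)\, \rmd \xi
\]
for $x\in\mathbb{R}_+^2$. The goal is to substitute the four-term representation of $\omega(\xi,t)$ from Proposition \ref{prop54} into this integral and carry each term through.

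The first (non-random) term $\phi(\xi_2/\varepsilon)\sigma(\xi_1,t)$ yields, after Fubini,
\[
\int_{0}^{\infty}\phi\!\left(\tfrac{\xi_{2}}{\varepsilon}\right)\int_{-\infty}^{\infty} \varLambda_{2}(\xi,x) \wedge \sigma(\xi_{1},t)\, \rmd \xi_{1}\, \rmd \xi_{2},
\]
and since $\phi(r) = 0$ for $r \geq 2/3$, the outer integration can be restricted to $[0,\varepsilon]$, matching the first term in the claimed formula. For each of the three remaining terms of the vorticity representation, the structure is the same: an integral over $\eta \in \mathbb{R}_{+}^{2}$ of an expression of the form $\mathbb{E}[Z_\eta \mid X_{t}^{\eta} = \xi]\, p_{\nu}(0,\eta,t,\xi)$ where $Z_\eta$ is an integrable functional of the path $X^\eta$. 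For such an expression, Fubini combined with the disintegration identity
\[
\int_{\mathbb{R}_{+}^{2}} h(\xi)\, \mathbb{E}\!\left[Z_\eta \mid X_{t}^{\eta} = \xi\right] p_{\nu}(0,\eta,t,\xi)\, \rmd \xi = \mathbb{E}\!\left[h(X_{t}^{\eta})\, 1_{\mathbb{R}_{+}^{2}}(X_{t}^{\eta})\, Z_\eta\right]
\]
applied with $h(\xi) = \varLambda_2(\xi,x)$ lets us collapse the outer $\xi$-integration into the expectation, producing exactly the remaining three expectation terms in the asserted formula; the indicator $1_{\mathbb{R}_{+}^{2}}(X_t^\eta)$ arises because the outer integral was taken only over $\mathbb{R}_{+}^{2}$, while $p_\nu$ is the full-space transition density.

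The only genuinely subtle point is justifying the use of Fubini and the disintegration identity, because $\varLambda_2(\xi,x)$ is singular at $\xi = x$ and at $\xi = \bar x$. The standard remedy is to work with a smooth mollification or cutoff near the singularity, carry out the swap of order of integration for the regularised quantity, and then pass to the limit using the local integrability of $\varLambda_2$ together with the boundedness of $\omega$, $F$, $\chi_\varepsilon$ and the transition density $p_\nu$ (or the exponential moments of the gauge functionals $Q$ when used in the $d=3$ analogue). Once Fubini is secured, the assembly of the four pieces is direct. The final statement $u(x,t) = \mathscr{R}(u(\bar x, t))$ for $x_2 < 0$ is immediate from the reflection extension of $u$ already adopted in Section 6.2, so no further computation is needed.
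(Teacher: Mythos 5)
Your proposal is correct and follows exactly the route the paper intends: the paper states this theorem as a direct consequence of Lemma \ref{lem211} and Proposition \ref{prop54}, and your substitution of the four-term vorticity representation into the half-space Biot--Savart formula, followed by Fubini and the disintegration identity that converts $\int_{\mathbb{R}_{+}^{2}}h(\xi)\,\mathbb{E}[Z_{\eta}\,|\,X_{t}^{\eta}=\xi]\,p_{\nu}(0,\eta,t,\xi)\,\rmd\xi$ into $\mathbb{E}[h(X_{t}^{\eta})1_{\mathbb{R}_{+}^{2}}(X_{t}^{\eta})Z_{\eta}]$, is precisely the computation the authors carry out explicitly for the unbounded analogue (Theorem \ref{thm6.4}). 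Your remarks on the support of $\phi$ restricting the boundary term to $[0,\varepsilon]$, on the origin of the indicator $1_{\mathbb{R}_{+}^{2}}(X_{t}^{\eta})$, and on justifying Fubini via the local integrability of the singular kernel are all accurate supplements to the paper's terser treatment.
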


In dimension 3, the representation of the velocity is much more
complicated. 
\begin{thm}
Suppose $d=3$. For every $\varepsilon>0$, the velocity $u(x,t)$ has the following
functional integral representation: 
\begin{align}
u(x,t)  =&\int_{0}^{\varepsilon}\left(\phi\left(\frac{\xi_{3}}{\varepsilon}\right)\int_{\mathbb{R}^{2}}\varLambda_{3}(\xi,x)\wedge\sigma(\xi_{1},\xi_{2},t)\rmd \xi_{1}\rmd \xi_{2}\right)\rmd \xi_{3}\nonumber \\
 & +\int_{\mathbb{R}_{+}^{3}}\mathbb{E}\left[1_{\{t<\zeta(X^{\eta})\}}1_{\mathbb{R}_{+}^{3}}(X_{t}^{\eta})\varLambda_{3}(X_{t}^{\eta},x)\wedge Q(\eta,t;0)\omega^{\varepsilon}(\eta,0)\right]\rmd \eta\nonumber \\
 & +\int_{0}^{t}\int_{\mathbb{R}_{+}^{3}}\mathbb{E}\left[1_{\left\{ s>\gamma_{t}(X^{\eta})\right\} }1_{\mathbb{R}_{+}^{3}}(X_{t}^{\eta})\varLambda_{3}(X_{t}^{\eta},x)\wedge Q(\eta,t;s)F(X_{s}^{\eta},s)\right]\rmd \eta\rmd s\nonumber \\
 & +\int_{0}^{t}\int_{\mathbb{R}_{+}^{3}}\mathbb{E}\left[1_{\left\{ s>\gamma_{t}(X^{\eta})\right\} }1_{\mathbb{R}_{+}^{3}}(X_{t}^{\eta})\varLambda_{3}(X_{t}^{\eta},x)\wedge Q(\eta,t;s)\chi_{\varepsilon}(X_{s}^{\eta},s)\right]\rmd\eta\rmd s.\label{3D-ua1}
\end{align}
for $x\in\mathbb{R}_{+}^{3}$ and $t>0$, and $u(x,t)=\mathscr{R}(u(\overline{x},t))$
if $x_{3}<0$, where the gauge functional $Q$ is defined by (\ref{Ga-01})
with $D=\mathbb{R}_{+}^{3}$, $(Q\omega^\varepsilon)^{i}=Q_{j}^{i}(\omega^\varepsilon)^{j}$ and $(QF)^{i}=Q_{j}^{i}F^{j}$.
\end{thm}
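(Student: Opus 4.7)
The plan is to combine the half-space Biot-Savart law of Lemma \ref{lem211} with the functional integral representation of the vorticity given in Proposition \ref{prop55}, then use Fubini's theorem and a standard conditioning identity to convert the $d\xi$-integrals against conditional expectations into unconditional expectations of $\varLambda_{3}(X_{t}^{\eta},x)$.

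First, since the extended velocity $u(\cdot,t)$ is divergence-free on $\mathbb{R}^{3}$ and vanishes on $\{x_{3}=0\}$ by the no-slip condition, Lemma \ref{lem211} gives, for $x\in\mathbb{R}_{+}^{3}$,
\[
u(x,t)=\int_{\mathbb{R}_{+}^{3}}\varLambda_{3}(\xi,x)\wedge\omega(\xi,t)\,\rmd\xi.
\]
Next I would substitute the four-term representation of $\omega(\xi,t)$ from Proposition \ref{prop55} into this integral. The first term contributes
\[
\int_{\mathbb{R}_{+}^{3}}\varLambda_{3}(\xi,x)\wedge\phi\!\left(\tfrac{\xi_{3}}{\varepsilon}\right)\sigma(\xi_{1},\xi_{2},t)\,\rmd\xi=\int_{0}^{\varepsilon}\phi\!\left(\tfrac{\xi_{3}}{\varepsilon}\right)\!\!\int_{\mathbb{R}^{2}}\varLambda_{3}(\xi,x)\wedge\sigma(\xi_{1},\xi_{2},t)\,\rmd\xi_{1}\rmd\xi_{2}\,\rmd\xi_{3},
\]
using that $\phi(\cdot/\varepsilon)$ is supported in $[0,2\varepsilon/3]\subset[0,\varepsilon]$; this produces the first line of \eqref{3D-ua1}.

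For the other three terms, the key identity is that for any bounded measurable $h$ on $\mathbb{R}^{3}$ and any $\mathcal{F}_{t}^{X}$-measurable random variable $Z$,
\[
\int_{\mathbb{R}^{3}}\mathbb{E}\!\left[\left.Z\right|X_{t}^{\eta}=\xi\right]p_{\nu}(0,\eta,t,\xi)\,h(\xi)\,\rmd\xi=\mathbb{E}\!\left[Z\,h(X_{t}^{\eta})\right],
\]
which is just the definition of conditional expectation together with the fact that $p_{\nu}(0,\eta,t,\cdot)$ is the density of $X_{t}^{\eta}$. Applying this with $h(\xi)=1_{\mathbb{R}_{+}^{3}}(\xi)\varLambda_{3}(\xi,x)$ (the indicator $1_{\mathbb{R}_{+}^{3}}(X_{t}^{\eta})$ appearing in \eqref{3D-ua1} comes precisely from restricting the outer integral over $\mathbb{R}_{+}^{3}$, though on the events $\{t<\zeta(X^{\eta})\}$ and $\{s>\gamma_{t}(X^{\eta})\}$ it is in any case implied by the trajectory lying in $D$ at time $t$), and $Z$ equal in turn to $1_{\{t<\zeta(X^{\eta})\}}Q(\eta,t;0)\omega^{\varepsilon}(\eta,0)$, $1_{\{s>\gamma_{t}(X^{\eta})\}}Q(\eta,t;s)F(X_{s}^{\eta},s)$, and $1_{\{s>\gamma_{t}(X^{\eta})\}}Q(\eta,t;s)\chi_{\varepsilon}(X_{s}^{\eta},s)$, converts the three conditional-expectation lines of Proposition \ref{prop55} into the last three lines of \eqref{3D-ua1}. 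Fubini's theorem is then used to interchange $\rmd\xi$ with $\rmd\eta$, $\rmd s$, and the expectation; justification requires only that the triple integrals be absolutely convergent, which follows from boundedness of the gauge $Q$, the forcing $F$ and the correction $\chi_{\varepsilon}$ together with the standard integrability of $\varLambda_{3}$ against Gaussian-type densities.

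The remaining claim that $u(x,t)=\mathscr{R}(u(\overline{x},t))$ for $x_{3}<0$ is just the odd reflection we imposed when extending $u$ to all of $\mathbb{R}^{3}$. The main technical obstacle is the integrability step for the $\Lambda_{3}$ kernel, especially for the $\chi_{\varepsilon}$-term, whose explicit form \eqref{eq:oe-er-01} contains factors $\varepsilon^{-1}\phi'$ and $\varepsilon^{-2}\phi''$ localised near the wall; however, since these factors are bounded pointwise in $x$ (for fixed $\varepsilon>0$) and are supported in the compact strip $\{0\le x_{3}\le 2\varepsilon/3\}$, Fubini applies without further care, so the argument is essentially bookkeeping. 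The application to an unchanged kernel $\varLambda_{3}(X_{t}^{\eta},x)$ rather than $\varLambda_{3}(\xi,x)$ is the content of the conditioning identity above; no additional use of the duality of Theorem \ref{thm7.2-1} is required at this point because that duality is already built into Proposition \ref{prop55}.
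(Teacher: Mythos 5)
Your proof is correct and follows exactly the route the paper intends: it sketches this theorem as a consequence of the half-space Biot--Savart law (Lemma \ref{lem211}) applied to the vorticity representation of Proposition \ref{prop55}, with the conditioning identity $\int\mathbb{E}[Z\mid X_{t}^{\eta}=\xi]\,p_{\nu}(0,\eta,t,\xi)h(\xi)\,\rmd\xi=\mathbb{E}[Z\,h(X_{t}^{\eta})]$ and Fubini doing the bookkeeping, just as in the proof of Theorem \ref{thm6.4}. Your handling of the cut-off support for the boundary term and of the indicator $1_{\mathbb{R}_{+}^{3}}(X_{t}^{\eta})$ arising from the restricted outer integral matches the statement, so nothing further is needed.
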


\subsection{Random vortex dynamics}

With the stochastic representation formulae derived in the previous subsections, we are ready to establish a system of random vortex dynamics for wall-bounded
flows. 

\subsubsection{\label{subsec651}2D random vortex for wall-bounded flows}

In this part, let us work out the simpler scenario, the two-dimensional case.
When $d=2$, the interaction force due to the heat conduction is given by 
\[
f_{1}(x,t)=0,\quad\textrm{ and }f_{2}(x,t)=\theta(x,t)-\theta_{0}(x_{1})
\]
(the constant factor $\alpha g$ is absorbed into the temperature $\theta$
for simplicity), so that
\begin{equation}
F(x,t)=\varTheta_{1}(x,t)-\frac{\partial}{\partial x_{1}}\theta_{0}(x_{1}).\label{2D-F01}
\end{equation}

The random vortex dynamics may be defined as the following:
\begin{itemize}
    \item Eq. (\ref{X-P-01}) for $X$ particles and Eq. (\ref{Y-P-01}) for
    $Y$ particles, where in both equations, the velocity is extended by reflection
    so that $u(x,t)=0$ when $x_{2}=0$, and $u^{1}(x,t)=u^{1}(\overline{x},t)$,
    $u^{2}(x,t)=-u^{2}(\overline{x},t)$. 
    \item Eq. (\ref{Ga-02}) with
    $D=\mathbb{R}_{+}^{2}$ that defines the gauge functional $R$.
    \item The
    representation \eqref{2Du01} for the velocity and the representation
    (\ref{2D-t-211}) for the temperature $\theta(x,t)$.
    \item Other dynamical
    variables given by
    \begin{equation}
    A_{j}^{l}(x,t)=\frac{\partial}{\partial x_{j}}u^{l}(x,t),\quad F(x,t)=\frac{\partial}{\partial x_{1}}\theta(x,t)-\frac{\partial}{\partial x_{1}}\theta_{0}(x_{1}),\label{2D-others01}
    \end{equation}
    which lead to the stochastic representations for $A_{j}^{l}$ and $F$, obtained by differentiating the functional integral representations
    for $u^{l}$ and $\theta$ respectively. 
\end{itemize}

During numerical experiments, we may choose $\varepsilon>0$ sufficiently small, so that the contribution from 
\[
-\int_{0}^{\varepsilon}\left[\phi\left(\frac{\xi_{2}}{\varepsilon}\right)\int_{-\infty}^\infty\varLambda_{2}(\xi,x)\cdot\varTheta((\xi_{1},0),t)\rmd \xi_{1}\right]\rmd \xi_{2}
\]
is negligible. Also, by the definition of the Oberbeck-Boussinesq
model, the temperature gradient at the boundary is small, so it is reasonable to neglect
the contribution from 
\[
-\int_{0}^{t}\int_{\mathbb{R}_{+}^{2}}\mathbb{E}\left[1_{\left\{ s>\gamma_{t}(Y^{\eta})\right\} }1_{\mathbb{R}_{+}^{2}}(Y_{t}^{\eta})\varLambda_{2}(Y_{t}^{\eta},x)\cdot R(\eta,t;s)\alpha_{\varepsilon}\left(Y_{s}^{\eta},s\right)\right]\rmd \eta\rmd s.
\]
 Therefore, by sending $\varepsilon\downarrow0$,
we may calculate the temperature via the formula:
\begin{align}
\theta(x,t) \sim& \frac{x_{2}}{\pi}\int_{-\infty}^{\infty}\frac{\theta_{0}(\xi_{1})}{|\xi_{1}-x_{1}|^{2}+x_{2}^{2}}\rmd \xi_{1}\nonumber \\
 & -\int_{\mathbb{R}_{+}^{2}}\mathbb{E}\left[1_{\{t<\zeta(Y^{\eta})\}}1_{\mathbb{R}_{+}^{2}}(Y_{t}^{\eta})\varLambda_{2}(Y_{t}^{\eta},x)\cdot R(\eta,t;0)\varTheta(\eta,0)\right]\rmd\eta.\label{t-app-01}
\end{align}
Similarly, for the velocity $u(x,t)$, the contribution, when $\varepsilon>0$
is small enough, we may ignore the minor contribution from 
\[
\int_{0}^{\varepsilon}\left[\phi\left(\frac{\xi_{2}}{\varepsilon}\right)\int_{-\infty}^{\infty}\varLambda_{2}(\xi,x)\wedge\sigma(\xi_{1},t)\rmd \xi\right]\rmd\xi_{2}.
\]
Furthermore, by sending $\varepsilon\downarrow0$, we obtained the limiting representation for $u(x,t)$:
\begin{align}
u(x,t)  =&\int_{\mathbb{R}_{+}^{2}}\mathbb{E}\left[1_{\{t<\zeta(X^{\eta})\}}1_{\mathbb{R}_{+}^{2}}(X_{t}^{\eta})\varLambda_{2}(X_{t}^{\eta},x)\wedge\omega(\eta,0)\right]\rmd \eta\nonumber \\
 & +\int_{0}^{t}\int_{\mathbb{R}_{+}^{2}}\mathbb{E}\left[1_{\left\{ s>\gamma_{t}(X^{\eta})\right\} }1_{\mathbb{R}_{+}^{2}}(X_{t}^{\eta})\varLambda_{2}(X_{t}^{\eta},x)\wedge F(X_{s}^{\eta},s)\right]\rmd \eta\rmd s\nonumber \\
 & +\lim_{\varepsilon\downarrow0}\int_{0}^{t}\int_{\mathbb{R}_{+}^{2}}\mathbb{E}\left[1_{\left\{ s>\gamma_{t}(X^{\eta})\right\} }1_{\mathbb{R}_{+}^{2}}(X_{t}^{\eta})\varLambda_{2}(X_{t}^{\eta},x)\wedge\chi_{\varepsilon}(X_{s}^{\eta},s)\right]\rmd \eta\rmd s,\label{u-app-001}
\end{align}
where $\chi_{\varepsilon}$ can be replaced by
\begin{equation}
\tilde{\chi}_{\varepsilon}(x,t)=\nu\frac{1}{\varepsilon^{2}}\phi''\left(\frac{x_{2}}{\varepsilon}\right)\sigma(x_{1},t),\label{error-app-001}
\end{equation}
and $\sigma$ is the trace of $\omega$ on the boundary $\{x_2=0\}$.

We will justify this procedure in Subsection \ref{subsec66} below
by showing that the limit of each term on the right-hand side of (\ref{u-app-001})
exists.

\subsubsection{3D random vortex for wall-bounded flows}

The computation is analogous to the previous part, except we need to include the gauge functional
$Q$ now. 

Hence the random vortex dynamics in dimension three can be characterised by the following system of equations:
\begin{itemize}
    \item The equation for $X$ particle by (\ref{X-P-01}), the equation for $Y$ particle by (\ref{Y-P-01}),
    where in both equations, the velocity is extended by reflection so that $u(x,t)=0$
    when $x_{3}=0$, and $u^{1}(x,t)=u^{1}(\overline{x},t)$, $u^{1}(x,t)=u^{1}(\overline{x},t)$,
    $u^{3}(x,t)=-u^{3}(\overline{x},t)$.
    \item  The equation that defines the
    gauge functional $Q$ and $R$, (\ref{Ga-01}) and (\ref{Ga-02})
    with $D=\mathbb{R}_{+}^{3}$.
    \item The representation \eqref{3D-ua1} for the velocity
    and the representation \eqref{3D-t-211} for the temperature
    $\theta(x,t)$.
    \item  Other dynamical variables given by
    \begin{align*}    
    A_{j}^{l}(x,t) &=\frac{\partial}{\partial x_{j}}u^{l}(x,t), 
    &F^{1}(x,t)&=\frac{\partial}{\partial x_{2}}\theta(x,t)-\frac{\partial}{\partial x_{2}}\theta_{0}(x_{1},x_{2}),\\
    F^{2}(x,t)&=-\frac{\partial}{\partial x_{1}}\theta(x,t)+\frac{\partial}{\partial x_{1}}\theta_{0}(x_{1},x_{2}),
    &F^{3}(x,t)&=0,
    \end{align*}
    which give rise to the representations for $A_{j}^{l}$ and $F$ by differentiating the functional integral representations for $u^{l}$
    and $\theta$. 
\end{itemize}


\subsection{\label{subsec66}Limiting representations}

In the previous parts, we have used the families of perturbations
of the vorticity, the temperature and the temperature gradient to
establish the random vortex dynamical systems. For the purpose of numerical
simulations, we need to choose the parameter $\varepsilon>0$ small
enough so that the outer layer velocity of the fluid flow can be avoided.
In this part, we demonstrate that the limiting representations as $\varepsilon\downarrow0$
 exist, which will justify the approximations \eqref{t-app-01} and \eqref{u-app-001}. 

Let us consider the two-dimensional case. Recall that the velocity
$u(x,t)$ of the fluid flow is extended for all $x\in\mathbb{R}^{2}$,
so that $\mathscr{R}(u(x,t))=u(\overline{x},t)$. Hence 
\[
p_{\lambda,u}^{D}(s,\xi,t,x)=p_{\lambda,u}(\tau,\xi,t,x)-p_{\lambda,u}(\tau,\overline{\xi},t,x)
\]
is the Green function to the Dirichlet problem of the forward parabolic
operator $L_{\lambda,-u}-\frac{\partial}{\partial t}$, where $D=\mathbb{R}_{+}^{2}$. In the computation below, we drop $u$ in the notation of transition probabilities for simplicity. 
By using the vorticity transport equation (\ref{eq:ome-01})
for $\omega^{\varepsilon}(x,t)$ and  the temperature transport
equation (\ref{eq:Tem-e-01}) for $\theta^{\varepsilon}(x,t)$, according to \citep[Theorem 12 on page 25]{Friedman 1964}, we obtain that
\begin{align}
\omega^{\varepsilon}(x,t)  =&\int_{D}p_{\nu}^{D}(0,\eta,t,x)\omega^{\varepsilon}(\eta,0)\rmd\eta+\int_{0}^{t}\int_{D}p_{\nu}^{D}(s,\eta,t,x)F(\eta,s)\rmd \eta\rmd s\nonumber \\
 & +\int_{0}^{t}\int_{D}p_{\nu}^{D}(s,\eta,t,x)\chi_{\varepsilon}(\eta,s)\rmd \eta\rmd s\label{oe-81}
\end{align}
and
\begin{equation}
\theta^{\varepsilon}(x,t)=\int_{D}p_{\kappa}^{D}(0,\eta,t,x)\theta^{\varepsilon}(\eta,0)\rmd \eta+\int_{0}^{t}\int_{D}p_{\kappa}^{D}(s,\eta,t,x)\beta_{\varepsilon}(\eta,s)\rmd \eta\rmd s\label{eq:t-rep-01}
\end{equation}
for every $x\in D$ and $t>0$. 
Before we present the proof, let us first introduce a new notation and a helpful result that apply to all dimensions. 
\begin{defn}
If $\varphi$ is a Borel measurable function on $\mathbb{R}_{+}^{d}$,
then define 
\begin{equation}
\hat{\varphi}(x)=1_{\{x_{d}>0\}}\varphi(x)-1_{\{x_{d}<0\}}\varphi(\overline{x})\label{eq:def-hat}
\end{equation}
for every $x\in\mathbb{R}^{d}$.
\end{defn}

\begin{lem}
Let $D=\mathbb{R}_{+}^{d}$ and $\lambda>0$. Then
it holds that
\[
\int_{D}p_{\lambda}^{D}(s,\eta,t,x)\varphi(\eta)\rmd\eta=\int_{\mathbb{R}^{d}}p_{\lambda,u}(s,\eta,t,x)\hat{\varphi}(\eta)\rmd\eta
\]
for every $x\in D$ and $t>s\geq0$.
\end{lem}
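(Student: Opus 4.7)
The plan is to reduce the left-hand side directly to the right-hand side by unfolding the explicit reflection formula for $p_{\lambda}^{D}$ and then performing a change of variables. Since $u(x,t)$ has been extended to all of $\mathbb{R}^{d}$ via reflection so that $\mathscr{R}(u(x,t))=u(\bar x,t)$, the preceding identity \eqref{eq:ST-pdf-01} gives
\[
p_{\lambda}^{D}(s,\eta,t,x)=p_{\lambda,u}(s,\eta,t,x)-p_{\lambda,u}(s,\bar\eta,t,x)
\qquad\text{for }\eta,x\in D,
\]
where I suppress the subscript $u$ as in the excerpt. I would then substitute this into the left-hand side and split it into two pieces.

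First I would write
\[
\int_{D}p_{\lambda}^{D}(s,\eta,t,x)\varphi(\eta)\,\rmd\eta
=\int_{D}p_{\lambda}(s,\eta,t,x)\varphi(\eta)\,\rmd\eta
-\int_{D}p_{\lambda}(s,\bar\eta,t,x)\varphi(\eta)\,\rmd\eta.
\]
In the second integral I would perform the change of variable $\eta'=\bar\eta$, so that $\eta'\in\{y\in\mathbb{R}^{d}:y_{d}<0\}$ and $\varphi(\eta)=\varphi(\overline{\eta'})$, with Jacobian equal to one. This transforms the right-hand side into
\[
\int_{\{y_{d}>0\}}p_{\lambda}(s,y,t,x)\varphi(y)\,\rmd y
-\int_{\{y_{d}<0\}}p_{\lambda}(s,y,t,x)\varphi(\bar y)\,\rmd y.
\]

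Recognising the integrand as $1_{\{y_{d}>0\}}\varphi(y)-1_{\{y_{d}<0\}}\varphi(\bar y)=\hat\varphi(y)$ by the definition \eqref{eq:def-hat}, the two pieces combine into a single integral over $\mathbb{R}^{d}$, giving the claimed identity. The set $\{y_{d}=0\}$ is Lebesgue null and hence contributes nothing, so the decomposition $\mathbb{R}^{d}=\{y_{d}>0\}\cup\{y_{d}<0\}$ up to a null set is harmless.

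The argument is essentially a one-line computation once the reflection formula is in hand, and no obstacle is expected; the only point requiring care is the measurability and integrability of $\hat\varphi$ and $\varphi$ against $p_{\lambda}$, which follows from the assumed Borel measurability of $\varphi$ together with the Gaussian-type upper bounds on $p_{\lambda}$ (equivalently, one may first establish the identity for bounded $\varphi$ with compact support and extend by a standard monotone-class / approximation argument).
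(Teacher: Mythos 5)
Your proof is correct and follows essentially the same route as the paper: substitute the reflection identity \eqref{eq:ST-pdf-01} for $p_{\lambda}^{D}$, change variables $\eta\mapsto\bar\eta$ in the second term, and recognise the combined integrand as $\hat\varphi$. The extra remarks on measurability and integrability are harmless but not needed beyond what the paper assumes.
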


\begin{proof}
Since $\nabla\cdot u=0$ on $\mathbb{R}^{d}$, by Eq. (\ref{eq:ST-pdf-01}),
\begin{align*}
\int_{D}p_{\lambda}^{D}(s,\eta,t,x)\varphi(\eta)\rmd\eta =&\int_{\mathbb{R}^{d}}p_{\lambda}(s,\eta,t,x)1_{\{\eta_{d}>0\}}\varphi(\eta)\rmd\eta
  -\int_{\mathbb{R}^{d}}p_{\lambda}(s,\overline{\eta},t,x)1_{\{\eta_{d}>0\}}\varphi(\eta)\rmd\eta\\
 = &\int_{\mathbb{R}^{d}}p_{\lambda}(s,\eta,t,x)1_{\{\eta_{d}>0\}}\varphi(\eta)\rmd\eta
  -\int_{\mathbb{R}^{d}}p_{\lambda}(s,\eta,t,x)1_{\{\eta_{d}<0\}}\varphi(\overline{\eta})\rmd\eta,
\end{align*}
which completes the proof.
\end{proof}
Therefore we may rewrite the representations (\ref{oe-81}) and (\ref{eq:t-rep-01})
and obtain the following lemma.
\begin{lem}
For every $\varepsilon>0$ we have
\begin{align}
\omega^{\varepsilon}(x,t)  =&\int_{\mathbb{R}^{2}}p_{\nu}(0,\eta,t,x)\hat{\omega^{\varepsilon}}(\eta,0)\rmd\eta+\int_{0}^{t}\int_{\mathbb{R}^{2}}p_{\nu}(s,\eta,t,x)\hat{F}(\eta,s)\rmd\eta\rmd s\nonumber \\
 & +\int_{0}^{t}\int_{\mathbb{R}^{2}}p_{\nu}(s,\eta,t,x)\hat{\chi_{\varepsilon}}(\eta,s)\rmd\xi\rmd s\label{oe-81-1}
\end{align}
and
\[
\theta^{\varepsilon}(x,t)=\int_{\mathbb{R}^{2}}p_{\kappa}(0,\eta,t,x)\hat{\theta^{\varepsilon}}(\eta,0)\rmd\eta+\int_{0}^{t}\int_{\mathbb{R}^{2}}p_{\kappa}(s,\eta,t,x)\hat{\beta_{\varepsilon}}(\eta,s)\rmd\eta\rmd s
\]
for every $x\in\mathbb{R}_{+}^{2}$ and $t>0$. 
\end{lem}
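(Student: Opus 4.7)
The plan is to reduce the claim directly to the previous lemma, applied pointwise in the time variable, starting from the Dirichlet-type representations \eqref{oe-81} and \eqref{eq:t-rep-01} already established via \cite[Theorem 12, p.\ 25]{Friedman 1964}. The key observation is that the right-hand sides of \eqref{oe-81} and \eqref{eq:t-rep-01} are integrals against the killed transition density $p_{\lambda}^{D}$ on $D=\mathbb{R}_{+}^{2}$ of the functions $\omega^{\varepsilon}(\cdot,0)$, $F(\cdot,s)$, $\chi_{\varepsilon}(\cdot,s)$, $\theta^{\varepsilon}(\cdot,0)$ and $\beta_{\varepsilon}(\cdot,s)$, respectively, so the preceding lemma applies verbatim.

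Concretely, first I would rewrite the first term of \eqref{oe-81} by invoking the preceding lemma with $\lambda=\nu$, $s=0$ and $\varphi(\eta)=\omega^{\varepsilon}(\eta,0)$, obtaining
\[
\int_{D}p_{\nu}^{D}(0,\eta,t,x)\omega^{\varepsilon}(\eta,0)\rmd\eta=\int_{\mathbb{R}^{2}}p_{\nu}(0,\eta,t,x)\widehat{\omega^{\varepsilon}}(\eta,0)\rmd\eta,
\]
which uses only $\nabla\cdot u=0$ and the identity $p_{\nu}^{D}(s,\eta,t,x)=p_{\nu}(s,\eta,t,x)-p_{\nu}(s,\overline{\eta},t,x)$ proved in \eqref{eq:ST-pdf-01}. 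Next, for each fixed $s\in(0,t)$, I would apply the same lemma with $\varphi(\eta)=F(\eta,s)$ and then with $\varphi(\eta)=\chi_{\varepsilon}(\eta,s)$, which gives
\[
\int_{D}p_{\nu}^{D}(s,\eta,t,x)F(\eta,s)\rmd\eta=\int_{\mathbb{R}^{2}}p_{\nu}(s,\eta,t,x)\widehat{F}(\eta,s)\rmd\eta
\]
and the analogous identity for $\chi_{\varepsilon}$. Integrating both identities in $s$ over $[0,t]$, which is permitted by Fubini's theorem using the boundedness of $F$, $\chi_{\varepsilon}$ on compact time intervals together with the integrability of $p_{\nu}(s,\cdot,t,x)$, yields the stated representation \eqref{oe-81-1} for $\omega^{\varepsilon}$.

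The representation for $\theta^{\varepsilon}$ is handled in exactly the same way: apply the preceding lemma with $\lambda=\kappa$ to the two terms on the right-hand side of \eqref{eq:t-rep-01}, using $\varphi(\eta)=\theta^{\varepsilon}(\eta,0)$ in the first and $\varphi(\eta)=\beta_{\varepsilon}(\eta,s)$ in the second, then swap the $s$- and $\eta$-integrals via Fubini.

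There is no real obstacle here; the lemma is a direct bookkeeping consequence of \eqref{oe-81}, \eqref{eq:t-rep-01} and the reflection identity \eqref{eq:ST-pdf-01}. The only minor point requiring attention is verifying the hypotheses of Fubini's theorem for the $s$-integrals, which reduces to observing that $F$ and $\chi_{\varepsilon}$ (respectively $\beta_{\varepsilon}$) are bounded and measurable on $[0,t]\times\mathbb{R}^{2}$ for each fixed $\varepsilon>0$ under the regularity assumptions implicit in the construction of $\omega^{\varepsilon}$ and $\theta^{\varepsilon}$, while $p_{\nu}(s,\cdot,t,x)$ and $p_{\kappa}(s,\cdot,t,x)$ are probability densities. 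Hence no further analytic work is required beyond the direct substitution just described.
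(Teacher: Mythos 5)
Your proposal is correct and coincides with the paper's own (largely implicit) argument: the paper likewise obtains the lemma by applying the preceding reflection identity $\int_{D}p_{\lambda}^{D}(s,\eta,t,x)\varphi(\eta)\,\rmd\eta=\int_{\mathbb{R}^{2}}p_{\lambda}(s,\eta,t,x)\hat{\varphi}(\eta)\,\rmd\eta$ term by term to the Dirichlet representations \eqref{oe-81} and \eqref{eq:t-rep-01}, with the integration in $s$ handled exactly as you describe. Your remarks on Fubini are a harmless extra precaution; no gap.
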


In order to avoid calculating the outer layer velocity, which appears
in the error terms $\hat{\chi}_{\varepsilon}(\xi,s)$ and $\hat{\beta}_{\varepsilon}(\xi,s)$,
we consider their limiting representations as $\varepsilon\downarrow0$.
We may apply the same technique used in \citep{QQZW2022}. 
\begin{prop}
 The following integral representations hold:
\begin{align}
\omega(x,t)  =&1_{\{x_{2}=0\}}\sigma(x_{1},t)+\int_{\mathbb{R}^{2}}p_{\nu}(0,\eta,t,x)\hat{\omega}(\eta,0)\rmd\eta\nonumber \\
 & +\int_{0}^{t}\int_{\mathbb{R}^{2}}p_{\nu}(s,\eta,t,x)\hat{F}(\eta,s)\rmd\eta\rmd s\nonumber \\
 & +2\nu\int_{0}^{t}\int_{-\infty}^{\infty}\sigma(\eta_{1},s)\left.\frac{\partial}{\partial\eta_{2}}\right|_{\eta_{2}=0}p_{\nu}(s,(\eta_{1},\eta_{2}),t,x)\rmd\eta_{1}\rmd s\label{2D-0-N1}
\end{align}
and
\begin{align}
\theta(x,t) = &1_{\{x_{2}=0\}}\theta_{0}(x_{1})+\int_{\mathbb{R}^{2}}p_{\kappa}(0,\eta,t,x)\hat{\theta}(\eta,0)\rmd\eta\nonumber \\
 & +2\kappa\int_{-\infty}^{\infty}\theta_{0}(\eta_{1})\left.\frac{\partial}{\partial\eta_{2}}\right|_{\eta_{2}=0}\left(\int_{0}^{t}p_{\kappa}(s,\eta,t,x)\rmd s\right)\rmd\eta_{1}\label{2D-t-N1}
\end{align}
for $x\in\mathbb{R}_{+}^{2}$.
\end{prop}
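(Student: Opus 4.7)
The plan is to start from the functional integral representation for $\omega^{\varepsilon}$ and $\theta^{\varepsilon}$ furnished by the lemma immediately preceding the proposition, add back the boundary cut-off to recover $\omega$ and $\theta$, and then pass to the limit $\varepsilon\downarrow 0$ term by term. Writing $\omega(x,t)=\omega^{\varepsilon}(x,t)+\sigma(x_{1},t)\phi(x_{2}/\varepsilon)$ and observing that $\sigma(x_{1},t)\phi(x_{2}/\varepsilon)$ converges pointwise to $1_{\{x_{2}=0\}}\sigma(x_{1},t)$, the proof reduces to two tasks: (i) showing that the initial-data integral $\int p_{\nu}(0,\eta,t,x)\hat{\omega^{\varepsilon}}(\eta,0)\,\rmd\eta$ converges to $\int p_{\nu}(0,\eta,t,x)\hat{\omega}(\eta,0)\,\rmd\eta$, and (ii) identifying the limit of the error-term integral $\int\!\!\int p_{\nu}(s,\eta,t,x)\hat{\chi}_{\varepsilon}(\eta,s)\,\rmd\eta\,\rmd s$ as the $2\nu$ boundary-flux expression in (\ref{2D-0-N1}). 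The treatment of $\theta^{\varepsilon}$ is entirely analogous, with $\hat{\beta}_{\varepsilon}$ in place of $\hat{\chi}_{\varepsilon}$, and the resulting double boundary integral can be folded into a single expression by Fubini to match (\ref{2D-t-N1}).

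Task (i) is the easy half. By the definition of the $\hat{\ }$ operation, $\hat{\omega^{\varepsilon}}(\eta,0)-\hat{\omega}(\eta,0)=-\mathrm{sgn}(\eta_{2})\sigma(\eta_{1},0)\phi(|\eta_{2}|/\varepsilon)$, which is uniformly bounded in $\varepsilon$ and supported in the shrinking strip $|\eta_{2}|\leq 2\varepsilon/3$. Hence dominated convergence against the continuous, bounded kernel $p_{\nu}(0,\cdot,t,x)$ kills this difference. The $\hat{F}$ contribution is $\varepsilon$-independent and passes through unchanged.

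Task (ii) is the crux. Referring to (\ref{eq:oe-er-01}), decompose $\chi_{\varepsilon}=\chi_{\varepsilon}^{(1)}+\chi_{\varepsilon}^{(2)}+\chi_{\varepsilon}^{(3)}$, where $\chi_{\varepsilon}^{(1)}=\phi(x_{2}/\varepsilon)L\sigma$ (with $L$ a bounded second-order operator in $(x_{1},t)$), $\chi_{\varepsilon}^{(2)}=-\varepsilon^{-1}\phi'(x_{2}/\varepsilon)\sigma u^{2}$, and $\chi_{\varepsilon}^{(3)}=\nu\varepsilon^{-2}\phi''(x_{2}/\varepsilon)\sigma$. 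The first piece is uniformly bounded and supported in $x_{2}\in[0,2\varepsilon/3]$, so its contribution vanishes by dominated convergence. For the second piece, the no-slip condition $u^{2}(x_{1},0,t)=0$ yields $u^{2}(x_{1},x_{2},t)=x_{2}\tilde{u}(x_{1},x_{2},t)$ with $\tilde{u}$ bounded, so $\chi_{\varepsilon}^{(2)}=-\Psi(x_{2}/\varepsilon)\sigma\tilde{u}$ with $\Psi(r)=r\phi'(r)$ bounded and compactly supported; again dominated convergence closes the argument. Only $\chi_{\varepsilon}^{(3)}$ survives in the limit.

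To compute the limit of the $\chi_{\varepsilon}^{(3)}$ contribution, fix $\eta_{1}$, $s$ and set $f(\eta_{2})=p_{\nu}(s,(\eta_{1},\eta_{2}),t,x)$. On $\eta_{2}>0$, two integrations by parts over $\eta_{2}$, using $\phi(0)=1$, $\phi'(0)=0$, and the compact support of $\phi$, give
\[
\int_{0}^{\infty}f(\eta_{2})\frac{\nu\phi''(\eta_{2}/\varepsilon)}{\varepsilon^{2}}\rmd\eta_{2}=\nu f'(0)+\nu\int_{0}^{\infty}f''(\eta_{2})\phi(\eta_{2}/\varepsilon)\rmd\eta_{2},
\]
and the remainder tends to zero as $\varepsilon\downarrow 0$. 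On $\eta_{2}<0$, the odd extension in $\hat{\ }$ carries an extra minus sign, but the substitution $\zeta=-\eta_{2}$ produces a further $-1$ from the chain rule, so the $\eta_{2}<0$ side contributes an identical $+\nu f'(0)$. Summing and multiplying by $\sigma(\eta_{1},s)$ produces the $2\nu\sigma(\eta_{1},s)\partial_{\eta_{2}}|_{\eta_{2}=0}p_{\nu}$ kernel in (\ref{2D-0-N1}). The same argument applied to $\hat{\beta}_{\varepsilon}$ produces (\ref{2D-t-N1}). The main technical obstacle is justifying the two integrations by parts and the passage to the limit under the integral sign; this requires uniform-in-$\varepsilon$ control of $\partial_{\eta_{2}}^{2}p_{\nu}$ near the boundary, which follows from the standard parabolic derivative estimates for transition densities of diffusions with smooth bounded divergence-free drift, as recalled in Section 2.
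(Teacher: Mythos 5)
Your proposal is correct and follows essentially the same route as the paper: start from the $\varepsilon$-mollified representation (\ref{oe-81-1}), show that only the $\nu\varepsilon^{-2}\phi''(\cdot/\varepsilon)\sigma$ piece of $\hat{\chi}_{\varepsilon}$ survives the limit, and extract the boundary-flux kernel $2\nu\sigma\,\partial_{\eta_{2}}p_{\nu}|_{\eta_{2}=0}$ by two integrations by parts. The only cosmetic differences are that you dispatch the $\varepsilon^{-1}\phi'(\cdot/\varepsilon)\sigma u^{2}$ term by writing $u^{2}=x_{2}\tilde{u}$ and bounding $r\phi'(r)$ (the paper instead integrates by parts once, using the no-slip condition to kill the boundary term), and that you integrate the $\phi''$ term by parts on each half-line separately, relying on $\phi'(0)=0$, whereas the paper works with the odd combination $p_{\nu}(s,(\xi_{1},\xi_{2}),t,x)-p_{\nu}(s,(\xi_{1},-\xi_{2}),t,x)$, which vanishes at $\xi_{2}=0$; both variants are valid and require the same regularity of $p_{\nu}$ near the boundary.
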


\begin{proof}
Let us prove (\ref{2D-0-N1}) in detail, and the argument for $\theta^{\varepsilon}$
is similar. Since $\lim_{\varepsilon\rightarrow0+}\omega^{\varepsilon}(x,t)=\omega(x,t)$
for $x_{2}>0$, it follows that
\[
\lim_{\varepsilon\rightarrow0+}\int_{\mathbb{R}^{2}}p_{\nu}(0,\eta,t,x)\hat{\omega^{\varepsilon}}(\eta,0)\rmd\eta=\int_{\mathbb{R}^{2}}p_{\nu}(0,\eta,t,x)\hat{\omega}(\eta,0)\rmd \eta.
\]
Hence we only need to deal with the limit of the error term
\[
J(x,\varepsilon)=\int_{0}^{t}\int_{\mathbb{R}^{2}}p_{\nu}(s,\xi,t,x)\hat{\chi}_{\varepsilon}(\xi,s)\rmd\xi\rmd s
\]
appearing on the right-hand side of (\ref{oe-81-1}), where $\chi_{\varepsilon}$ is given in (\ref{eq:oe-er-01}). According
to (\ref{eq:oe-lim0p}), there is no contribution as $\varepsilon\downarrow0$
towards the limit of $J(x,\varepsilon)$ from the first term in $\chi_{\varepsilon}$.
Therefore we only need to calculate the contributions from the error
terms
\[
E_{1}^{\varepsilon}(x,t)=-\frac{1}{\varepsilon}\phi'\left(\frac{x_{2}}{\varepsilon}\right)\sigma(x_{1},t)u^{2}(x,t)
\]
and
\[
E_{2}^{\varepsilon}(x,t)=\nu\frac{1}{\varepsilon^{2}}\phi''\left(\frac{x_{2}}{\varepsilon}\right)\sigma(x_{1},t).
\]
To this end, we may choose a concrete cut-off function $\phi$. Let us set
\begin{equation}
\phi(r)=\begin{cases}
1 & \textrm{ for \ensuremath{r\in[0,\frac{1}{3})},}\\
\frac{1}{2}+54\left(r-\frac{1}{2}\right)^{3}-\frac{9}{2}\left(r-\frac{1}{2}\right) & \textrm{ for }r\in[\frac{1}{3},\frac{2}{3}],\\
0 & \textrm{ for }r\geq \frac{2}{3}
\end{cases}\label{phi-def}
\end{equation}
Then $-54\leq\phi''\leq54$, $-\frac{9}{2}\leq\phi'\leq0$ on $[\frac{1}{3},\frac{2}{3}]$
and $\phi'=0$ for $r\leq\frac{1}{3}$ or $r\geq
\frac{2}{3}$. In fact, 
\[
\phi'(r)=\begin{cases}
162\left(r-\frac{1}{2}\right)^{2}-\frac{9}{2}, & \textrm{ for }r\in[\frac{1}{3},\frac{2}{3}],\\
0, & \textrm{ otherwise. }
\end{cases}
\]
and 
\[
\phi''(r)=\begin{cases}
324\left(r-\frac{1}{2}\right), & \textrm{ for }r\in[\frac{1}{3},\frac{2}{3}],\\
0, & \textrm{ otherwise. }
\end{cases}
\]
Let
\[
J_{i}(x,\varepsilon)=\int_{0}^{t}\int_{\mathbb{R}^{2}}p_{\nu}(s,\xi,t,x)\hat{E}_{i}^{\varepsilon}(\xi,s)\rmd \xi\rmd s
\]
where $i=1,2$. Then 
\begin{align*}
J_{i}(x,\varepsilon) & =\int_{0}^{t}\int_{-\infty}^{\infty}\left[\int_{0}^{\infty}p_{\nu}\left(s,(\xi_{1},\xi_{2}),t,x\right)E_{i}^{\varepsilon}(\xi,s)\rmd\xi_{2}\right]\rmd \xi_{1}\rmd s\\
 & -\int_{0}^{t}\int_{-\infty}^{\infty}\left[\int_{0}^{\infty}p_{\nu}\left(s,(\xi_{1},-\xi_{2}),t,x\right)E_{i}^{\varepsilon}(\xi,s)\rmd\xi_{2}\right]\rmd \xi_{1}\rmd s.
\end{align*}
Let us consider the integral
\begin{align*}
I_{1}(\varepsilon) & :=\int_{0}^{\infty}p_{\nu}\left(s,(\xi_{1},\xi_{2}),t,x\right)E_{1}^{\varepsilon}(\xi,s)\rmd\xi_{2}\\
 & =-\sigma(\xi_{1},s)\int_{0}^{\varepsilon}u^{2}(\xi,s)p_{\nu}\left(s,(\xi_{1},\xi_{2}),t,x\right)\frac{1}{\varepsilon}\phi'\left(\frac{\xi_{2}}{\varepsilon}\right)\rmd\xi_{2}\\
 & =\sigma(\xi_{1},s)\int_{0}^{\varepsilon}\phi\left(\frac{\xi_{2}}{\varepsilon}\right)\frac{\partial}{\partial\xi_{2}}\left(u^{2}(\xi,t)p_{\nu}\left(s,(\xi_{1},\xi_{2}),t,x\right)\right)\rmd \xi_{2}\\
 & \rightarrow0\quad\textrm{ as }\varepsilon\downarrow0,
\end{align*}
where for the third equality, we have used the fact that $u(x,t)$
vanishes when $x_{2}=0$. Hence
\[
\lim_{\varepsilon\rightarrow0+}\int_{0}^{t}\int_{\mathbb{R}^{2}}p_{\nu}(s,\xi,t,x)\hat{E}_{1}^{\varepsilon}(\xi,s)\rmd \xi\rmd s=0.
\]
Now we consider $J_{2}(x,\varepsilon)$. To this end, we observe that
\[
J_{2}(x,\varepsilon)=\int_{0}^{t}\int_{-\infty}^{\infty}\left[\int_{0}^{\infty}\left(p_{\nu}\left(s,(\xi_{1},\xi_{2}),t,x\right)-p_{\nu}\left(s,(\xi_{1},-\xi_{2}),t,x\right)\right)E_{2}^{\varepsilon}(\xi,s)\rmd \xi_{2}\right]\rmd \xi_{1}\rmd s
\]
and integrate by parts twice to deduce that
\begin{align*}
I_{2}(\varepsilon)  :=&\int_{0}^{\infty}\left(p_{\nu}\left(s,(\xi_{1},\xi_{2}),t,x\right)-p_{\nu}\left(s,(\xi_{1},-\xi_{2}),t,x\right)\right)E_{2}^{\varepsilon}(\xi,s)\rmd\xi_{2}\\
  =&\nu\sigma(\xi_{1},t)\int_{0}^{\varepsilon}\left(p_{\nu}\left(s,(\xi_{1},\xi_{2}),t,x\right)-p_{\nu}\left(s,(\xi_{1},-\xi_{2}),t,x\right)\right)\frac{1}{\varepsilon^{2}}\phi''\left(\frac{\xi_{2}}{\varepsilon}\right)\rmd\xi_{2}\\
  =&-\nu\sigma(\xi_{1},t)\int_{0}^{\varepsilon}\frac{\partial}{\partial\xi_{2}}\left(p_{\nu}\left(s,(\xi_{1},\xi_{2}),t,x\right)-p_{\nu}\left(s,(\xi_{1},-\xi_{2}),t,x\right)\right)\frac{1}{\varepsilon}\phi'\left(\frac{\xi_{2}}{\varepsilon}\right)\rmd\xi_{2}\\
  =&2\nu\sigma(\xi_{1},t)\left.\frac{\partial}{\partial\xi_{2}}\right|_{\xi_{2}=0}p_{\nu}\left(s,(\xi_{1},\xi_{2}),t,x\right)\\
 & +\nu\sigma(\xi_{1},t)\int_{0}^{\varepsilon}\phi\left(\frac{\xi_{2}}{\varepsilon}\right)\frac{\partial^{2}}{\partial\xi_{2}^2}\left(p_{\nu}\left(s,(\xi_{1},\xi_{2}),t,x\right)-p_{\nu}\left(s,(\xi_{1},-\xi_{2}),t,x\right)\right)\rmd\xi_{2}.
\end{align*}
As a consequence, we have
\[
I_{2}(\varepsilon)\rightarrow2\nu\sigma(\xi_{1},t)\left.\frac{\partial}{\partial\xi_{2}}\right|_{\xi_{2}=0}p_{\nu}\left(s,(\xi_{1},\xi_{2}),t,x\right)\quad\textrm{ as }\varepsilon\downarrow0
\]
and therefore
\[
\lim_{\varepsilon\rightarrow0+}J_{2}(x,\varepsilon)=2\nu\int_{0}^{t}\int_{-\infty}^{\infty}\sigma(\xi_{1},t)\left.\frac{\partial}{\partial\xi_{2}}\right|_{\xi_{2}=0}p_{\nu}\left(s,(\xi_{1},\xi_{2}),t,x\right)\rmd\xi_{1}\rmd s
\]
which completes the proof.
\end{proof}
By using the Biot-Savart law, we may then deduce the following functional
integral representation theorem. 
\begin{thm}
Let $X$ and $Y$ be defined as in \eqref{eq:T-01-1} and \eqref{eq:T-01-2} respectively. The following functional integral representations hold:
\begin{align*}
u(x,t)  =&\int_{\mathbb{R}^{2}}\mathbb{E}\left[1_{\mathbb{R}_{+}^{2}}(X_{t}^{\eta})\varLambda_{2}(X_{t}^{\eta},x)\wedge\hat{\omega}(\eta,0)\right]\rmd\eta\\
 & +\int_{0}^{t}\int_{\mathbb{R}^{2}}\mathbb{E}\left[1_{\mathbb{R}_{+}^{2}}(X_{t}^{\eta,s})\varLambda_{2}(X_{t}^{\eta,s},x)\wedge\hat{F}(\eta,s)\right]\rmd\eta\rmd s\\
 & +2\nu\int_{0}^{t}\int_{-\infty}^{\infty}\left.\frac{\partial}{\partial\eta_{2}}\right|_{\eta_{2}=0}\mathbb{E}\left[1_{\mathbb{R}_{+}^{2}}(X_{t}^{\eta,s})\varLambda_{2}(X_{t}^{\eta,s},x)\wedge\sigma(\eta,s)\right]\rmd \eta_{1}\rmd s
\end{align*}
and 
\begin{align*}
u(x,t) = &I_{2,1}(x,t)+\int_{\mathbb{R}^{2}}\mathbb{E}\left[1_{\mathbb{R}_{+}^{2}}(X_{t}^{\eta})\varLambda_{2}(X_{t}^{\eta},x)\wedge\hat{\omega}(\eta,0)\right]\rmd \eta\nonumber \\
 & -\int_{0}^{t}\int_{\mathbb{R}^{2}}\mathbb{E}\left[1_{\mathbb{R}_{+}^{2}}(X_{t}^{\eta,s})\varLambda_{2}(X_{t}^{\eta,s},x)\wedge\widehat{\frac{\partial\theta_{0}}{\partial\eta_{1}}}(\eta_{1},s)\right]\rmd \eta\rmd s\nonumber \\
 & +2\nu\int_{0}^{t}\int_{-\infty}^{\infty}\left.\frac{\partial}{\partial\eta_{2}}\right|_{\eta_{2}=0}\mathbb{E}\left[1_{\mathbb{R}_{+}^{2}}(X_{t}^{\eta,s})\varLambda_{2}(X_{t}^{\eta,s},x)\wedge\sigma(\eta,s)\right]\rmd\eta_{1}\rmd s
\end{align*}
for every $x\in\mathbb{R}_{+}^{2}$ and $t>0$, where
\begin{align*}
I_{2,1}(x,t)  =&-\int_{0}^{t}\int_{\mathbb{R}^{2}}\mathbb{E}\left[1_{\mathbb{R}_{+}^{2}}\left(Y_{s}^{\xi}\right)H\left(Y_{s}^{\xi},s;t,x\right)\wedge\hat{\theta}(\xi,0)\right]\rmd \xi\rmd s\\
 & +\int_{0}^{t}\int_{\mathbb{R}^{2}}\mathbb{E}\left[1_{\mathbb{R}_{+}^{2}}\left(\overline{Y_{s}^{\overline{\xi}}}\right)H\left(Y_{s}^{\overline{\xi}},s;t,x\right)\wedge\hat{\theta}(\xi,0)\right]\rmd\xi\rmd s\\
 & -2\kappa\int_{0}^{t}\int_{-\infty}^{\infty}\left.\frac{\partial}{\partial\xi_{2}}\right|_{\xi_{2}=0}\left(\int_{0}^{s}\mathbb{E}\left[1_{\mathbb{R}_{+}^{2}}\left(Y_{s}^{\xi,\tau}\right)H\left(Y_{s}^{\xi,\tau},s;t,x\right)\wedge\theta_{0}(\xi_{1})\right]\rmd\tau\right)\rmd\xi_{1}\rmd s\\
 & +2\kappa\int_{0}^{t}\int_{-\infty}^{\infty}\left.\frac{\partial}{\partial\xi_{2}}\right|_{\xi_{2}=0}\left(\int_{0}^{s}\mathbb{E}\left[1_{\mathbb{R}_{+}^{2}}\left(\overline{Y_{s}^{\overline{\xi},\tau}}\right)H\left(Y_{s}^{\overline{\xi},\tau},s;t,x\right)\wedge\theta_{0}(\xi_{1})\right]\rmd\tau\right)\rmd\xi_{1}\rmd s
\end{align*}
and
\[
H(\eta,s;t,x)=\frac{\partial}{\partial\eta_{1}}\mathbb{E}\left[1_{\mathbb{R}_{+}^{2}}(X_{t}^{\eta,s})\varLambda_{2}(X_{t}^{\eta,s},x)\right]
\]
for $x\in\mathbb{R}_{+}^{2}$ and $t>s\geq0$. 
\end{thm}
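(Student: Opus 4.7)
The first representation follows directly from the limiting representation (\ref{2D-0-N1}) for $\omega$ combined with the planar Biot--Savart law of Lemma~\ref{lem211}. Substituting (\ref{2D-0-N1}) into $u(x,t)=\int_{\mathbb{R}_{+}^{2}}\varLambda_{2}(y,x)\wedge\omega(y,t)\,\rmd y$, the boundary-supported term $1_{\{y_{2}=0\}}\sigma(y_{1},t)$ contributes zero because $\{y_{2}=0\}$ has planar Lebesgue measure zero. For each of the three remaining terms I apply Fubini to interchange the $y$-integration with the other integrations, and use the identity
\[
\int_{\mathbb{R}_{+}^{2}}\varLambda_{2}(y,x)\,p_{\nu}(s,\eta,t,y)\,\rmd y=\mathbb{E}\left[1_{\mathbb{R}_{+}^{2}}(X_{t}^{\eta,s})\,\varLambda_{2}(X_{t}^{\eta,s},x)\right],
\]
which holds because $p_{\nu}(s,\eta,t,\cdot)$ is the density of $X_{t}^{\eta,s}$ and the wedge is linear in its scalar argument. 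This reproduces the three expectation terms of the first representation.

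For the second representation I start from the first. Using $F(\eta,s)=\varTheta_{1}(\eta,s)-\partial_{\eta_{1}}\theta_{0}(\eta_{1})$ from (\ref{2D-F01}), the odd reflection splits as $\hat{F}=\hat{\varTheta}_{1}-\widehat{\partial_{\eta_{1}}\theta_{0}}$, and the second piece is exactly the $\widehat{\partial\theta_{0}/\partial\eta_{1}}$ line of the claimed formula. For the $\hat{\varTheta}_{1}$ piece I use that the tangential derivative $\partial_{\eta_{1}}$ commutes with the odd reflection across $\{\eta_{2}=0\}$, so $\hat{\varTheta}_{1}=\partial_{\eta_{1}}\hat{\theta}$ in the distributional sense. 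Integration by parts in $\eta_{1}$ transfers this derivative onto $\mathbb{E}[1_{\mathbb{R}_{+}^{2}}(X_{t}^{\eta,s})\varLambda_{2}(X_{t}^{\eta,s},x)]$, producing the kernel $H(\eta,s;t,x)$ together with an integrand of the form $-H(\eta,s;t,x)\wedge\hat{\theta}(\eta,s)$.

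Into this I substitute the limiting $\theta$-representation (\ref{2D-t-N1}) (the pointwise boundary term $1_{\{\eta_{2}=0\}}\theta_{0}(\eta_{1})$ drops out). The interior part $\int_{\mathbb{R}^{2}}p_{\kappa}(0,\xi,s,\eta)\hat{\theta}(\xi,0)\,\rmd\xi$ yields the first two terms of $I_{2,1}$, while the boundary source part involving $\theta_{0}$ yields the last two. In each case the splitting into two pieces comes from writing $\hat{\theta}(\eta,s)$ as $\theta(\eta,s)$ on $\{\eta_{2}>0\}$ and $-\theta(\bar{\eta},s)$ on $\{\eta_{2}<0\}$: on $\{\eta_{2}>0\}$ Fubini recognises the $\eta$-integral against $p_{\kappa}$ as $\mathbb{E}[1_{\mathbb{R}_{+}^{2}}(Y_{s}^{\xi})\,H(Y_{s}^{\xi},s;t,x)]$, while on $\{\eta_{2}<0\}$ the reflection identity $p_{\kappa}(\tau,\bar{\xi},s,\bar{\eta})=p_{\kappa}(\tau,\xi,s,\eta)$ together with the change of variables $\eta\mapsto\bar{\eta}$ recasts the same integral as $\mathbb{E}[1_{\mathbb{R}_{+}^{2}}(\overline{Y_{s}^{\bar{\xi}}})\,H(Y_{s}^{\bar{\xi}},s;t,x)]$, the extra minus sign from $\hat{\theta}$ explaining the alternating signs in $I_{2,1}$.

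The main obstacle is the careful bookkeeping of signs and reflections arising from combining the hat operation with the reflected starting point $\bar{\xi}$ of $Y$; one has to check that the four prescribed signs in $I_{2,1}$ are reproduced exactly. The analytic points---justification of the $\eta_{1}$-integration by parts (using the Gaussian-type decay of $p_{\nu}$ together with the mild singularity of $\varLambda_{2}$), the Fubini exchanges, and the commutation of $\partial_{\xi_{2}}|_{\xi_{2}=0}$ with the $Y$-expectation---are routine under the assumed smoothness and decay of the dynamical variables $u$, $\omega$, $\theta$, $\sigma$ and $\theta_{0}$.
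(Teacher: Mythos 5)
Your proposal is correct and follows essentially the same route as the paper: substitute the limiting representation (\ref{2D-0-N1}) into the half-space Biot--Savart law, convert the $p_{\nu}$-integrals over $\mathbb{R}_{+}^{2}$ into expectations with the indicator $1_{\mathbb{R}_{+}^{2}}(X_{t}^{\eta,s})$, then split $\hat{F}$, use $\widehat{\partial\theta/\partial\eta_{1}}=\partial_{\eta_{1}}\hat{\theta}$ with an integration by parts in $\eta_{1}$ to produce the kernel $H$, and finally insert (\ref{2D-t-N1}) and unpack $\hat{\theta}$ via the reflection identity to obtain the four signed terms of $I_{2,1}$. The sign and reflection bookkeeping you flag is exactly what the paper carries out, and your reasoning reproduces it.
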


\begin{proof}
By (\ref{3D-BS-01}) and (\ref{2D-0-N1}), we obtain that
\begin{align*}
u(x,t)  =&\int_{\mathbb{R}_{+}^{2}}\varLambda_{2}(y,x)\wedge\omega(y,t)\rmd y\\
 = &\int_{\mathbb{R}_{+}^{2}}\varLambda_{2}(y,x)\wedge1_{\{y_{2}=0\}}\sigma(y_{1},t)\rmd y\\
 & +\int_{\mathbb{R}^{2}}\int_{\mathbb{R}_{+}^{2}}\varLambda_{2}(y,x)\wedge\hat{\omega}(\eta,0)p_{\nu}(0,\eta,t,y)\rmd y\rmd\eta\\
 & +\int_{0}^{t}\int_{\mathbb{R}^{2}}\int_{\mathbb{R}_{+}^{2}}\varLambda_{2}(y,x)\wedge\hat{F}(\eta,s)p_{\nu}(s,\eta,t,y)\rmd y\rmd\eta\rmd s\\
 & +2\nu\int_{0}^{t}\int_{-\infty}^{\infty}\left.\frac{\partial}{\partial\eta_{2}}\right|_{\eta_{2}=0}\int_{\mathbb{R}_{+}^{2}}\varLambda_{2}(y,x)\wedge\sigma(\eta_{1},s)p_{\nu}(s,(\eta_{1},\eta_{2}),t,y)\rmd y\rmd\eta_{1}\rmd s.
\end{align*}
Hence 
\begin{align*}
u(x,t)  =&\int_{\mathbb{R}^{2}}\mathbb{E}\left[1_{\mathbb{R}_{+}^{2}}(X_{t}^{\eta})\varLambda_{2}(X_{t}^{\eta},x)\wedge\hat{\omega}(\eta,0)\right]\rmd\eta\nonumber \\
 & +\int_{0}^{t}\int_{\mathbb{R}^{2}}\mathbb{E}\left[1_{\mathbb{R}_{+}^{2}}(X_{t}^{\eta,s})\varLambda_{2}(X_{t}^{\eta,s},x)\wedge\hat{F}(\eta,s)\right]\rmd\eta\rmd s\nonumber \\
 & +2\nu\int_{0}^{t}\int_{-\infty}^{\infty}\left.\frac{\partial}{\partial\eta_{2}}\right|_{\eta_{2}=0}\mathbb{E}\left[1_{\mathbb{R}_{+}^{2}}(X_{t}^{\eta,s})\varLambda_{2}(X_{t}^{\eta,s},x)\wedge\sigma(\eta,s)\right]\rmd\eta_{1}\rmd s,
\end{align*}
where
\[
F(x,t)=\frac{\partial}{\partial x_{1}}\theta(x,t)-\frac{\partial}{\partial x_{1}}\theta_{0}(x_{1}),
\]
which yields in particular the first representation for $u(x,t)$.
To prove the second representation, we need to handle the second term
\begin{align*}
I_{2}  =&\int_{0}^{t}\int_{\mathbb{R}^{2}}\mathbb{E}\left[1_{\mathbb{R}_{+}^{2}}(X_{t}^{\eta,s})\varLambda_{2}(X_{t}^{\eta,s},x)\wedge\hat{F}(\eta,s)\right]\rmd\eta\rmd s\\
 = &\int_{0}^{t}\int_{\mathbb{R}^{2}}\mathbb{E}\left[1_{\mathbb{R}_{+}^{2}}(X_{t}^{\eta,s})\varLambda_{2}(X_{t}^{\eta,s},x)\wedge\widehat{\frac{\partial\theta}{\partial\eta_{1}}}(\eta,s)\right]\rmd\eta\rmd s\\
 & -\int_{0}^{t}\int_{\mathbb{R}^{2}}\mathbb{E}\left[1_{\mathbb{R}_{+}^{2}}(X_{t}^{\eta,s})\varLambda_{2}(X_{t}^{\eta,s},x)\wedge\widehat{\frac{\partial\theta_{0}}{\partial\eta_{1}}}(\eta_{1},s)\right]\rmd\eta\rmd s,
\end{align*}
where the first term on the right-hand side is denoted by
\[
I_{2,1}=\int_{0}^{t}\int_{\mathbb{R}^{2}}\mathbb{E}\left[1_{\mathbb{R}_{+}^{2}}(X_{t}^{\eta,s})\varLambda_{2}(X_{t}^{\eta,s},x)\right]\wedge\widehat{\frac{\partial\theta}{\partial\eta_{1}}}(\eta,s)\rmd\eta\rmd s.
\]
Next we notice that by definition 
\[
\widehat{\frac{\partial\theta}{\partial\eta_{1}}}(\eta,s)=\frac{\partial}{\partial\eta_{1}}\hat{\theta}(\eta,s)
\]
so that we may rewrite 
\begin{equation}
I_{2,1}=-\int_{0}^{t}\int_{\mathbb{R}^{2}}H(\eta,s;t,x)\wedge\hat{\theta}(\eta,s)\rmd\eta\rmd s,\label{I21-01}
\end{equation}
where for simplicity, we have introduced the following kernel
\[
H(\eta,s;t,x)=\frac{\partial}{\partial\eta_{1}}\mathbb{E}\left[1_{\mathbb{R}_{+}^{2}}(X_{t}^{\eta,s})\varLambda_{2}(X_{t}^{\eta,s},x)\right].
\]
Using the representation (\ref{2D-t-N1}), we have
\begin{align*}
\theta(\eta,s) =& 1_{\{\eta_{2}=0\}}\theta_{0}(\eta_{1})+\int_{\mathbb{R}^{2}}p_{\kappa}(0,\xi,s,\eta)\hat{\theta}(\xi,0)\rmd\xi\nonumber \\
 & +2\kappa\int_{-\infty}^{\infty}\theta_{0}(\xi_{1})\left.\frac{\partial}{\partial\xi_{2}}\right|_{\xi_{2}=0}\left(\int_{0}^{s}p_{\kappa}(\tau,\xi,s,\eta)\rmd\tau\right)\rmd\xi_{1},
\end{align*}
so that
\begin{align*}
\hat{\theta}(\eta,s)  =&\int_{\mathbb{R}^{2}}1_{\mathbb{R}_{+}^{2}}(\eta)p_{\kappa}(0,\xi,s,\eta)\hat{\theta}(\xi,0)\rmd\xi-\int_{\mathbb{R}^{2}}1_{\mathbb{R}_{+}^{2}}(\overline{\eta})p_{\kappa}(0,\overline{\xi},s,\eta)\hat{\theta}(\xi,0)\rmd\xi\\
 & +2\kappa\int_{-\infty}^{\infty}\theta_{0}(\xi_{1})\left.\frac{\partial}{\partial\xi_{2}}\right|_{\xi_{2}=0}\left(\int_{0}^{s}1_{\mathbb{R}_{+}^{2}}(\eta)p_{\kappa}(\tau,\xi,s,\eta)\rmd\tau\right)\rmd\xi_{1}\\
 & -2\kappa\int_{-\infty}^{\infty}\theta_{0}(\xi_{1})\left.\frac{\partial}{\partial\xi_{2}}\right|_{\xi_{2}=0}\left(\int_{0}^{s}1_{\mathbb{R}_{+}^{2}}(\overline{\eta})p_{\kappa}(\tau,\overline{\xi},s,\eta)\rmd \tau\right)\rmd\xi_{1}.
\end{align*}
Substituting this expression into (\ref{I21-01}) we then deduce that
\begin{align*}
I_{2,1}  =&-\int_{0}^{t}\int_{\mathbb{R}^{2}}\mathbb{E}\left[1_{\mathbb{R}_{+}^{2}}\left(Y_{s}^{\xi}\right)H\left(Y_{s}^{\xi},s;t,x\right)\wedge\hat{\theta}(\xi,0)\right]\rmd\xi\rmd s\\
 & +\int_{0}^{t}\int_{\mathbb{R}^{2}}\mathbb{E}\left[1_{\mathbb{R}_{+}^{2}}\left(\overline{Y_{s}^{\overline{\xi}}}\right)H\left(Y_{s}^{\overline{\xi}},s;t,x\right)\wedge\hat{\theta}(\xi,0)\right]\rmd\xi\rmd s\\
 & -2\kappa\int_{0}^{t}\int_{-\infty}^{\infty}\left.\frac{\partial}{\partial\xi_{2}}\right|_{\xi_{2}=0}\left(\int_{0}^{s}\mathbb{E}\left[1_{\mathbb{R}_{+}^{2}}\left(Y_{s}^{\xi,\tau}\right)H\left(Y_{s}^{\xi,\tau},s;t,x\right)\wedge\theta_{0}(\xi_{1})\right]\rmd\tau\right)\rmd\xi_{1}\rmd s\\
 & +2\kappa\int_{0}^{t}\int_{-\infty}^{\infty}\left.\frac{\partial}{\partial\xi_{2}}\right|_{\xi_{2}=0}\left(\int_{0}^{s}\mathbb{E}\left[1_{\mathbb{R}_{+}^{2}}\left(\overline{Y_{s}^{\overline{\xi},\tau}}\right)H\left(Y_{s}^{\overline{\xi},\tau},s;t,x\right)\wedge\theta_{0}(\xi_{1})\right]\rmd\tau\right)\rmd\xi_{1}\rmd s.
\end{align*}
Putting together, we obtain the second representation.
\end{proof}

\section{Numerical schemes and experiment results}

In this section, we formulate several numerical schemes based on the representations for flows in two dimensional space established in the previous sections, and demonstrate the numerical  results. 

Let us review our notations for the sake of comprehensibility. There
are four fluid dynamical variables we are going to calculate by means
of numerical simulations: the velocity
$u(x,t)$, the vorticity $\omega(x,t)$, the temperature $\theta(x,t)$
and the temperature gradient $\varTheta(x,t)$. We are given the initial velocity
$u(x,0)$, and hence the initial vorticity $\omega(x,0)$ as well as the initial
temperature $\theta(x,0)$, which has a small gradient, i.e., the magnitude
of the gradient $\varTheta(x,0)=\nabla\theta(x,0)$ is small so can be ignored in numerical schemes. The
kinematic viscosity $\nu>0$ and the heat diffusivity constant $\kappa>0$
depend on the nature of the fluid. The fluid density $\rho$ is almost
a constant, so we may choose it to be the unit. 

\subsection{Oberbeck-Boussinesq flows in \texorpdfstring{$\mathbb{R}^{2}$}{R2}}

The numerical experiments in this part are carried out for two-dimensional fluid flows on the whole space,
and therefore, the Biot-Savart singular kernel is a vector kernel given
by $K_{2}(y,x)=(2\pi)^{-1}(y-x)/|y-x|^{2}$, and the interaction force
$f(\theta)(x,t)=(0,\alpha(\theta(x,t)-\theta(x,0)))$ where $\alpha>0$ is a
constant. Thus
\[
F(x,t)=\alpha\frac{\partial}{\partial x_{1}}\theta(x,t)-\alpha\frac{\partial}{\partial x_{1}}\theta(x,0),
\]
where $\theta(x,0)$ is the given initial temperature distribution
whose gradient is small. 

Choose a lattice mesh $h>0$. For $i_1,i_2\in \mathbb{Z}$, denote $x^{i_{1},i_{2}}=(i_{1},i_{2})h$, the lattice points,  $\omega_{i_{1},i_{2}}=\omega\left(x^{i_{1},i_{2}},0\right)$,
and $\varTheta_{i_{1},i_{2}}=\varTheta(x^{i_{1},i_{2}},0)$. Let $\delta>0$ be the step length of the time variable
and $t_{i}=i\delta$, $i=0,1,2,\cdots$.

Note that in the numerical schemes based on the functional integral
representations, the interaction term $F$ involves the temperature
gradient, therefore, we have to calculate the derivatives of $\theta(x,t)$.
Since the derivative of the Biot-Savart kernel is no longer locally
integrable, we need to replace $K_{2}$ with its regularisation
measured via a positive parameter $\epsilon>0$, and introduce
\[
K_{2,\epsilon}(y,x)=\left(1-e^{-\frac{|y-x|^{2}}{\epsilon}}\right)K_{2}(y,x).
\]
 Let us describe the numerical scheme by adopting the functional integral
representations in Subsection \ref{subsec52}.

\subsubsection{One copy scheme}

In this numerical scheme, we drop the expectation using independent copies of Brownian motion. 
We discretise the stochastic differential equations using the Euler scheme: for $i_1,i_2\in \mathbb{Z}$, $k=1,2,\cdots$,
\begin{equation}
X_{0}^{i_{1},i_{2}}=x^{i_{1},i_{2}},\quad X_{t_{k}}^{i_{1},i_{2}}=X_{t_{k-1}}^{i_{1},i_{2}}+\delta u\left(X_{t_{k-1}}^{i_{1},i_{2}},t_{k-1}\right)+\sqrt{2\nu}(B_{t_{k}}^{1}-B_{t_{k-1}}^{1}),\label{XB}
\end{equation}
\begin{equation}
Y_{0}^{i_{1},i_{2}}=x^{i_{1},i_{2}},\quad Y_{t_{k}}^{i_{1},i_{2}}=Y_{t_{k-1}}^{i_{1},i_{2}}+\delta u\left(Y_{t_{k-1}}^{i_{1},i_{2}},t_{k-1}\right)+\sqrt{2\kappa}(B_{t_{k}}^{2}-B_{t_{k-1}}^{2}),\label{YB}
\end{equation}
where we use $X_{t}^{i_{1},i_{2}}$ to denote $X_{t}^{x^{i_{1},i_{2}}}$ to simplify our notation, and $B^{1}$, $B^{2}$ are two independent two-dimensional Brownian
motions.

The integral representations in Theorems \ref{thm6.4} and \ref{thm6.5} are approximated by the following discretisation:
\begin{align}
u(x,t_{k})=&\sum_{i_{1},i_{2}}h^{2}K_{2,\epsilon}\left(X_{t_{k}}^{i_{1},i_{2}},x\right)\wedge\omega_{i_{1},i_{2}}
\nonumber\\
&+\sum_{i_1,i_2}\sum_{j=1}^{k}\delta h^{2}K_{2,\epsilon}\left(X_{t_{k}}^{i_{1},i_{2}},x\right)\wedge F\left(X_{t_{j-1}}^{i_{1},i_{2}},t_{j-1}\right),\label{du-01}
\end{align}
and
\begin{equation}
\theta(x,t_{k})=-\sum_{i_{1},i_{2}}h^{2}K_{2,\epsilon}\left(Y_{t_{k}}^{i_{1},i_{2}},x\right)\cdot R(x^{i_{1},i_{2}},t_{k};0)\varTheta_{i_{1},i_{2}},\label{dt-01}
\end{equation}
where $R(x^{i_{1},i_{2}},t_{k};t_{k})=I$, and
\[
R(x^{i_{1},i_{2}},t_{k};t_{i})=I-\sum_{l=i+1}^{k}\delta A\left(Y_{t_{l}}^{i_{1},i_{2}},t_{l}\right)R(x^{i_{1},i_{2}},t_{k};t_{l})
\]
for $i=0,\cdots,k-1$. As for $F(x,t)$ and $A(x,t)$,  we update them in each iteration by formally differentiating the equations (\ref{du-01})
and (\ref{dt-01}) respectively, so that
\begin{align}
A(x,t_{k})=&\sum_{i_{1},i_{2}}h^{2}\nabla_{x}K_{2,\epsilon}\left(X_{t_{k}}^{i_{1},i_{2}},x\right)\wedge\omega_{i_{1},i_{2}}\nonumber\\
    &+\sum_{i_1,i_2}\sum_{j=1}^{k}\delta h^{2}\nabla_{x}K_{2,\epsilon}\left(X_{t_{k}}^{i_{1},i_{2}},x\right)\wedge F\left(X_{t_{j-1}}^{i_{1},i_{2}},t_{j-1}\right)\label{A-0},
\end{align}
\[
\varTheta(x,t_{k})=-\sum_{i_{1},i_{2}}h^{2}\nabla_{x}K_{2,\epsilon}\left(Y_{t_{k}}^{i_{1},i_{2}},x\right)\cdot R(x^{i_{1},i_{2}},t_{k};0)\varTheta_{i_{1},i_{2}},
\]
and 
\[
F(X^{i_1,i_2}_{t_k}, t_k) = \alpha \varTheta_1(X^{i_1,i_2}_{t_k}, t_k) -\alpha \frac{\partial \theta}{\partial x_1} (X^{i_1,i_2}_{t_k},0),
\]
where the gradient of the Biot-Savart kernel is replaced by
\begin{equation}
\nabla_{x}K_{2,\epsilon}(y,x)=\nabla_{x}\left[\left(1-e^{-\frac{|y-x|^{2}}{\epsilon}}\right)K_2(y,x)\right].\label{D-K}
\end{equation}

\begin{rem}
This scheme is still quite computationally expensive as it requires
storing the values of $A$ at all times. Since in any simulation
for non-linear dynamics, the time duration cannot be long, and one can use their approximations of the time integral. That is, equations (\ref{du-01}) and (\ref{A-0}) can be substituted with the following iterations:
\begin{align*}
u(x,t_{k})=&\sum_{i_{1},i_{2}}h^{2}K_{2,\epsilon}\left(X_{t_{k}}^{i_{1},i_{2}},x\right)\wedge\omega_{i_{1},i_{2}}
    \nonumber\\
    &+ \sum_{i_{1},i_{2}} k\delta h^{2}K_{2,\epsilon}\left(X_{t_{k}}^{i_{1},i_{2}},x\right)\wedge F\left(X_{t_{k-1}}^{i_{1},i_{2}},t_{k-1}\right)
\end{align*}
and
\begin{align*}
A(x,t_{k})=&\sum_{i_{1},i_{2}}h^{2}\nabla_{x}K_{2,\epsilon}\left(X_{t_{k}}^{i_{1},i_{2}},x\right)\wedge\omega_{i_{1},i_{2}}\nonumber\\
&+ \sum_{i_{1},i_{2}} k\delta h^{2}\nabla_{x}K_{2,\epsilon}\left(X_{t_{k}}^{i_{1},i_{2}},x\right)\wedge F\left(X_{t_{k-1}}^{i_{1},i_{2}},t_{k-1}\right)
\end{align*}
respectively.
\end{rem}

\subsubsection{Multi-copy scheme}
We introduce the second numerical scheme, where the expectation is substituted with the empirical mean based on the strong law of large numbers. Take $2N$ independent copies of Brownian motions $(B^{1,m}_t)$ and $(B^{2,m}_t)$, $m=1,2,\cdots, N$. 

We repeat the diffusion processes of the twin particle $N$ times by running $2N$ independent copies of Brownian motion and replacing the expectations with their averages.  That is, 
for $1\leq m\leq N$, define
\[
X_{0}^{m,i_{1},i_{2}}=x^{i_{1},i_{2}},\quad X_{t_{k}}^{m,i_{1},i_{2}}=X_{t_{k-1}}^{m,i_{1},i_{2}}+\delta u\left(X_{t_{k-1}}^{m,i_{1},i_{2}},t_{k-1}\right)+\sqrt{2\nu}(B_{t_{k}}^{1,m}-B_{t_{k-1}}^{1,m})
\]
and
\[
Y_{0}^{m,i_{1},i_{2}}=x^{i_{1},i_{2}},\quad Y_{t_{k}}^{m,i_{1},i_{2}}=Y_{t_{k-1}}^{m,i_{1},i_{2}}+\delta u\left(Y_{t_{k-1}}^{m,i_{1},i_{2}},t_{k-1}\right)+\sqrt{2\kappa}(B_{t_{k}}^{2,m}-B_{t_{k-1}}^{2,m}).
\]
Then the velocity and temperature are approximated by 
\begin{align*}
u(x,t_{k})=&\frac{1}{N}\sum_{m=1}^N\sum_{i_{1},i_{2}}h^{2}K_{2,\epsilon}\left(X_{t_{k}}^{m,i_{1},i_{2}},x\right)\wedge\omega_{i_{1},i_{2}}\\
&+\frac{1}{N}\sum_{m=1}^N \sum_{i_1,i_2}\sum_{j=1}^{k}\delta h^{2}K_{2,\epsilon}\left(X_{t_{k}}^{m,i_{1},i_{2}},x\right)\wedge F\left(X_{t_{j-1}}^{m,i_{1},i_{2}},t_{j-1}\right), \\
\theta(x,t_{k})=&-\frac{1}{N}\sum_{m=1}^N\sum_{i_{1},i_{2}}h^{2}K_{2,\epsilon}\left(Y_{t_{k}}^{m,i_{1},i_{2}},x\right)\cdot R(x^{i_{1},i_{2}},t_{k};0)\varTheta_{i_{1},i_{2}},
\end{align*}
where similar to the first scheme, 
\begin{align*} 
A(x,t_{k})= &\frac{1}{N}\sum_{m=1}^N \sum_{i_{1},i_{2}}h^{2}\nabla_{x}K_{2,\epsilon}\left(X_{t_{k}}^{m,i_{1},i_{2}},x\right)\wedge\omega_{i_{1},i_{2}}\\
&+\frac{1}{N}\sum_{m=1}^N \sum_{i_1,i_2}\sum_{j=1}^{k}\delta h^{2}\nabla_{x}K_{2,\epsilon}\left(X_{t_{k}}^{m,i_{1},i_{2}},x\right)\wedge F\left(X_{t_{j-1}}^{m,i_{1},i_{2}},t_{j-1}\right), \\
 \varTheta(x,t_{k})=&-\frac{1}{N}\sum_{m=1}^N \sum_{i_{1},i_{2}}h^{2}\nabla_{x}K_{2,\epsilon}\left(Y_{t_{k}}^{m,i_{1},i_{2}},x\right)\cdot R(x^{i_{1},i_{2}},t_{k};0)\varTheta_{i_{1},i_{2}},
\end{align*}
where $R$ depends on $m$ such that for each $m$, $R(x^{i_{1},i_{2}},t_{k};t_{k})=I$, and
\[
R(x^{i_{1},i_{2}},t_{k};t_{i})=I-\sum_{l=i+1}^{k}\delta A\left(Y_{t_{l}}^{m,i_{1},i_{2}},t_{l}\right)R(x^{i_{1},i_{2}},t_{k};t_{l}).
\]
The rest of the scheme remains the same as in the one-copy scheme.

\subsubsection{Numerical experiments}
Following the one-copy numerical scheme described above, we carried out several simple numerical experiments. Here, we present the results of experiments with different Prandtl numbers. In the first experiment, we set $\nu = 1$ and $\kappa = 0.15$, so that the Prandtl number $\mathrm{Pr}= 0.15$. In the second experiment, we swap these two values and consider the case when $\mathrm{Pr}=6.67$. 

We choose the typical length scale $L= 2\pi$, and assume that $\alpha = 0.0005$. The parameter we use to smooth out the Biot-Savart kernel is chosen to be $\epsilon = 0.1$. 

In the experiment presented, we set the initial velocity to be of the form
\[
u(x,0) = (-10\sin(x_2),0),
\]
and the initial temperature is given by 
\[
\theta_0(x) = \theta(x,0) = 0.01(8\pi^2-{x_1}^2-{x_2}^2).
\]

Thus, $\omega(x,0) = 10\cos(x_2)$, and 
\[
\omega_{i_1,i_2} = 10\cos(i_2 h).
\]
The time step is $\delta = 0.01$ with mesh size $h = 2\pi/40$. The numerical experiment results at times $t = 0.6$, $t= 1.2$, $t=1.8$ 
 with $\Pr = 0.15$ and $\Pr = 6.67$ are shown in the Figure \ref{OB-R2-sim-015} and Figure \ref{OB-R2-sim-667}, respectively.

 \begin{figure}[!ht]
     \centering
     \begin{subfigure}[b]{0.3\textwidth}
     \includegraphics[width=\textwidth]{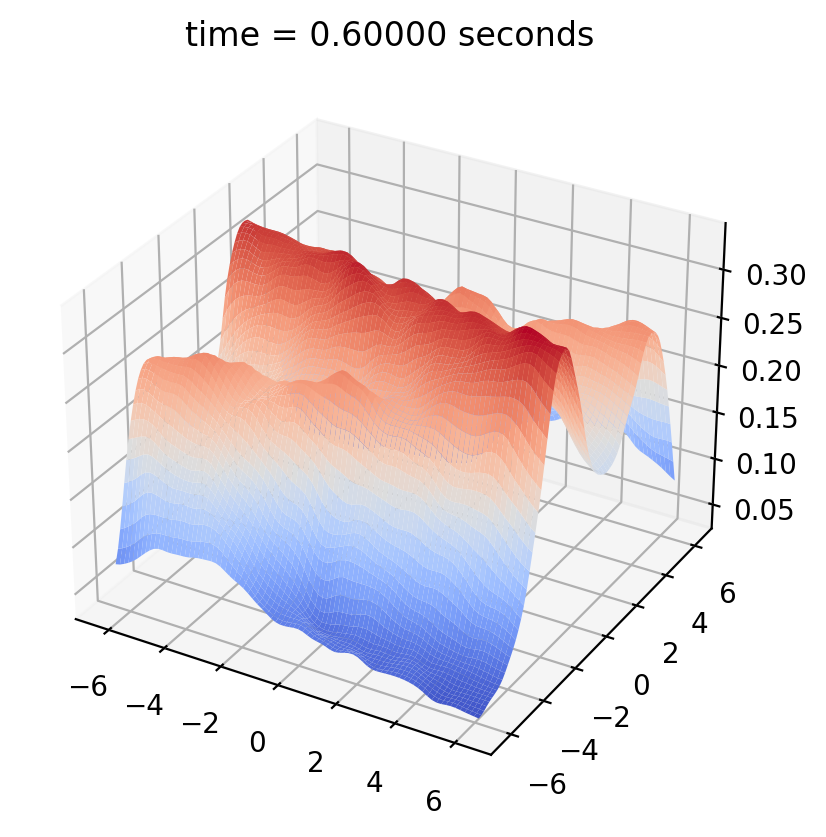}
     \caption{Temperature at $t=0.6$.}
     \end{subfigure}
     \hspace{0.5cm}
     \begin{subfigure}[b]{0.3\textwidth}
     \includegraphics[width=\textwidth]{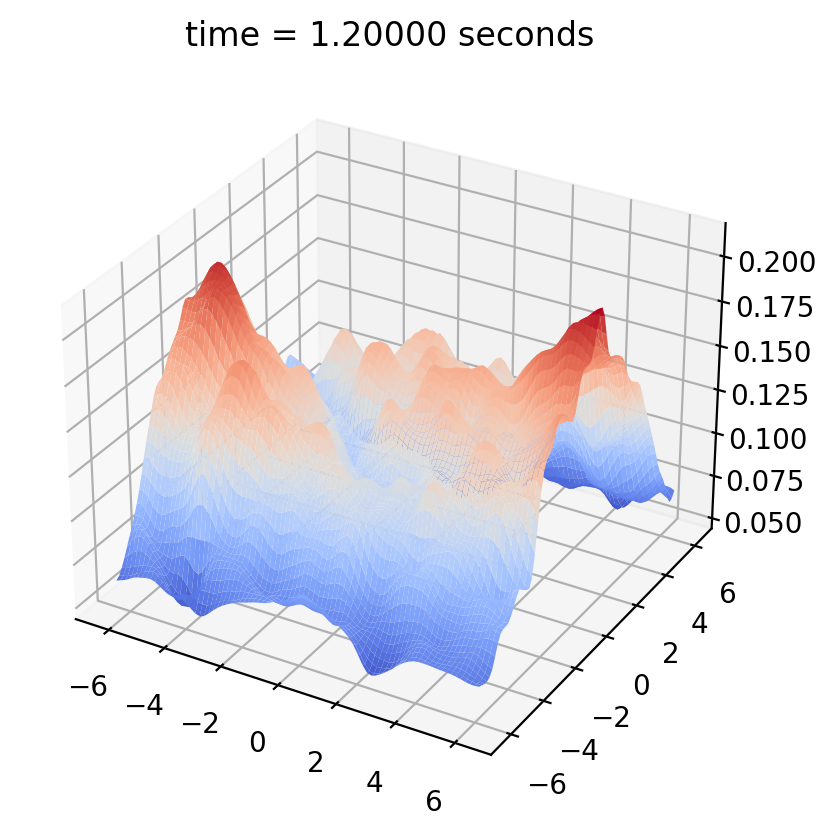}
     \caption{Temperature at $t=1.2$.}
     \end{subfigure}
     \hspace{0.5cm}
     \begin{subfigure}[b]{0.3\textwidth}
     \includegraphics[width=\textwidth]{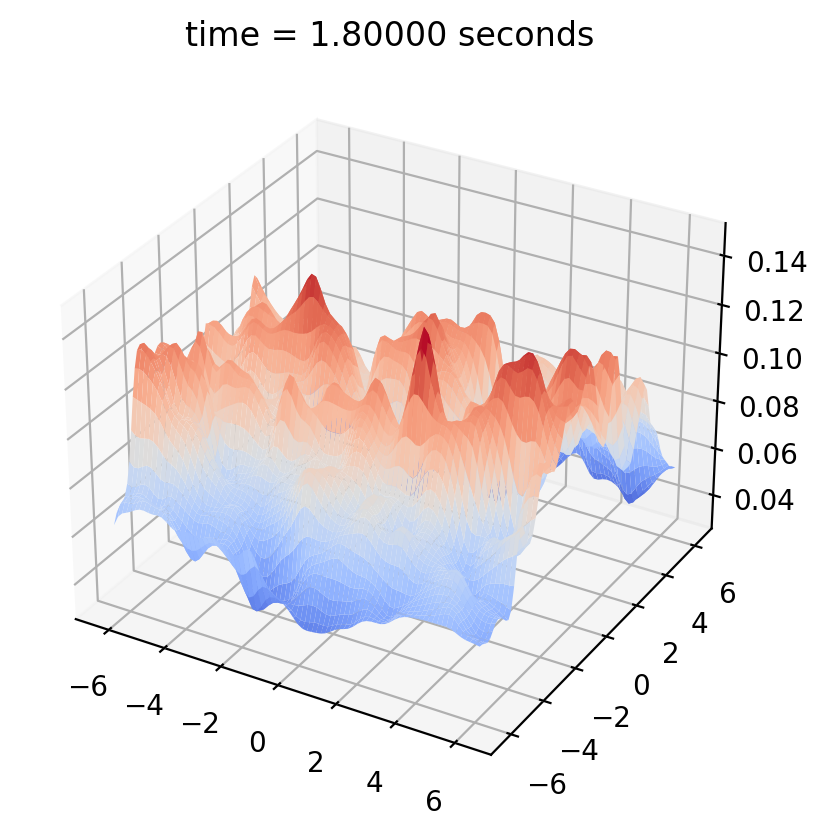}
     \caption{Temperature at $t=1.8$.}
     \end{subfigure}
     
     \vspace{0.5cm}
     
     \begin{subfigure}[b]{0.3\textwidth}
     \includegraphics[width=\textwidth]{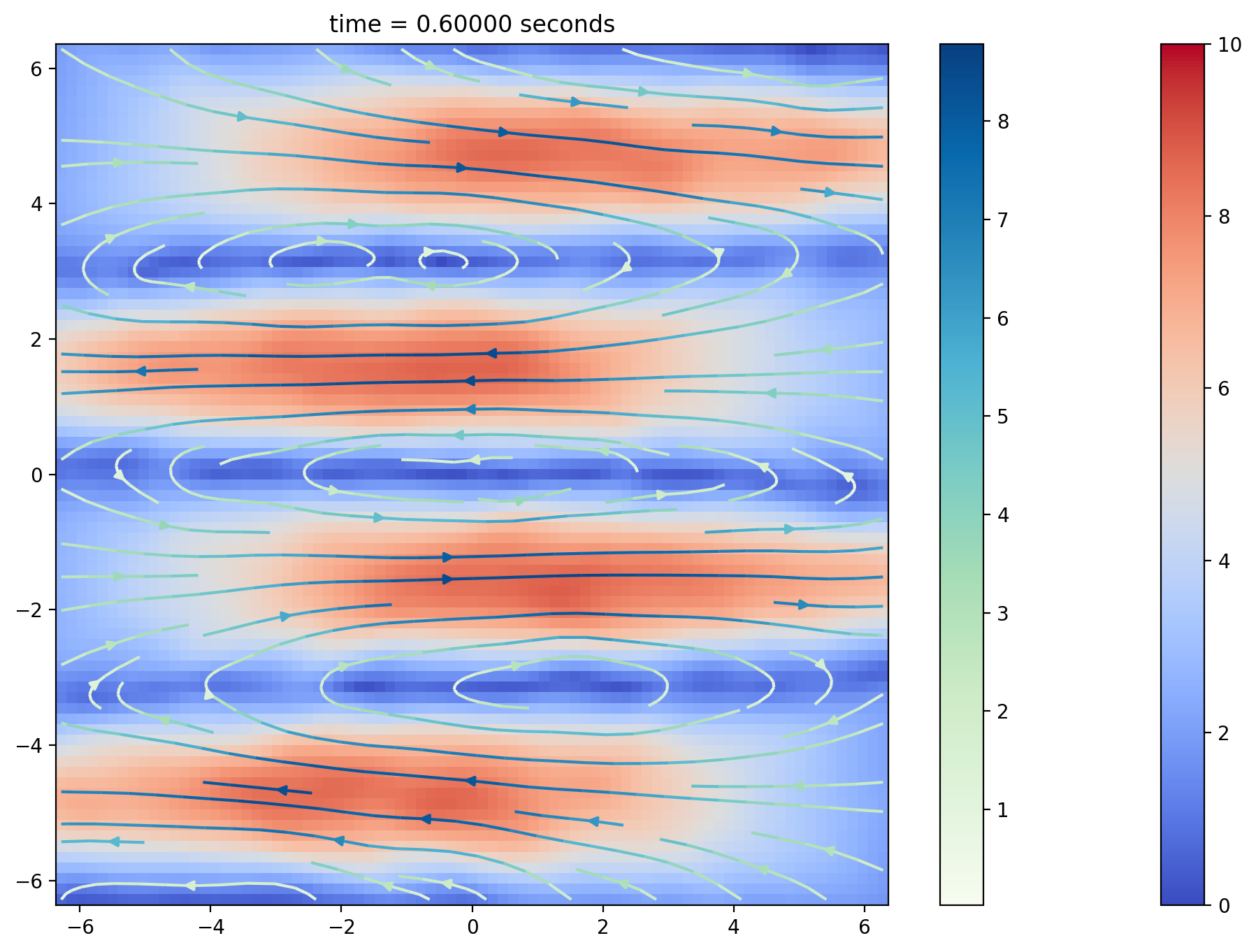}
     \caption{Velocity at $t=0.6$.}
     \end{subfigure}     
     \hspace{0.5cm}
     \begin{subfigure}[b]{0.3\textwidth}
     \includegraphics[width=\textwidth]{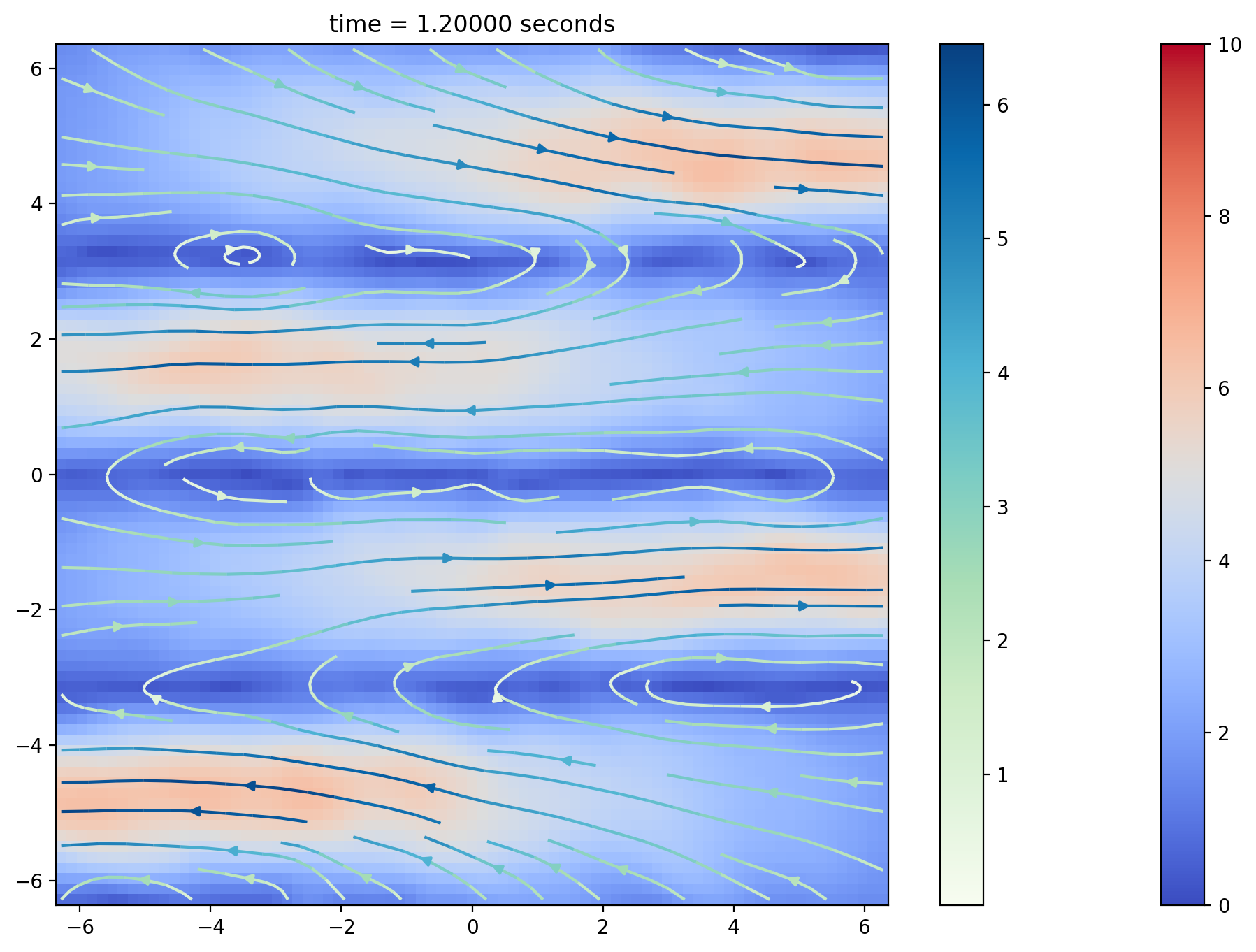}
     \caption{Velocity field at $t=1.2$.}
     \end{subfigure}    
     \hspace{0.5cm}
     \begin{subfigure}[b]{0.3\textwidth}
     \includegraphics[width=\textwidth]{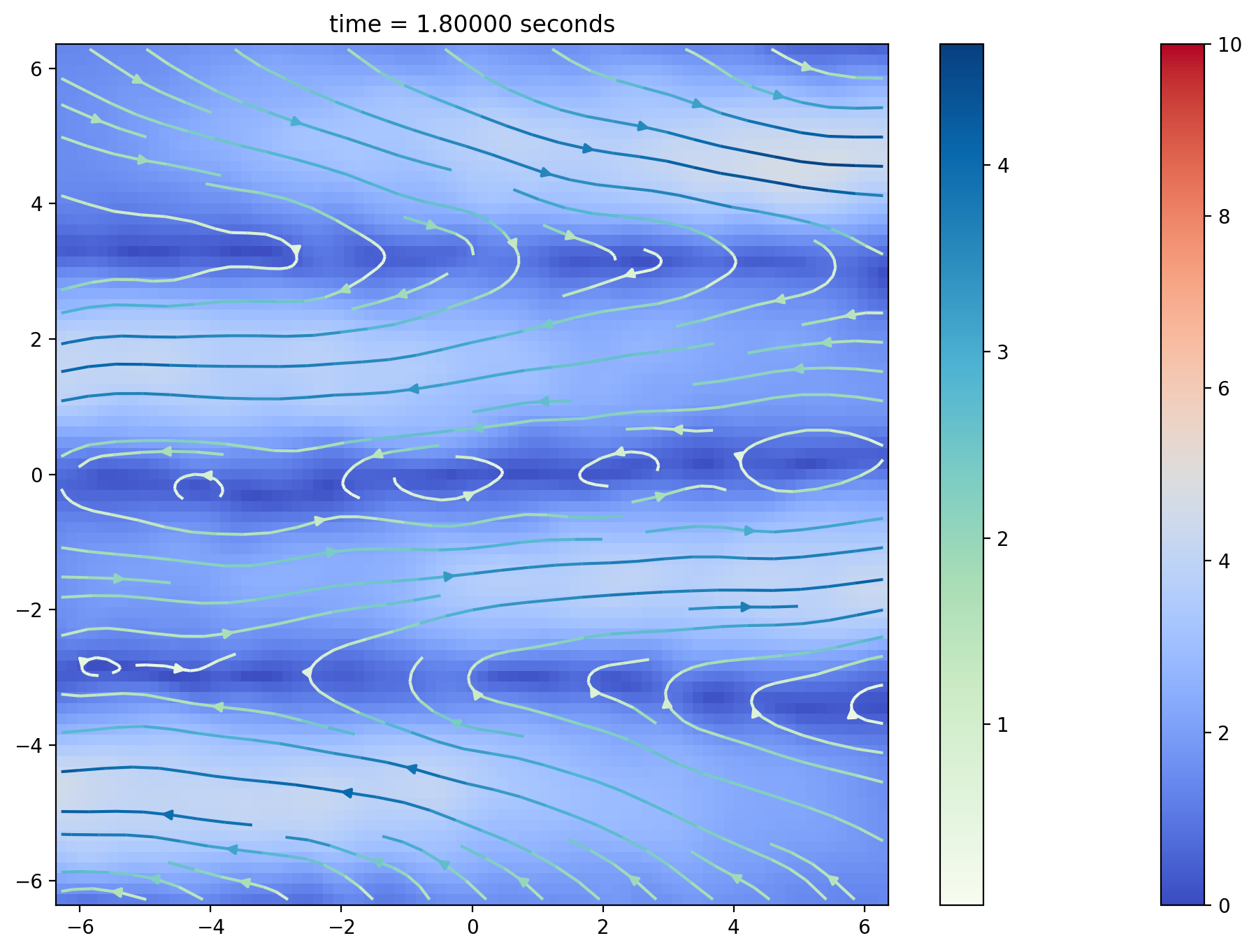}
     \caption{Velocity field at $t=1.8$.}
     \end{subfigure}
 \caption{Temperature and velocity fields of Oberbeck-Boussinesq flows on $\mathbb{R}^2$ with Prandtl number $\mathrm{Pr} = 0.15$.}
 \label{OB-R2-sim-015}
 \end{figure}

 \begin{figure}[!ht]
     \centering
     \begin{subfigure}[b]{0.3\textwidth}
     \includegraphics[width=\textwidth]{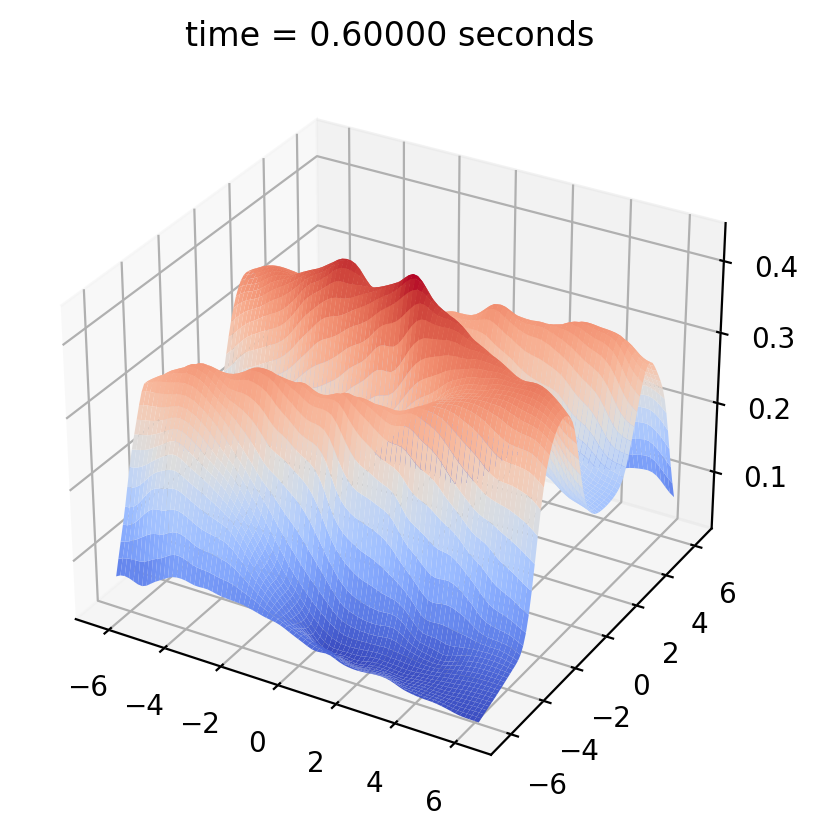}
     \caption{Temperature at $t=0.6$.}
     \end{subfigure}
     \hspace{0.5cm}
     \begin{subfigure}[b]{0.3\textwidth}
     \includegraphics[width=\textwidth]{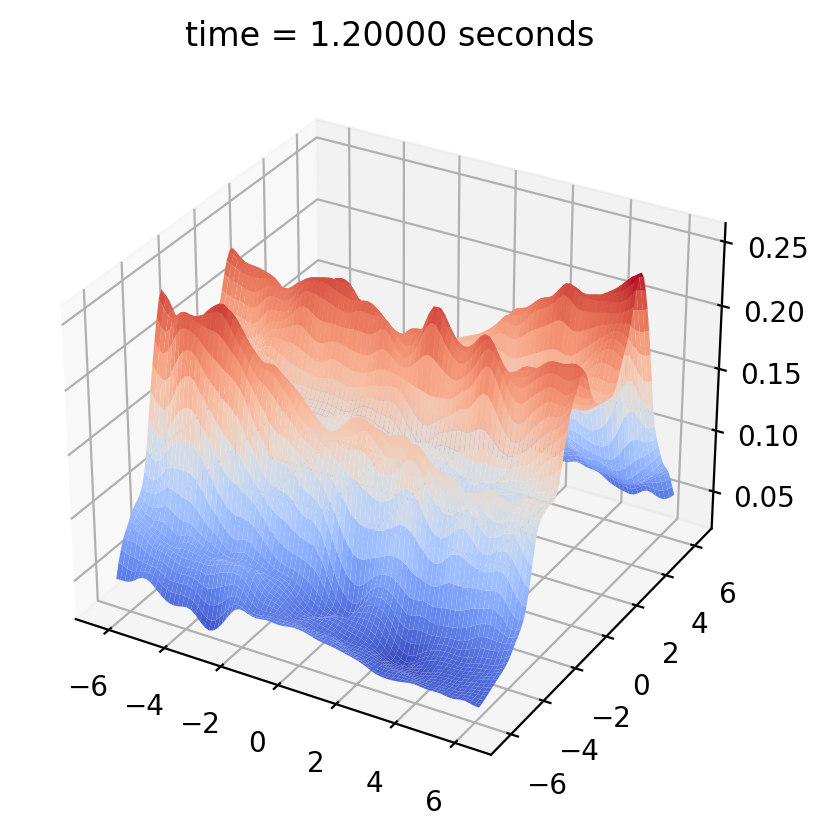}
     \caption{Temperature at $t=1.2$.}
     \end{subfigure}
     \hspace{0.5cm}
     \begin{subfigure}[b]{0.3\textwidth}
     \includegraphics[width=\textwidth]{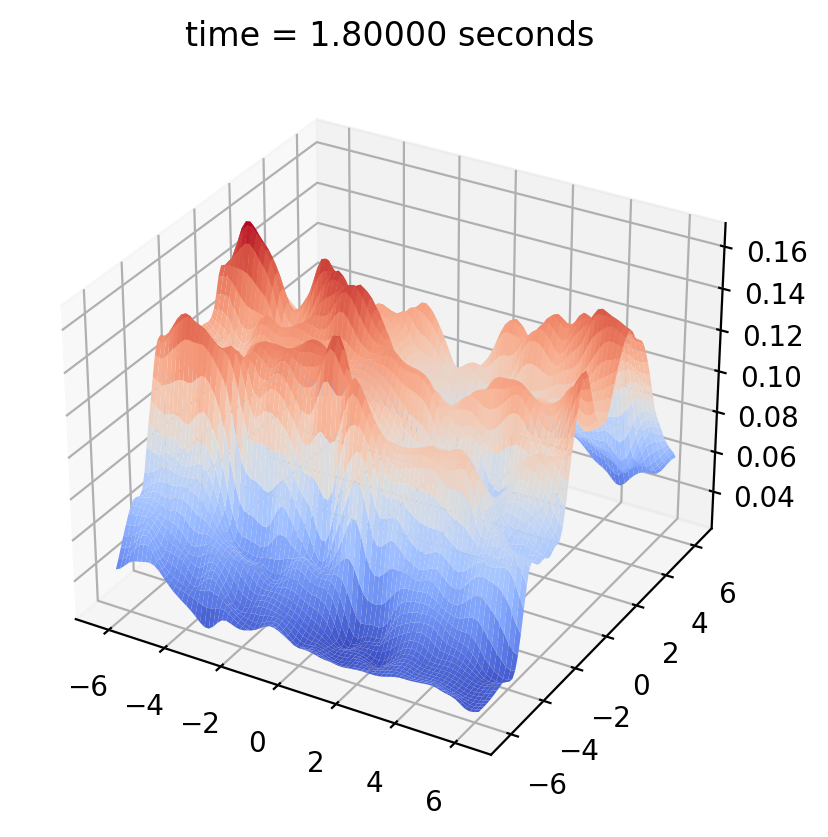}
     \caption{Temperature at $t=1.8$.}
     \end{subfigure}
     
     \vspace{0.5cm}
     
     \begin{subfigure}[b]{0.3\textwidth}
     \includegraphics[width=\textwidth]{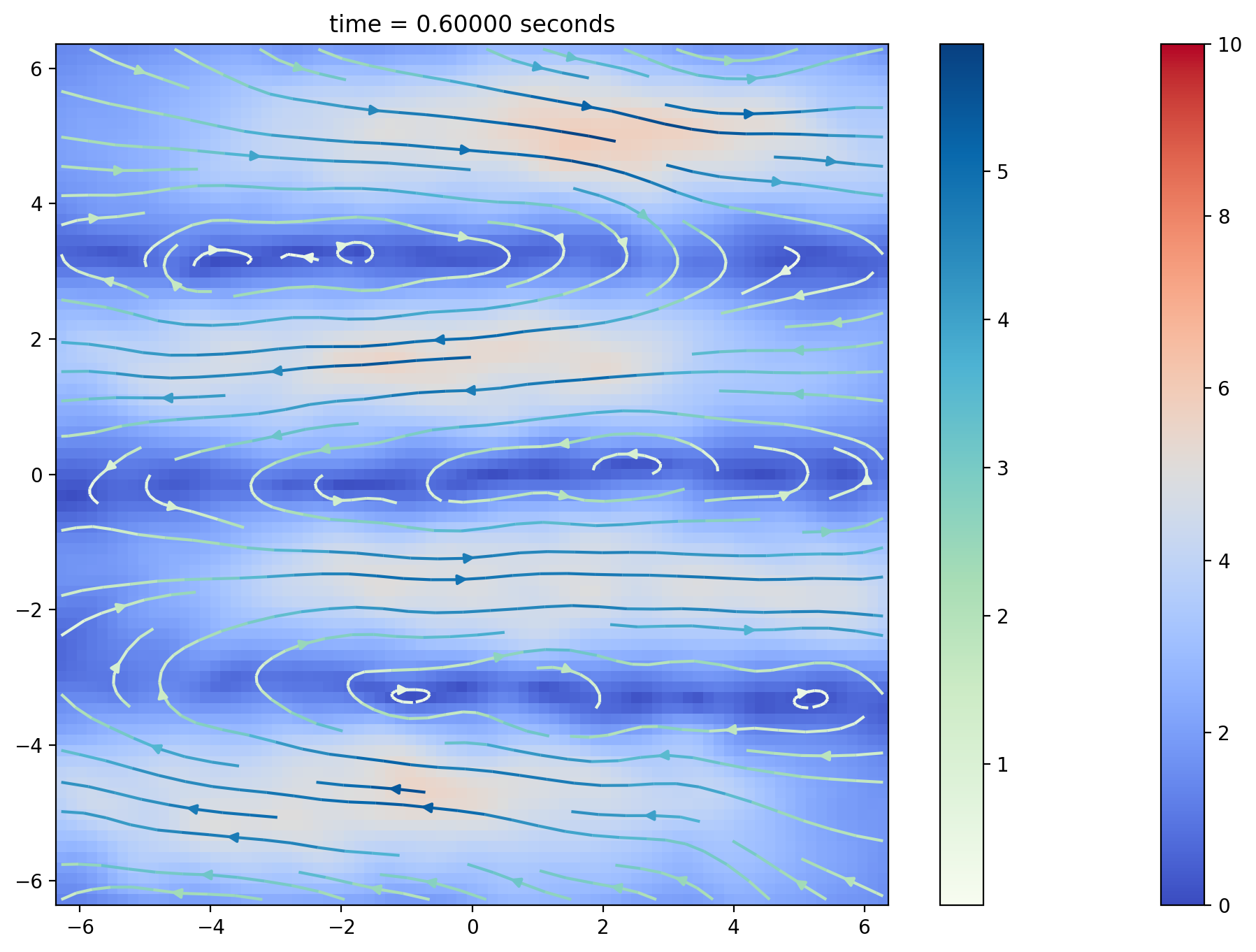}
     \caption{Velocity at $t=0.6$.}
     \end{subfigure}     
     \hspace{0.5cm}
     \begin{subfigure}[b]{0.3\textwidth}
     \includegraphics[width=\textwidth]{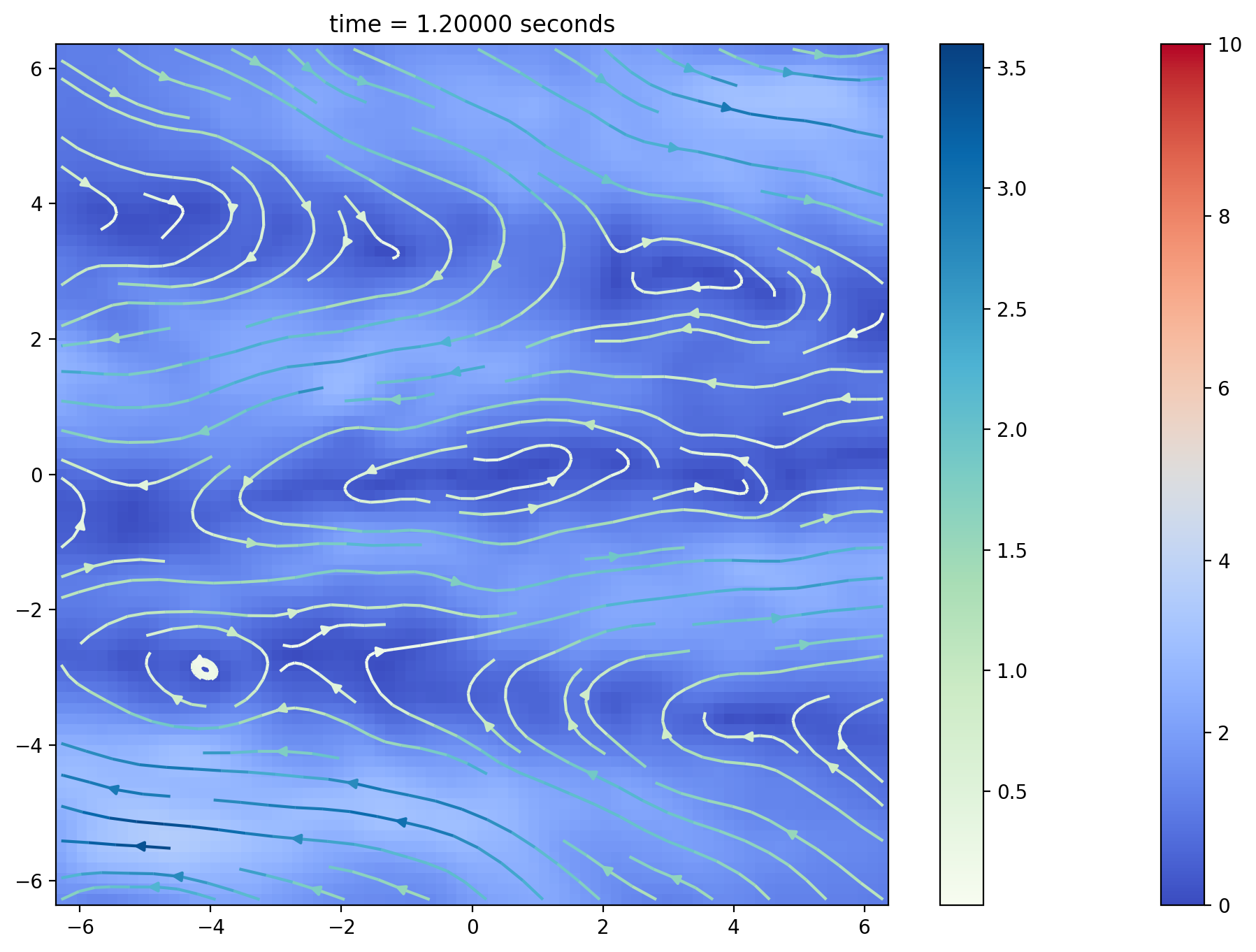}
     \caption{Velocity field at $t=1.2$.}
     \end{subfigure}    
     \hspace{0.5cm}
     \begin{subfigure}[b]{0.3\textwidth}
     \includegraphics[width=\textwidth]{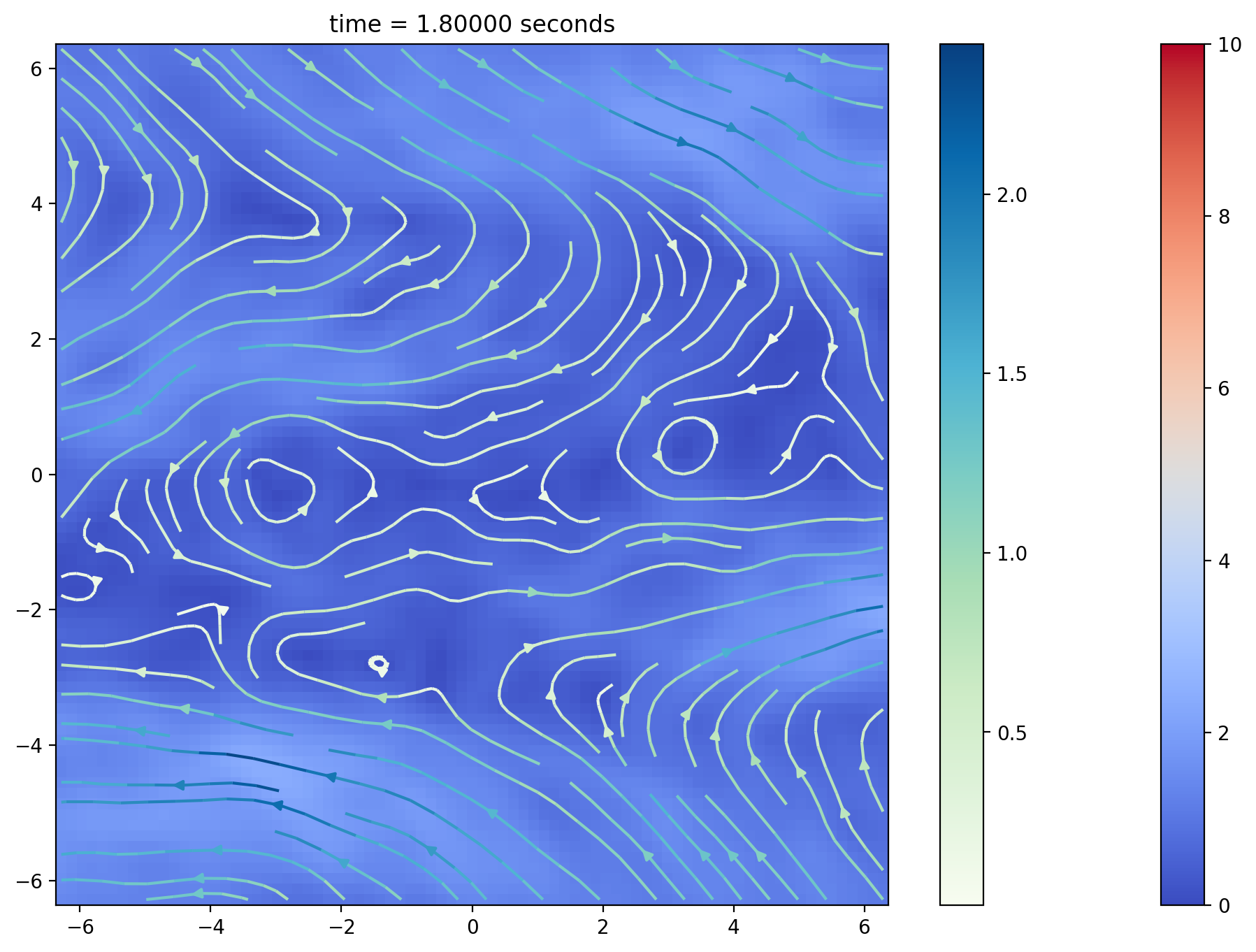}
     \caption{Velocity field at $t=1.8$.}
     \end{subfigure}
 \caption{Temperature and velocity fields of Oberbeck-Boussinesq flows on $\mathbb{R}^2$ with Prandtl number $\mathrm{Pr} = 6.67$.}
 \label{OB-R2-sim-667}
 \end{figure}
    
The figures show how the buoyancy from small temperature variations accelerates the flow. Besides, the growth of temperature is different from the result of a linear parabolic equation - the nonlinearity revealed in the temperature evolution reflects the velocity in the drift term is, in turn, driven by the thermal convection.

\subsection{Oberbeck-Boussinesq flows in wall-bounded domains}

This series of numerical experiments are based on the wall-bounded fluid flows
in Subsection \ref{subsec651}. Therefore the fluid is heated from the bottom with an external source of heat at temperature
$\theta_{0}(x)=\theta_{0}(x_{1})$ , which depends only on the first coordinate. 

The Biot-Savart kernel for this case is given by
\[
\varLambda_{2}(y,x)=\frac{1}{2\pi}\left(\frac{y-x}{|y-x|^{2}}-\frac{y-\overline{x}}{|y-\overline{x}|^{2}}\right)\quad\textrm{ for }y\neq x\textrm{ or }\overline{x}
\]
and for the same reason as in the previous case, we take its regularisation
via a real parameter $\epsilon>0$, and replace the singular integral
kernel by 
\[
\varLambda_{2,\epsilon}(y,x)=\left(1-e^{-\frac{|y-x|^{2}}{\epsilon}}\right)\varLambda_{2}(y,x).
\]

The numerical scheme is based on the functional integral representations
(\ref{u-app-001}, \ref{t-app-01}). Choose $\varepsilon>0$ small.
The approximated representations we will use here are then given by
\begin{align}
u(x,t)  \sim&\int_{\mathbb{R}_{+}^{2}}\mathbb{E}\left[1_{\{t<\zeta(X^{\eta})\}}1_{\mathbb{R}_{+}^{2}}(X_{t}^{\eta})\varLambda_{2}(X_{t}^{\eta},x)\wedge\omega(\eta,0)\right]\rmd\eta\nonumber \\
 & +\int_{0}^{t}\int_{\mathbb{R}_{+}^{2}}\mathbb{E}\left[1_{\left\{ s>\gamma_{t}(X^{\eta})\right\} }1_{\mathbb{R}_{+}^{2}}(X_{t}^{\eta})\varLambda_{2}(X_{t}^{\eta},x)\wedge F(X_{s}^{\eta},s)\right]\rmd\eta\rmd s\nonumber \\
 & +\int_{0}^{t}\int_{\mathbb{R}_{+}^{2}}\mathbb{E}\left[1_{\left\{ s>\gamma_{t}(X^{\eta})\right\} }1_{\mathbb{R}_{+}^{2}}(X_{t}^{\eta})\varLambda_{2}(X_{t}^{\eta},x)\wedge\tilde{\chi}_{\varepsilon}(X_{s}^{\eta},s)\right]\rmd\eta\rmd s\label{u-app-001-1}
\end{align}
and
\begin{align}
\theta(x,t) \sim &\frac{x_{2}}{\pi}\int_{-\infty}^{\infty}\frac{\theta_{0}(\xi_{1})}{|\xi_{1}-x_{1}|^{2}+x_{2}^{2}}\rmd \xi_{1}\nonumber \\
 & -\int_{\mathbb{R}_{+}^{2}}\mathbb{E}\left[1_{\{t<\zeta(Y^{\eta})\}}1_{\mathbb{R}_{+}^{2}}(Y_{t}^{\eta})\varLambda_{2}(Y_{t}^{\eta},x)\cdot R(\eta,t;0)\varTheta(\eta,0)\right]\rmd \eta.\label{t-S-01}
\end{align}
for $x\in\mathbb{R}_{+}^{2}$ and $t>0$. According to (\ref{stop-for01}),
the first term on the right-hand side of (\ref{u-app-001-1}) can
be written as
\[
\int_{\mathbb{R}^{2}}\mathbb{E}\left[1_{\mathbb{R}_{+}^{2}}(X_{t}^{\eta})\varLambda_{2}(X_{t}^{\eta},x)\wedge\hat{\omega}(\eta,0)\right]\rmd\eta, 
\]
where $\hat{\omega}$ can be computed using \eqref{eq:def-hat}
so that
\begin{align}
u(x,t) & \sim\int_{\mathbb{R}_{+}^{2}}\mathbb{E}\left[1_{\mathbb{R}_{+}^{2}}(X_{t}^{\eta})\varLambda_{2}(X_{t}^{\eta},x)\wedge\hat{\omega}(\eta,0)\right]\rmd\eta\nonumber \\
 & +\int_{0}^{t}\int_{\mathbb{R}_{+}^{2}}\mathbb{E}\left[1_{\left\{ s>\gamma_{t}(X^{\eta})\right\} }1_{\mathbb{R}_{+}^{2}}(X_{t}^{\eta})\varLambda_{2}(X_{t}^{\eta},x)\wedge F(X_{s}^{\eta},s)\right]\rmd\eta\rmd s\nonumber \\
 & +\int_{0}^{t}\int_{\mathbb{R}_{+}^{2}}\mathbb{E}\left[1_{\left\{ s>\gamma_{t}(X^{\eta})\right\} }1_{\mathbb{R}_{+}^{2}}(X_{t}^{\eta})\varLambda_{2}(X_{t}^{\eta},x)\wedge\tilde{\chi}_{\varepsilon}(X_{s}^{\eta},s)\right]\rmd\eta\rmd s.\label{u-S-01}
\end{align}

Now we are in a position to discretise (\ref{u-S-01}) and (\ref{t-S-01})
and obtain the numerical schemes. Namely, set $\hat{\omega}_{i_{1},i_{2}}=\hat{\omega}(x^{i_{1},i_{2}},0)$
for $i_{2}\geq0$. Let $\epsilon>0$ be another small constant to
take care of the differentiation of the singular kernel. Choose $\varepsilon\ll h$, and constants $h$, $\varepsilon$ and $\epsilon$
are fixed through numerical experiments, but they can be adjusted.

We assume that the bounded boundary layer is $B= \{x\in \mathbb{R}^2:0<x_2<\gamma\}$, and use a different mesh size $h_w$ for the boundary layer. Let $h_{i_1,i_2}$ be the vertical mesh size such that 
\[
h_{i_1,i_2} =
\begin{cases}
    h_w, & \text{in }B,\\
    h, &  \text{otherwise}.
\end{cases}
\]
\subsubsection{One copy scheme}

In this scheme, the expectations in the integral representations are treated by using the one-copy scheme, i.e., simply drop the expectation sign in the random vortex system. Due to the no-slip condition, the velocity and thus the heat conduction decreases rapidly within a thin layer adjoining the wall, which is commonly known as the boundary layer (see e.g. Chapter 4, \cite{LandauLifishitzFluid}).

Similar to the whole plane case, we discretise the SDE system using the Euler scheme:
\begin{align*}
        X_0^{i_1,i_2} = x^{i_1,i_2},
        \quad
        &X_{t_k}^{i_1,i_2} = X_{t_{k-1}}^{i_1,i_2} +\delta u(X_{t_{k-1}}^{i_1,i_2}, t_{k-1}) +
        \sqrt{2\nu} (B_{t_k}^1 - B_{t_{k-1}}^1),
        \\
        Y_0^{i_1,i_2} = x^{i_1,i_2},
        \quad
        &Y_{t_k}^{i_1,i_2} = Y_{t_{k-1}}^{i_1,i_2} +\delta u(Y_{t_{k-1}}^{i_1,i_2}, t_{k-1}) +
        \sqrt{2\kappa} (B_{t_k}^2 - B_{t_{k-1}}^2),        
\end{align*}
Furthermore, we set the force $F$ to be
\[F(x,t) = \alpha\left(\frac{\partial \theta}{\partial x_1}(x,t)-\frac{\partial \theta_0}{\partial x_1}(x_1)\right),\]
where $\theta_0(x_1)$ represents the external source of heat imposed on the boundary, and it is assumed to be almost constant $T_w>0$ on $[-L,L]$, vanish everywhere else, and has a small derivative.

The equation \eqref{u-S-01} is approximated by
\begin{align}
u(x,t_{k})  =& \sum_{i_{1}}\sum_{i_{2}\geq 0}hh_{i_1,i_2}\prod_{j=1}^k 1_{\mathbb{R}_{+}^{2}}(X_{t_{j}}^{i_{1},i_{2}})\varLambda_{2,\epsilon}(X_{t_{k}}^{i_{1},i_{2}},x)\wedge\hat{\omega}_{i_{1},i_{2}}\nonumber \\
 & +\sum_{i_{1}}\sum_{i_{2}\geq0}\delta hh_{i_1,i_2}\sum_{j=1}^{k}\prod_{l=j}^k 1_{\mathbb{R}_{+}^{2}}(X_{t_{l}}^{i_{1},i_{2}})\varLambda_{2,\epsilon}(X_{t_{k}}^{i_{1},i_{2}},x)\wedge F(X_{t_{j-1}}^{i_{1},i_{2}},t_{j-1})\nonumber \\
 & +\sum_{i_{1}}\delta h h_w\sum_{j=1}^{k}\prod_{l=j}^k 1_{\mathbb{R}_{+}^{2}}(X_{t_{l}}^{i_{1},0})\varLambda_{2,\epsilon}(X_{t_{k}}^{i_{1},0},x)\wedge\tilde{\chi}_{\varepsilon}(X_{t_{j-1}}^{i_{1},0},t_{j-1})\label{BS-01}
\end{align}
for $x=(x_{1},x_{2})$ with $x_{2}>0$, and 
\[
u(x,t)=\left(u^{1}((x_{1},-x_{2}),t),-u^{2}((x_{1},-x_{2}),t)\right)
\]
for $x=(x_{1},x_{2})$ with  $x_{2}<0$, where $\tilde{\chi}_{\varepsilon}$ is given
by (\ref{error-app-001}),
\begin{align*}
\tilde{\chi}_{\varepsilon}(x,t) & =-\nu\frac{1}{\varepsilon^{2}}\phi''\left(\frac{x_{2}}{\varepsilon}\right)\left.\frac{\partial}{\partial x_{2}}\right|_{x_{2}=0+}u^{1}(x,t)\\
 & =-\nu\frac{1}{\varepsilon^{2}}\phi''\left(\frac{x_{2}}{\varepsilon}\right)A_{2}^{1}((x_{1},0),t)
\end{align*}
with $\phi$ given by \eqref{phi-def}, whose second derivative is 
\[
\phi''(r) = 
\begin{cases}
    324(r-\frac{1}{2}),\quad &\frac{1}{3}\leq r\leq \frac{2}{3},\\
    0,\quad &\text{otherwise.}
\end{cases}
\]

The boundary integral is replaced with the third term on the right-hand side of \eqref{BS-01} since we assume $\varepsilon<h_w\ll h$ and $\tilde{\chi_\varepsilon}(x,t)$ only supports on $\frac{\varepsilon}{3}\leq x_2\leq \frac{2\varepsilon}{3}$, and thus it is sufficient to consider the diffusion starting within the boundary layer. 

To discretise (\ref{t-S-01}), we use the approximation 
\[
\frac{x_{2}}{\pi}\int_{-\infty}^{\infty}\frac{\theta_{0}(\xi_{1})}{|\xi_{1}-x_{1}|^{2}+x_{2}^{2}}\rmd \xi_{1} \approx T_w\frac{1}{\pi} \left( \tan^{-1} \left(\frac{L-x_1}{x_2}\right)+\tan^{-1} \left(\frac{L+x_1}{x_2}\right)\right)
\]
and thus
\begin{align}
\theta(x,&t_{k})  =\frac{T_w}{\pi} \left( \tan^{-1} \left(\frac{L-x_1}{x_2}\right)+\tan^{-1} \left(\frac{L+x_1}{x_2}\right)\right)\nonumber\\
 &-\sum_{i_{1}}\sum_{i_{2}\geq0}hh_{i_1,i_2}\prod_{j=1}^k 1_{\mathbb{R}_{+}^{2}}(Y_{t_{j}}^{i_{1},i_{2}})\varLambda_{2, \epsilon}(Y_{t_{k}}^{i_{1},i_{2}},x)\cdot R(x^{i_{1},i_{2}},t_{k};0)\varTheta_{i_{1},i_{2}}\label{BS-04}
\end{align}
and the gauge functional equation (\ref{Ga-02}) by $R(x^{i_{1},i_{2}},t_{k};t_{k})=I$, and
\[
R(x^{i_{1},i_{2}},t_{k};t_{i})=I-\sum_{l=i+1}^{k}\delta A\left(Y_{t_{l}}^{i_{1},i_{2}},t_{l}\right)R(x^{i_{1},i_{2}},t_{k};t_{l})1_{\mathbb{R}_{+}^{2}}(Y_{t_{l}}^{i_{1},i_{2}})
\]
for $i=0,\cdots,k-1$.
Again, $\nabla\theta(x,t)$ and $\nabla u(x,t)$ (which then update
the values of $A(x,t)$ and $F(x,t)$) may be calculated by differentiating
the iterations (\ref{BS-01}, \ref{BS-04}), which are defined by
\begin{align*}
A(x,&t_{k})  = \sum_{i_{1}}\sum_{i_{2}\geq0}h h_{i_1,i_2}\prod_{j=1}^k 1_{\mathbb{R}_{+}^{2}}(X_{t_{j}}^{i_{1},i_{2}})\nabla_x \varLambda_{2,\epsilon}(X_{t_{k}}^{i_{1},i_{2}},x)\wedge\hat{\omega}_{i_{1},i_{2}}\nonumber \\
 & +\sum_{i_{1}}\sum_{i_{2}\geq0}\delta h h_{i_1,i_2}\sum_{j=1}^{k}\prod_{l=j}^k 1_{\mathbb{R}_{+}^{2}}(X_{t_{l}}^{i_{1},i_{2}})\nabla_{x}\varLambda_{2,\epsilon}(X_{t_{k}}^{i_{1},i_{2}},x)\wedge F(X_{t_{j-1}}^{i_{1},i_{2}},t_{j-1})\nonumber \\
 & +\sum_{i_{1}}\delta h h_w\sum_{j=1}^{k}\prod_{l=j}^k 1_{\mathbb{R}_{+}^{2}}(X_{t_{l}}^{i_{1},0})\nabla_{x}\varLambda_{2,\epsilon}(X_{t_{k}}^{i_{1},0},x)\wedge\tilde{\chi}_{\varepsilon}(X_{t_{j-1}}^{i_{1},0},t_{j-1})
\end{align*}
and
\begin{align*}
\frac{\partial}{\partial x_{1}}&\theta(x,t_{k})  = \frac{T_w}{\pi}\left(\frac{x_2}{x_2^2+(L+x_1)^2}- \frac{x_2}{x_2^2+(L-x_1)^2}\right)\nonumber \\
 & -\sum_{i_{1}}\sum_{i_{2}\geq0}h h_{i_1,i_2}\prod_{j=1}^k 1_{\mathbb{R}_{+}^{2}}(Y_{t_{j}}^{i_{1},i_{2}})\frac{\partial}{\partial x_{1}}\varLambda_{2,\epsilon}(Y_{t_{k}}^{i_{1},i_{2}},x)\cdot R(x^{i_{1},i_{2}},t_{k};0)\varTheta_{i_{1},i_{2}}
\end{align*}
for $x=(x_{1},x_{2})$ with $x_{2}>0$, where the singular integral
kernel is given by 
\[
\nabla_{x}\varLambda_{2,\epsilon}(y,x)=\nabla_{x}\left(\left(1-e^{-\frac{|y-x|^{2}}{\epsilon}}\right)\varLambda_{2,\epsilon}(y,x)\right).
\]

Of course, we need (\ref{XB}) and (\ref{YB}), which take precisely the same forms, except for $u(x,t)=\mathscr{R}(u(\overline{x},t))$ if $x_{2}<0$.
This completes the scheme. 
Finally, we would like to note that this scheme is designed for laminar flows.

\begin{rem}
   It should be highlighted that the cost of computing the wall-bounded case is only marginally higher than the whole plane case. Though it seems that we need to store both diffusion paths $X$ and $Y$, as well as the path of $A$, indeed, we only need to keep track of the latter, similar to the whole plane case. In each iteration, only the spot values of $X$ and $Y$ are needed. 
\end{rem}

 \subsubsection{Multi-copy scheme}

Again, we may introduce another numerical scheme via the strong law of large numbers. By taking $2N$ independent Brownian particles that are modelled by the following discretised stochastic differential equations using the Euler scheme: for $1\leq m\leq N$,
\begin{align*}
&X_{0}^{m,i_{1},i_{2}}=x^{i_{1},i_{2}},\quad X_{t_{k}}^{m,i_{1},i_{2}}=X_{t_{k-1}}^{m,i_{1},i_{2}}+\delta u\left(X_{t_{k-1}}^{m,i_{1},i_{2}},t_{k-1}\right)+\sqrt{2\nu}(B_{t_{k}}^{1,m}-B_{t_{k-1}}^{1,m}),\\
&Y_{0}^{m,i_{1},i_{2}}=x^{i_{1},i_{2}},\quad Y_{t_{k}}^{m,i_{1},i_{2}}=Y_{t_{k-1}}^{m,i_{1},i_{2}}+\delta u\left(Y_{t_{k-1}}^{m,i_{1},i_{2}},t_{k-1}\right)+\sqrt{2\kappa}(B_{t_{k}}^{2,m}-B_{t_{k-1}}^{2,m}),
\end{align*}
where $u(x,t)=\mathscr{R}(u(\overline{x},t))$ if $x_{2}<0$, and the velocity $u$ and temperature $\theta$ are approximated by 
\begin{align*}
u(x,&t_{k})  = \frac{1}{N}\sum_{m=1}^N \sum_{i_{1}}\sum_{i_{2}\geq0}h h_{i_1,i_2}\prod_{j=1}^k 1_{\mathbb{R}_{+}^{2}}(X_{t_{j}}^{m,i_{1},i_{2}})\varLambda_{2,\epsilon}(X_{t_{k}}^{m,i_{1},i_{2}},x)\wedge\hat{\omega}_{i_{1},i_{2}}\nonumber \\
 & +\frac{1}{N}\sum_{m=1}^N\sum_{i_{1}}\sum_{i_{2}\geq0}\delta h h_{i_1,i_2}\sum_{j=1}^{k}\prod_{l=j}^k 1_{\mathbb{R}_{+}^{2}}(X_{t_{l}}^{m,i_{1},i_{2}})\varLambda_{2,\epsilon}(X_{t_{k}}^{m,i_{1},i_{2}},x)\wedge F(X_{t_{j-1}}^{m,i_{1},i_{2}},t_{j-1})\nonumber \\
 & +\frac{1}{N}\sum_{m=1}^N\sum_{i_{1}}\delta h h_w\sum_{j=1}^{k}\prod_{l=j}^k 1_{\mathbb{R}_{+}^{2}}(X_{t_{l}}^{m,i_{1},0})\varLambda_{2,\epsilon}(X_{t_{k}}^{m,i_{1},0},x)\wedge\tilde{\chi}_{\varepsilon}(X_{t_{j-1}}^{m,i_{1},0},t_{j-1})\label{BS-01}
\end{align*}
and 
\begin{align*}
   \theta(x,&t_{k})  = \frac{T_w}{\pi} \left( \tan^{-1} \left(\frac{L-x_1}{x_2}\right)+\tan^{-1} \left(\frac{L+x_1}{x_2}\right)\right)\nonumber \\
 &-\frac{1}{N}\sum_{m=1}^N \sum_{i_{1}}\sum_{i_{2}\geq0}h h_{i_1,i_2} \prod_{j=1}^k 1_{\mathbb{R}_{+}^{2}}(Y_{t_{j}}^{m,i_{1},i_{2}})\varLambda_{2}(Y_{t_{k}}^{m,i_{1},i_{2}},x)\cdot R(x^{i_{1},i_{2}},t_{k};0)\varTheta_{i_{1},i_{2}}.
\end{align*}
Similarly, for each $m$, the gauge functional equation is given by $R(x^{i_{1},i_{2}},t_{k};t_{k})=I$, and
\begin{align*}
R(x^{i_{1},i_{2}},t_{k};t_{i})=I-\sum_{l=i+1}^{k}\delta A\left(Y_{t_{l}}^{m,i_{1},i_{2}},t_{l}\right)R(x^{i_{1},i_{2}},t_{k};t_{l})1_{\mathbb{R}_{+}^{2}}(Y_{t_{l}}^{m,i_{1},i_{2}})
\end{align*}
for $i=0,\cdots,k-1$. As for the derivatives, similar to the whole plane case, we have
\begin{align*}
A(x,&t_{k})  = \frac{1}{N}\sum_{m=1}^N \sum_{i_{1}}\sum_{i_{2}\geq0}h h_{i_1,i_2}\prod_{j=1}^k 1_{\mathbb{R}_{+}^{2}}(X_{t_{j}}^{m,i_{1},i_{2}})\varLambda_{2,\epsilon}(X_{t_{k}}^{m,i_{1},i_{2}},x)\wedge\hat{\omega}_{i_{1},i_{2}}\nonumber \\
 & +\frac{1}{N}\sum_{m=1}^N \sum_{i_{1}}\sum_{i_{2}\geq0}\delta h h_{i_1,i_2}\sum_{j=1}^{k}\prod_{l=j}^k 1_{\mathbb{R}_{+}^{2}}(X_{t_{l}}^{m,i_{1},i_{2}})\nabla_{x}\varLambda_{2,\epsilon}(X_{t_{k}}^{m,i_{1},i_{2}},x)\wedge F(X_{t_{j-1}}^{m,i_{1},i_{2}},t_{j-1}) \\
 & +\frac{1}{N}\sum_{m=1}^N \sum_{i_{1}}\delta h h_w\sum_{j=1}^{k}\prod_{l=j}^k 1_{\mathbb{R}_{+}^{2}}(X_{t_{l}}^{m,i_{1},0})\nabla_{x}\varLambda_{2,\epsilon}(X_{t_{k}}^{m,i_{1},0},x)\wedge\tilde{\chi}_{\varepsilon}(X_{t_{j-1}}^{m,i_{1},0},t_{j-1})
\end{align*}
and
\begin{align*}
\frac{\partial}{\partial x_{1}}&\theta(x,t_{k})  = \frac{T_w}{\pi}\left(\frac{x_2}{x_2^2+(L+x_1)^2}- \frac{x_2}{x_2^2+(L-x_1)^2}\right) \\
 & -\frac{1}{N}\sum_{m=1}^N \sum_{i_{1}}\sum_{i_{2}\geq0}h h_{i_1,i_2}\prod_{j=1}^k 1_{\mathbb{R}_{+}^{2}}(Y_{t_{j}}^{m,i_{1},i_{2}})\frac{\partial}{\partial x_{1}}\varLambda_{2,\epsilon}(Y_{t_{k}}^{m,i_{1},i_{2}},x)\cdot R(x^{i_{1},i_{2}},t_{k};0)\varTheta_{i_{1},i_{2}}.
\end{align*}
The rest of the scheme remains the same as in the one-copy case. This completes the SLN scheme for the wall-bounded Oberbeck-Boussinesq flows.

\subsubsection{Numerical experiments}
Here we carried out two sets of numerical experiments based on the one-copy scheme for the half-plane case. We consider two different Prandtl numbers: when the kinematic viscosity $\nu = 1$, and thermal diffusivity $\kappa = 0.15$, and when $\nu = 0.15$, $\kappa = 1$.

We choose the typical length scale $L= 2\pi$, consider the flow in the rectangle region $[-L,L]\times[0,L]$ and assume that $\alpha = 0.0005$. The parameter we use to smooth out the Biot-Savart kernel is chosen to be $\epsilon = 0.1$. 

In the experiment presented, we set the initial velocity to be of the form
\[
u(x,0) = (-10\sin(2x_2),0),
\]
and the initial temperature is given by 
\[
\theta(x,0) = 0.01(8\pi^2-{x_1}^2-{x_2}^2).
\]

Thus, $\omega(x,0) = 20\cos(2x_2)$, and 
\[
\hat{\omega}_{i_1,i_2} =\begin{cases}
    20\cos(2i_2 h), \quad &i_2>0,\\
    0,    \quad &i_2=0.
\end{cases} 
\]
The time step is $\delta = 0.01$, and the boundary layer thickness is taken to be $\gamma = \pi/40$. The mesh size is set to be $h= 2\pi/40 \approx 0.157$ and $h_w = \gamma/20 \approx 0.00393$, and $\varepsilon = h_w/2$. We assume that the external heat source is at temperature $T_w = 0.1$, and 
\[
\theta_0'(x_1) = -0.2x_1. 
\]

The numerical experiment results at times $t = 0.2$, $t= 1.0$, $t=1.8$ with Prandtl number $\Pr=0.15$ are shown in Figure \ref{OB-W-sim-015}, and the results at the given times with $\Pr=6.67$ are shown in Figure \ref{OB-W-sim-667}.

\begin{figure}[H]
     \centering
     \begin{subfigure}[b]{0.3\textwidth}
     \includegraphics[width=\textwidth]{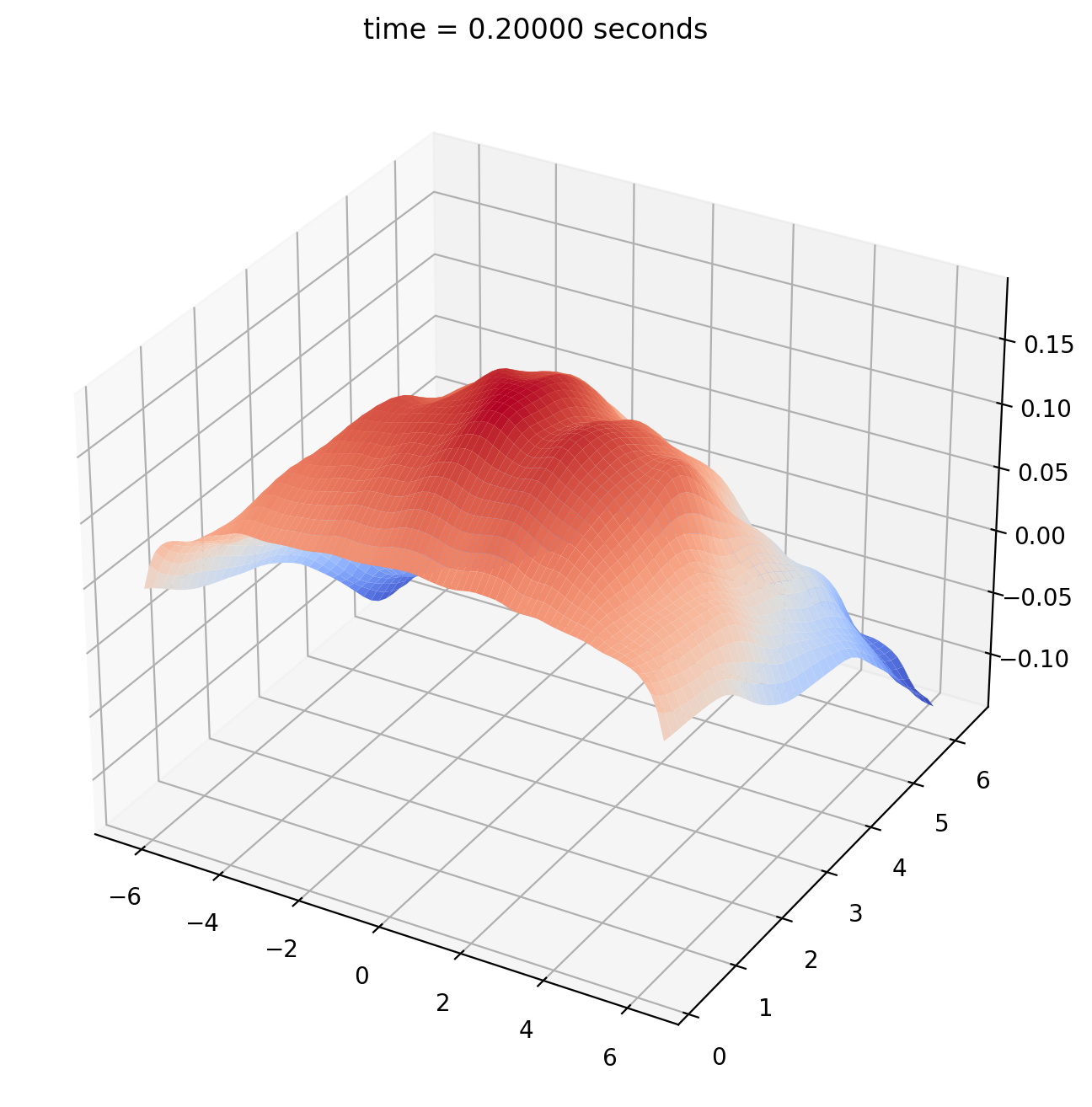}
     \caption{Temperature at $t=0.2$.}
     \end{subfigure}
     \hspace{0.5cm}
     \begin{subfigure}[b]{0.3\textwidth}
     \includegraphics[width=\textwidth]{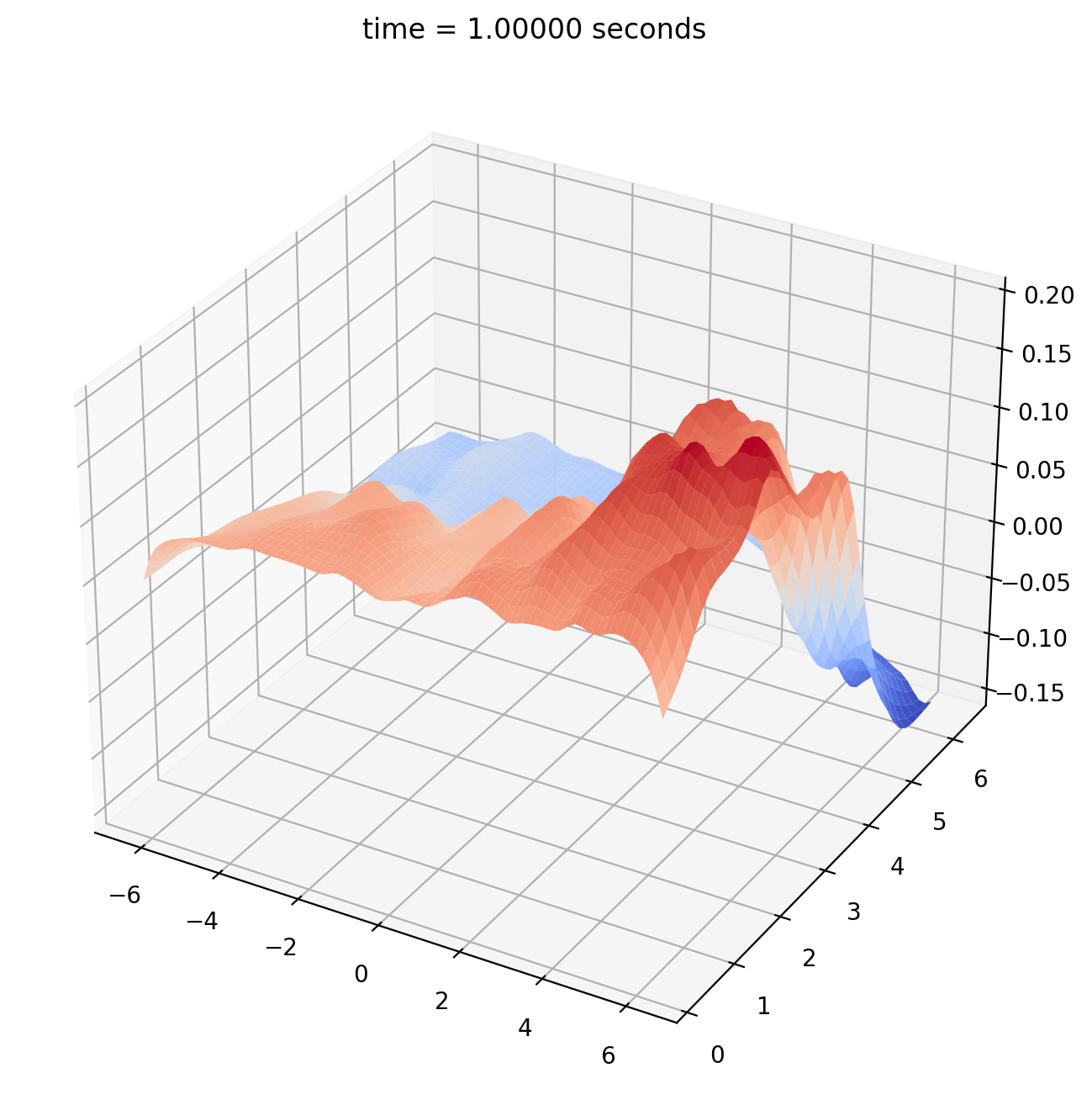}
     \caption{Temperature field at $t=1.0$.}
      \end{subfigure}
      \hspace{0.5cm}
     \begin{subfigure}[b]{0.3\textwidth}
     \includegraphics[width=\textwidth]{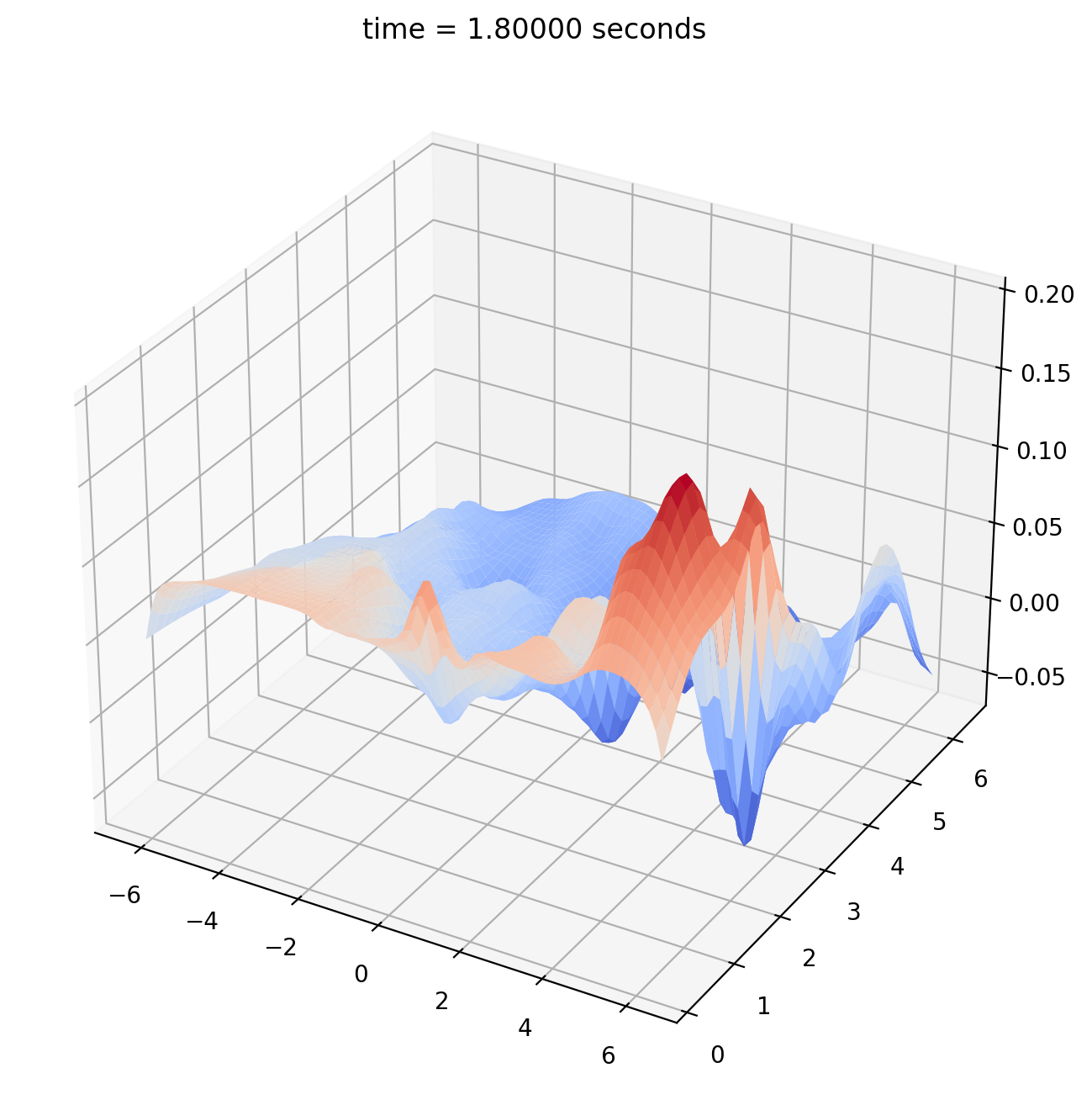}
     \caption{Temperature field at $t=1.8$.}
     \end{subfigure}
     
     \vspace{0.5cm}
      \begin{subfigure}[b]{0.3\textwidth}
     \includegraphics[width=\textwidth]{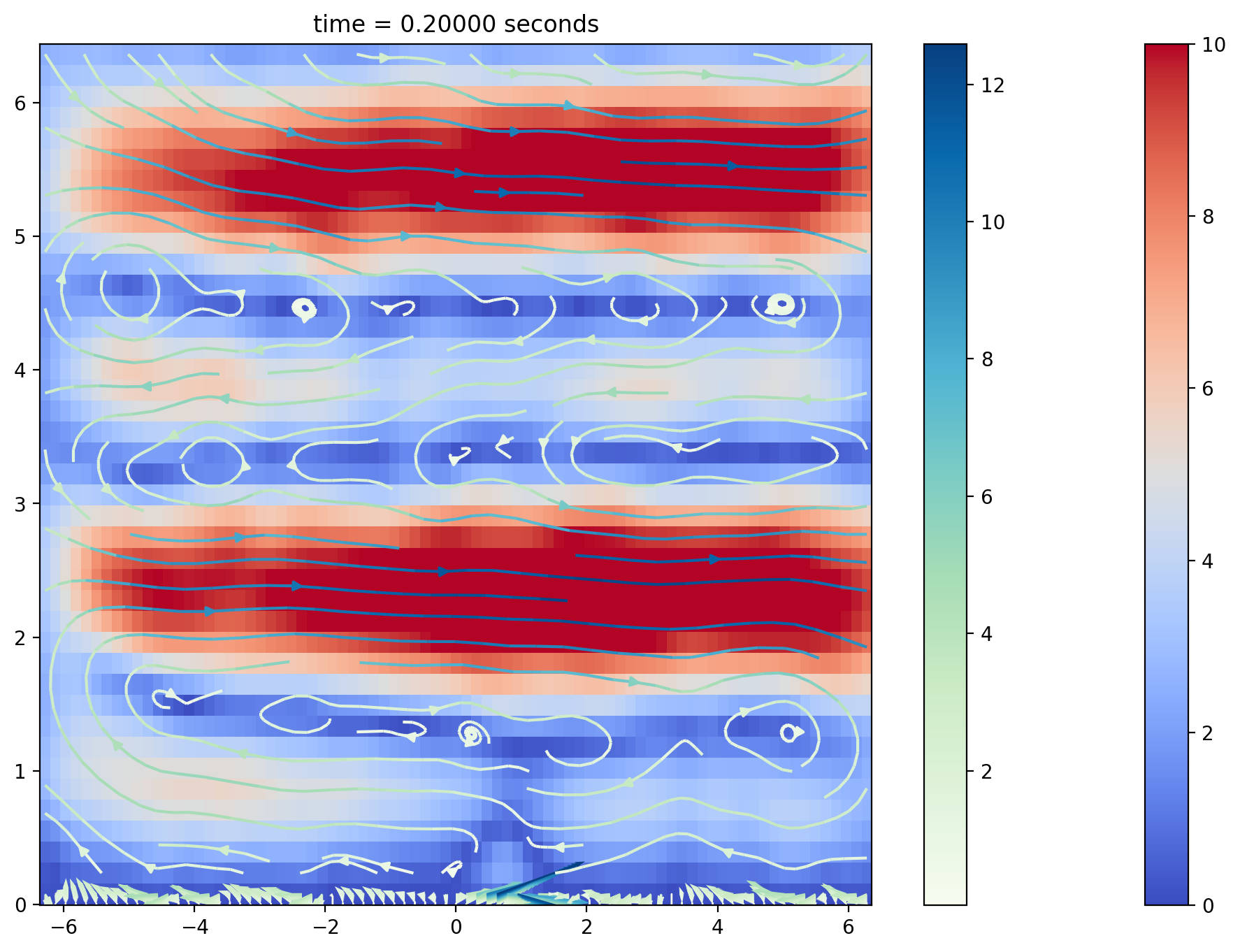}
     \caption{Velocity at $t=0.2$.}
     \end{subfigure}
     \hspace{0.5cm}
     \begin{subfigure}[b]{0.3\textwidth}
     \includegraphics[width=\textwidth]{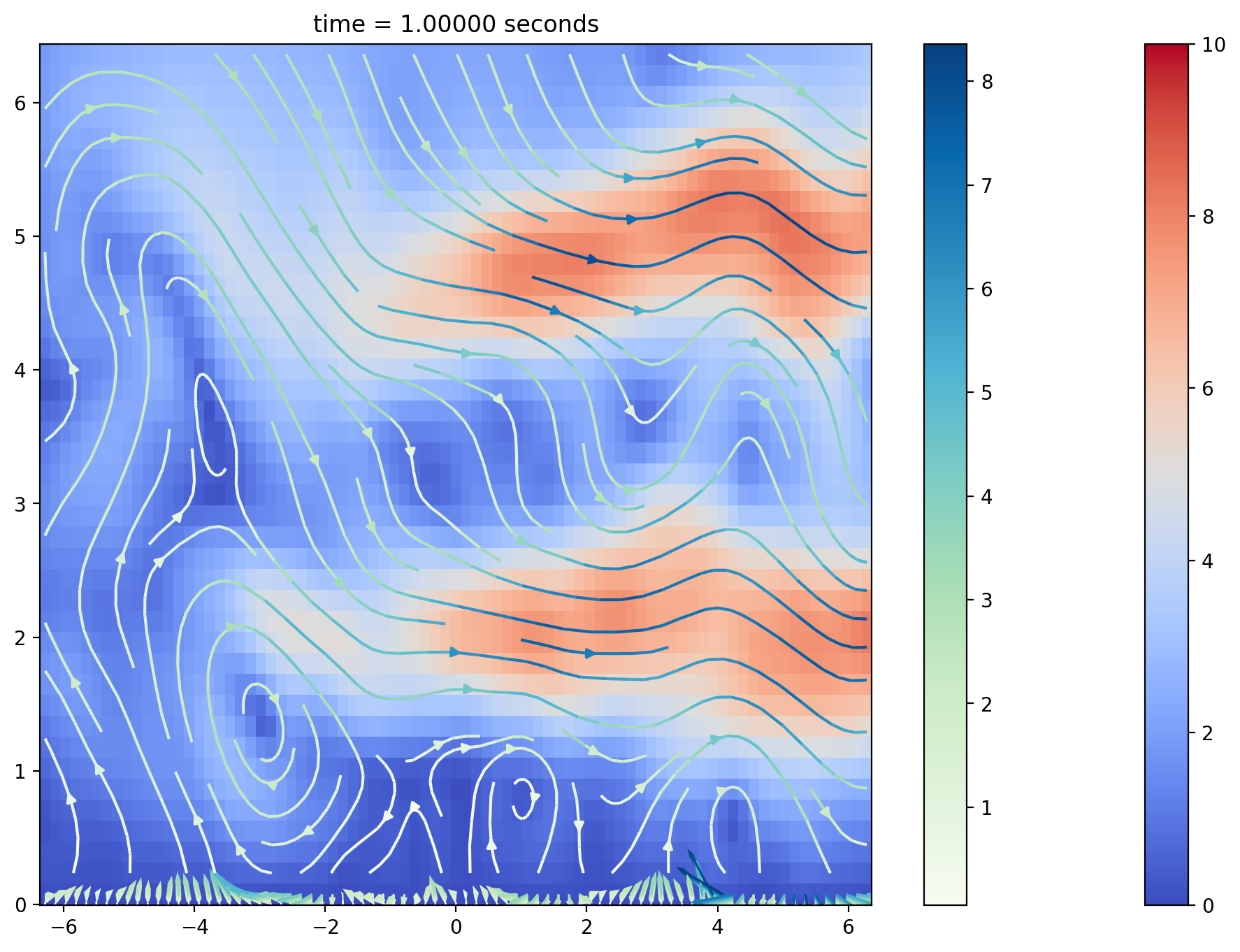}
     \caption{Velocity at $t=1.0$.}
     \end{subfigure}
      \hspace{0.5cm}
        \begin{subfigure}[b]{0.3\textwidth}
     \includegraphics[width=\textwidth]{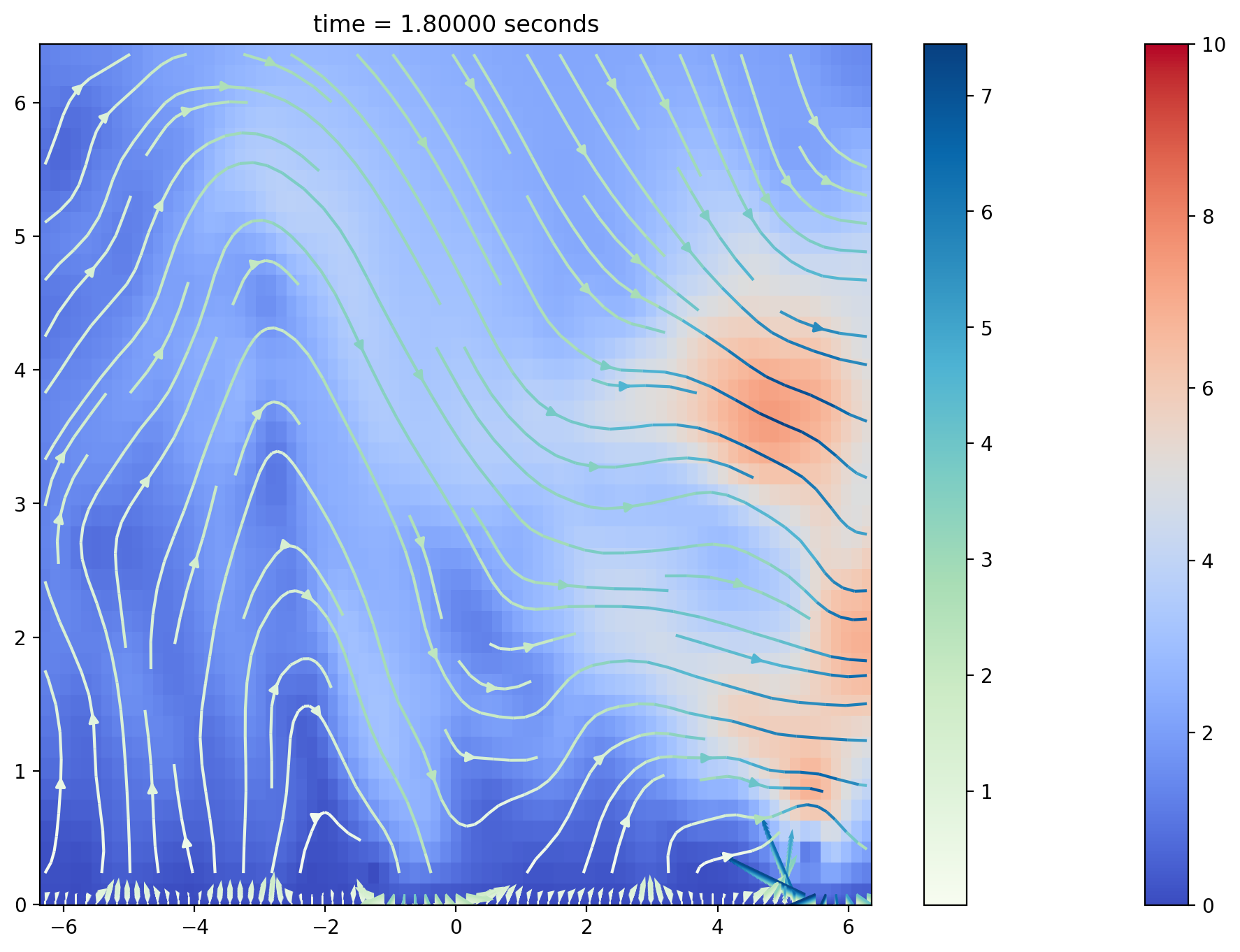}
     \caption{Velocity at $t=1.8$.}
     \end{subfigure}

     \vspace{0.5cm}
      \begin{subfigure}[b]{0.3\textwidth}
     \includegraphics[width=\textwidth]{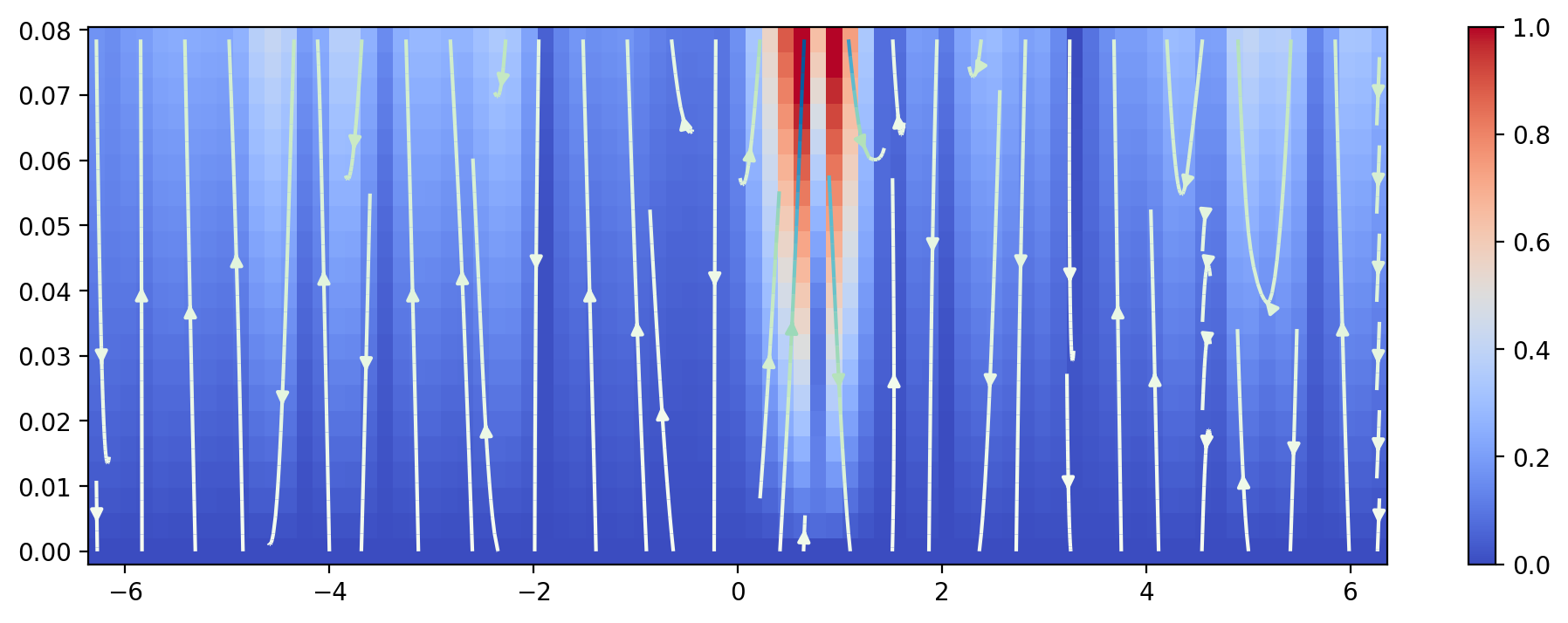}
     \caption{Boundary layer velocity at $t=0.2$.}
     \end{subfigure}
     \hspace{0.5cm}
     \begin{subfigure}[b]{0.3\textwidth}
     \includegraphics[width=\textwidth]{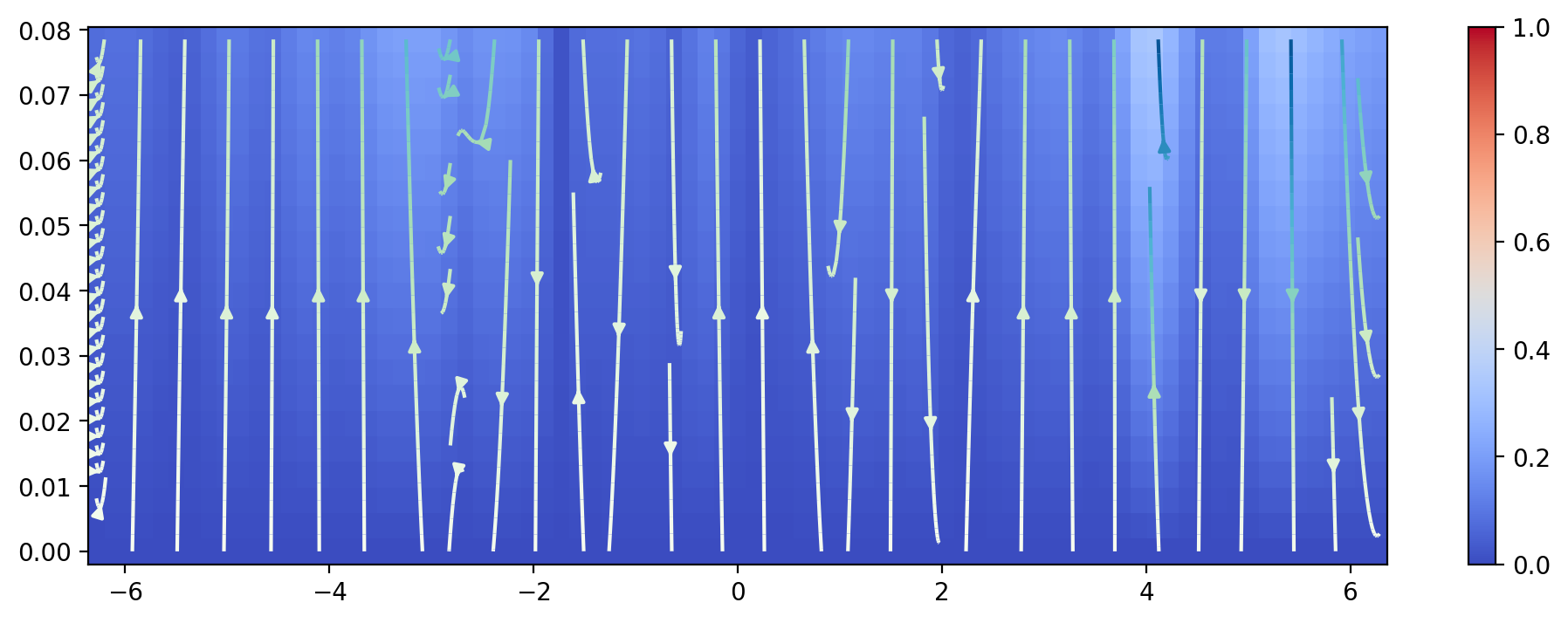}
     \caption{Boundary layer velocity at $t=1.0$.}
     \end{subfigure}
      \hspace{0.5cm}
      \begin{subfigure}[b]{0.3\textwidth}
     \includegraphics[width=\textwidth]{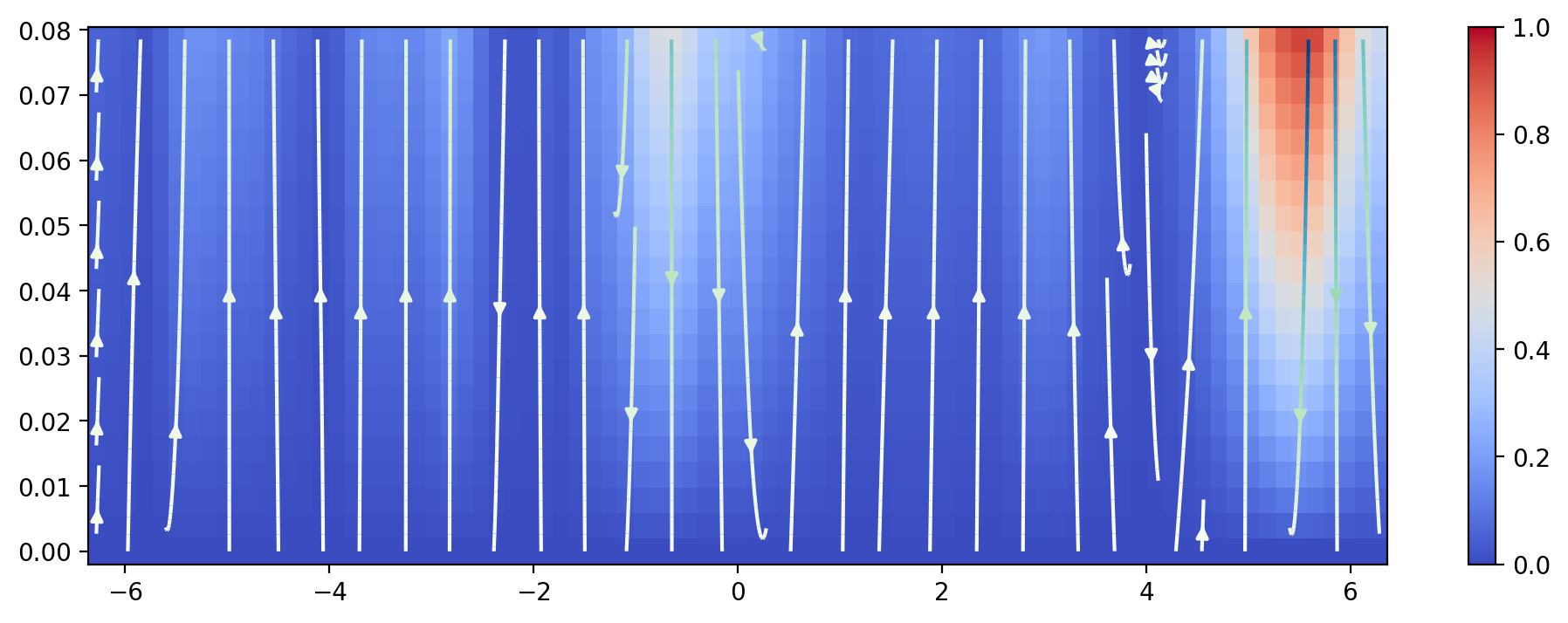}
     \caption{Boundary layer velocity at $t=1.8$.}
     \end{subfigure}
 \caption{Temperature and velocity fields of Oberbeck-Boussinesq flows on $\mathbb{R}^2_+$ with Prandtl number $\mathrm{Pr} = 0.15$.}
 \label{OB-W-sim-015}
 \end{figure}
 
 \begin{figure}[H]
     \centering
     \begin{subfigure}[b]{0.3\textwidth}
     \includegraphics[width=\textwidth]{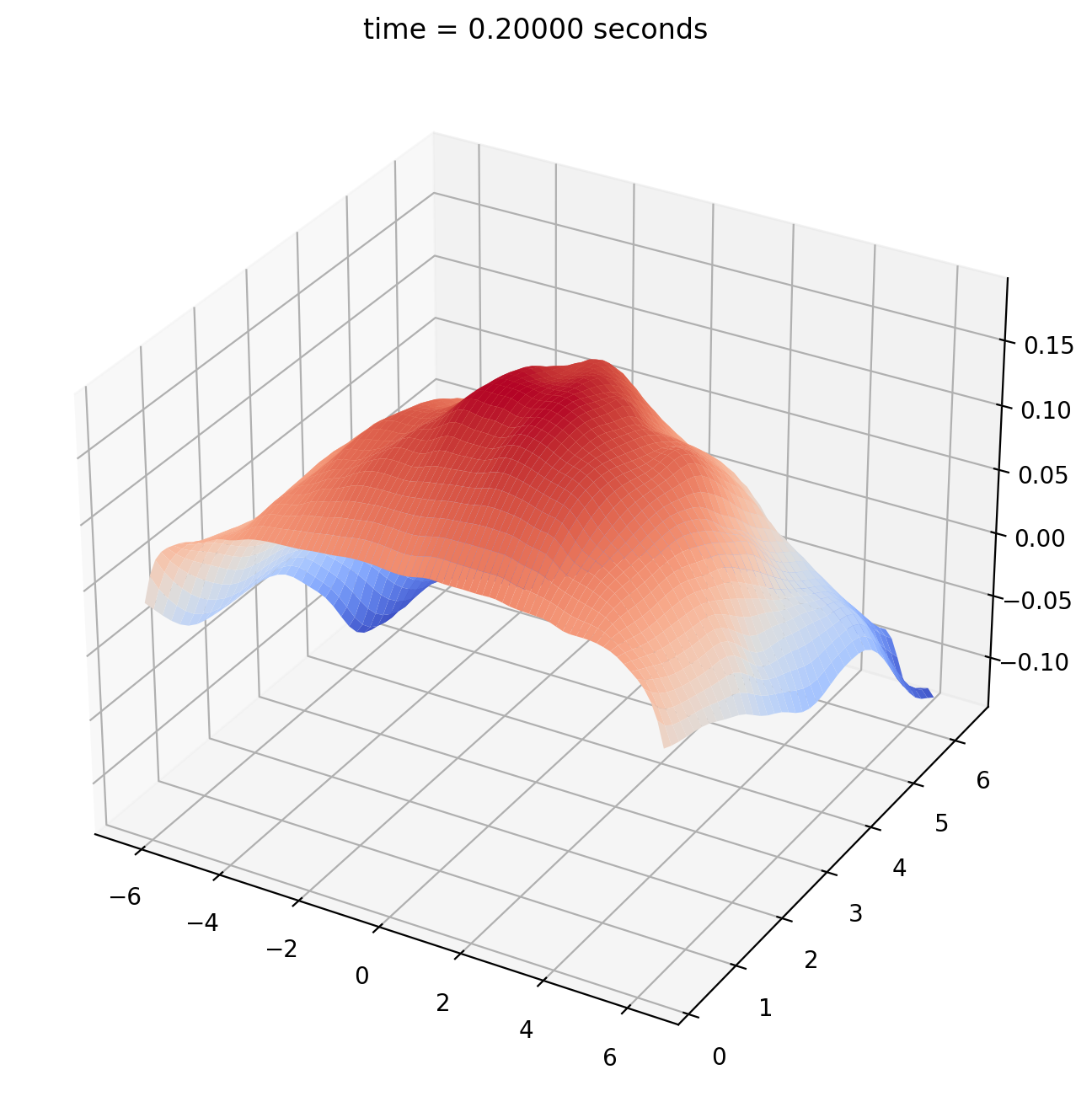}
     \caption{Temperature at $t=0.2$.}
     \end{subfigure}
     \hspace{0.5cm}
     \begin{subfigure}[b]{0.3\textwidth}
     \includegraphics[width=\textwidth]{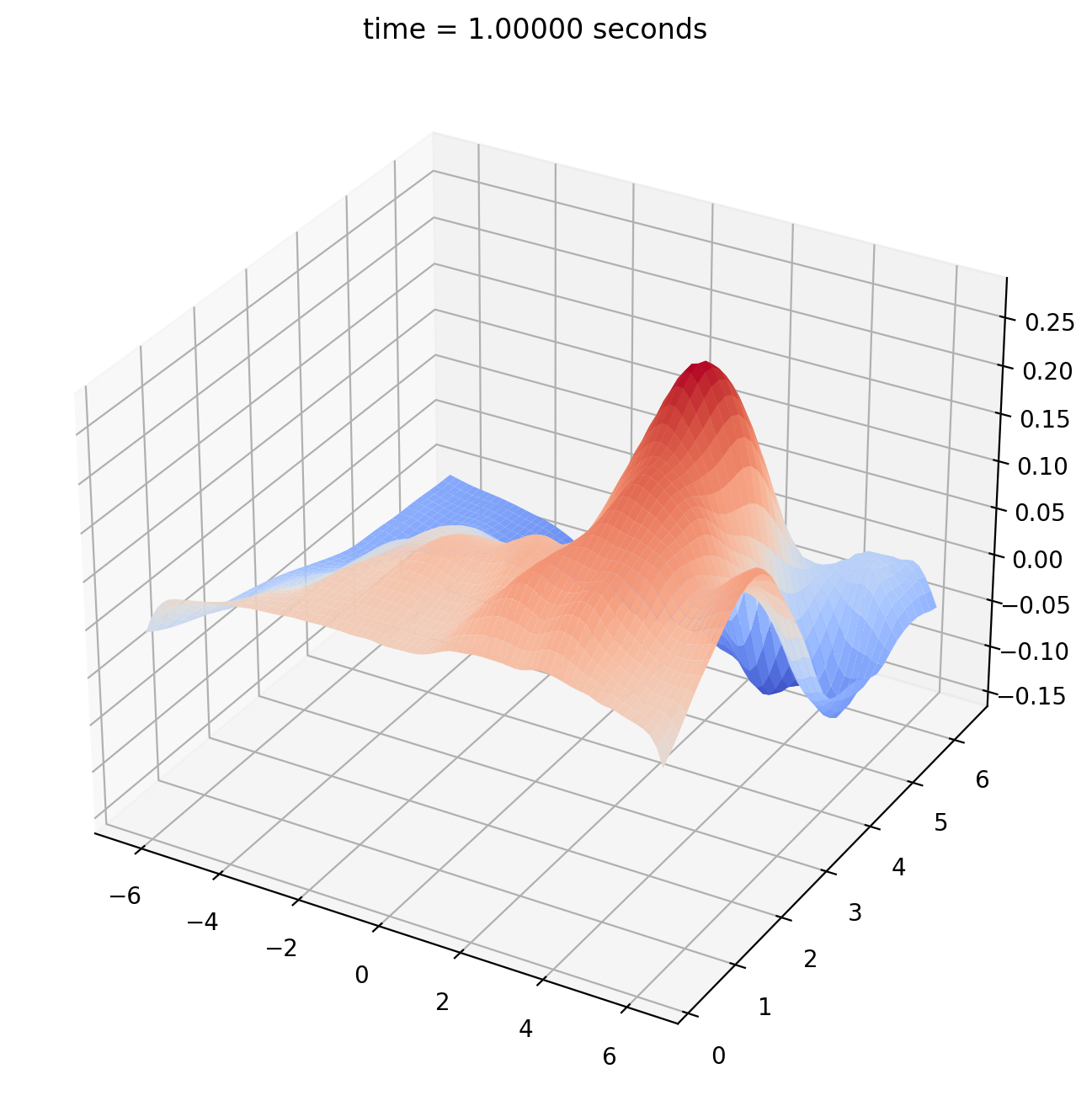}
     \caption{Temperature field at $t=1.0$.}
      \end{subfigure}
      \hspace{0.5cm}
     \begin{subfigure}[b]{0.3\textwidth}
     \includegraphics[width=\textwidth]{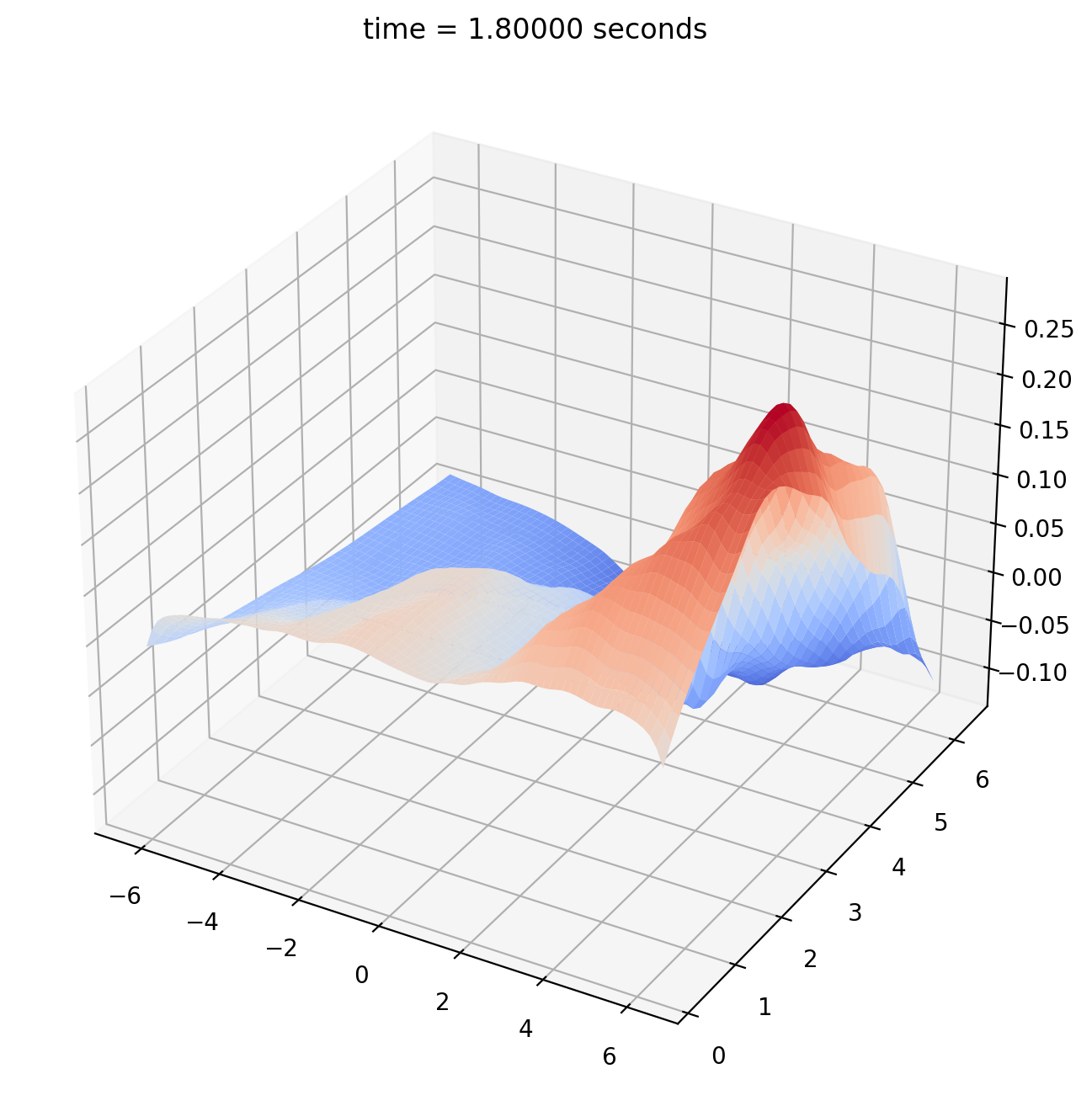}
     \caption{Temperature field at $t=1.8$.}
     \end{subfigure}
     
     \vspace{0.5cm}
      \begin{subfigure}[b]{0.3\textwidth}
     \includegraphics[width=\textwidth]{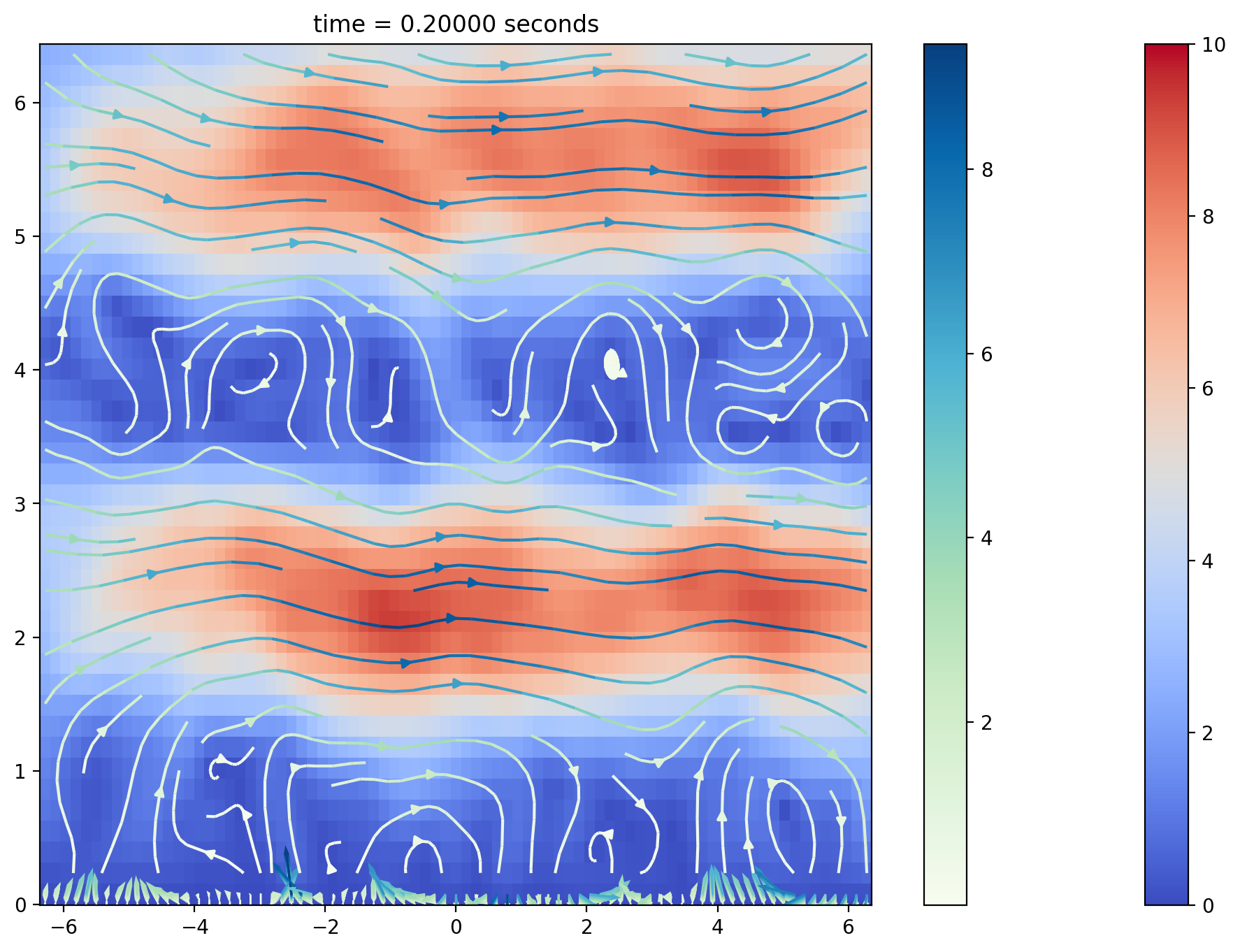}
     \caption{Velocity at $t=0.2$.}
     \end{subfigure}
     \hspace{0.5cm}
     \begin{subfigure}[b]{0.3\textwidth}
     \includegraphics[width=\textwidth]{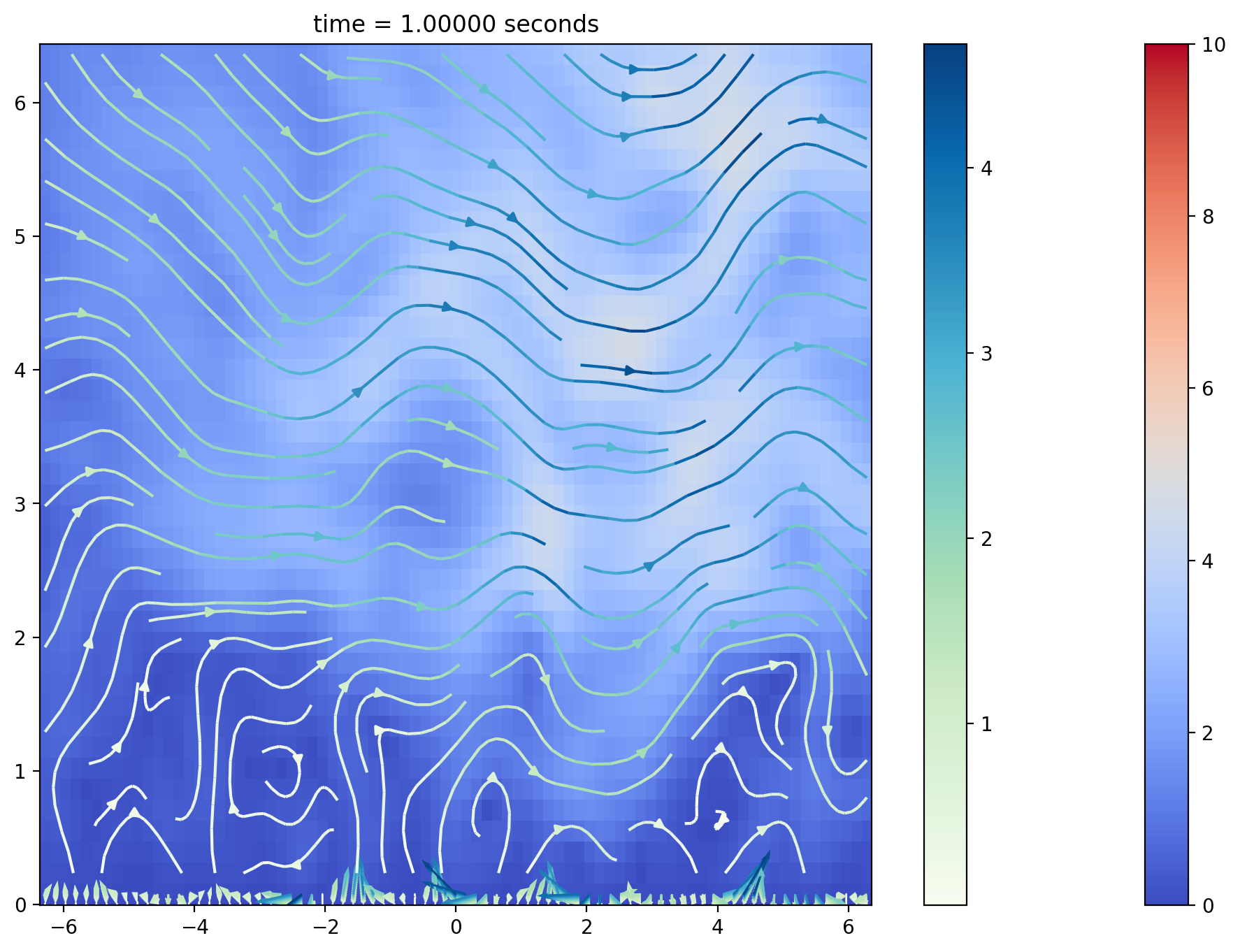}
     \caption{Velocity at $t=1.0$.}
     \end{subfigure}
      \hspace{0.5cm}
        \begin{subfigure}[b]{0.3\textwidth}
     \includegraphics[width=\textwidth]{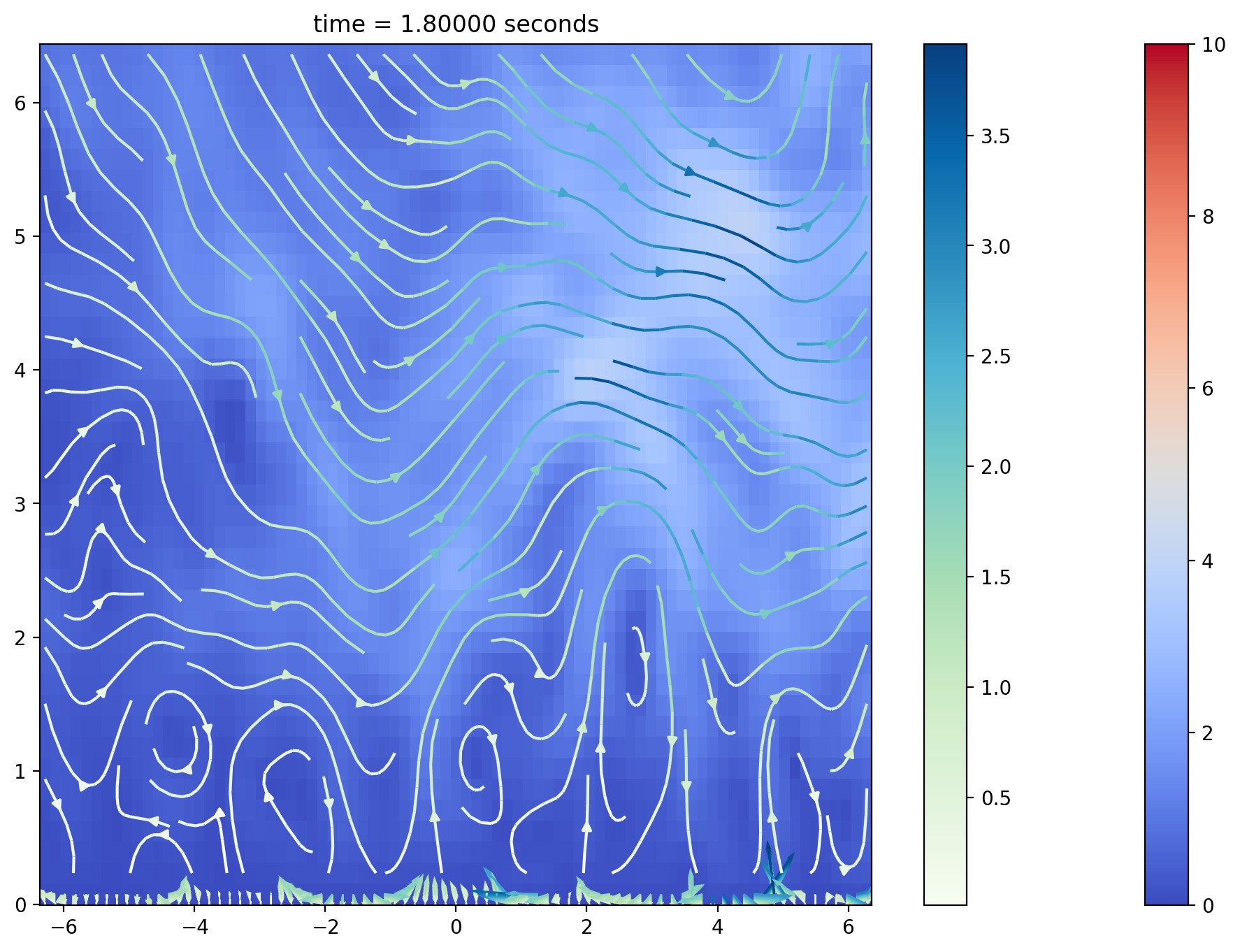}
     \caption{Velocity at $t=1.8$.}
     \end{subfigure}

     \vspace{0.5cm}
      \begin{subfigure}[b]{0.3\textwidth}
     \includegraphics[width=\textwidth]{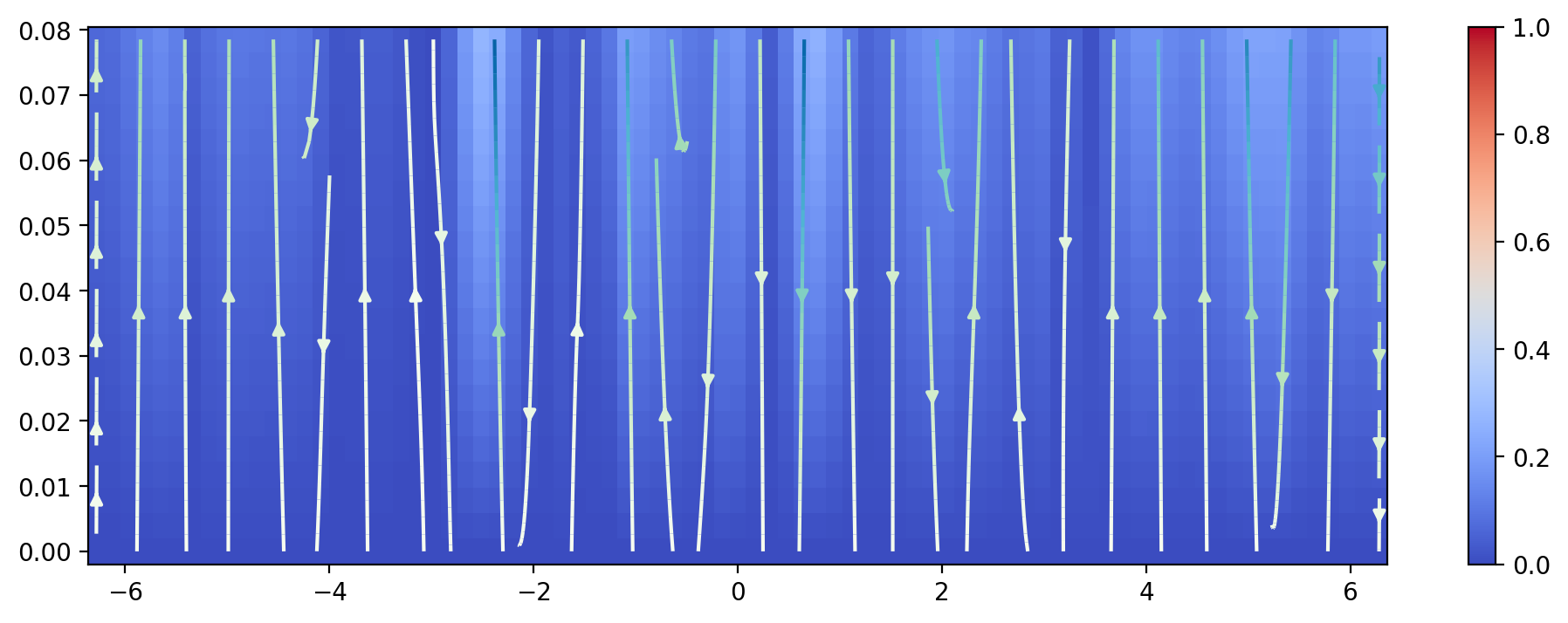}
     \caption{Boundary layer velocity at $t=0.2$.}
     \end{subfigure}
     \hspace{0.5cm}
     \begin{subfigure}[b]{0.3\textwidth}
     \includegraphics[width=\textwidth]{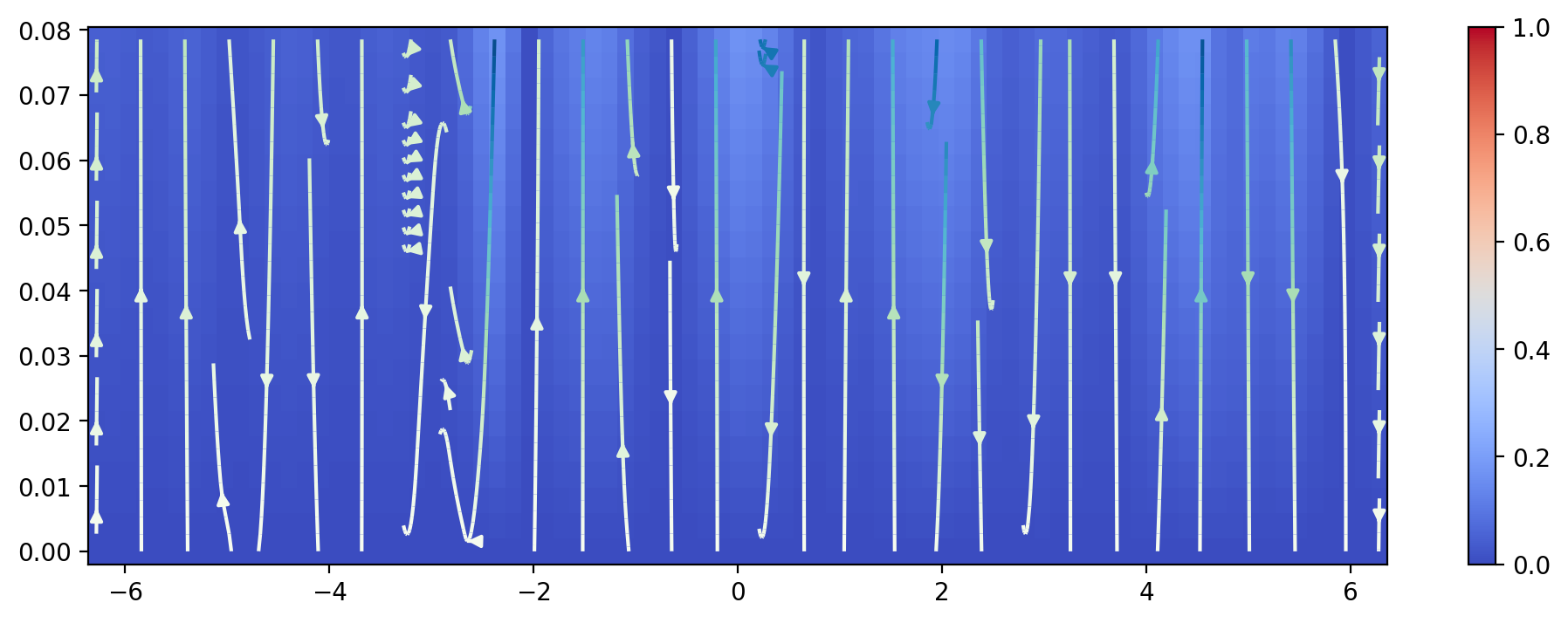}
     \caption{Boundary layer velocity at $t=1.0$.}
     \end{subfigure}
      \hspace{0.5cm}
      \begin{subfigure}[b]{0.3\textwidth}
     \includegraphics[width=\textwidth]{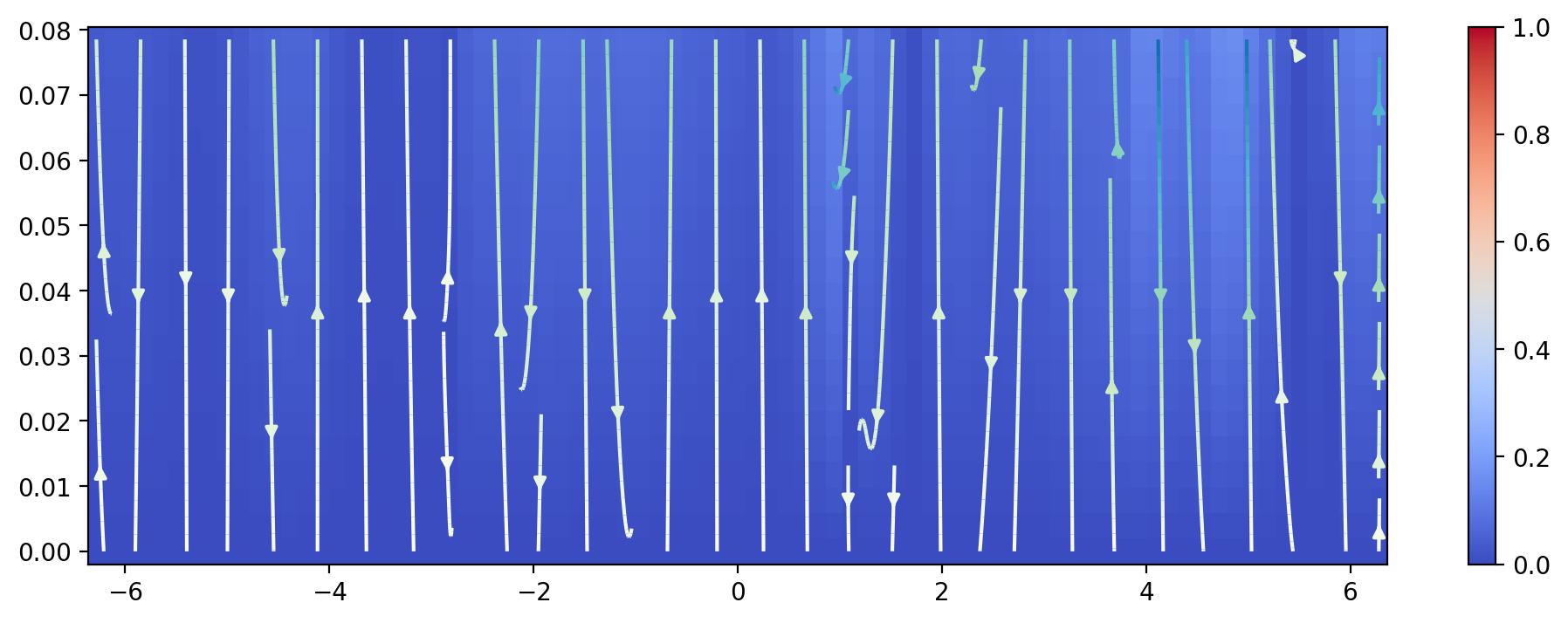}
     \caption{Boundary layer velocity at $t=1.8$.}
     \end{subfigure}
 \caption{Temperature and velocity fields of Oberbeck-Boussinesq flows on $\mathbb{R}^2_+$ with Prandtl number $\mathrm{Pr} = 6.67$.}
 \label{OB-W-sim-667}
 \end{figure}
The simulations demonstrate well the regular pattern known as the B{\'e}nard convection, and also reveal the detailed hairy type of flows within the thin boundary layer, confirming the theoretical results and observations.

\section*{Data Availability Statement}

No data are used in this article to support the findings of this study.

\section*{Acknowledgement}

ZQ is supported
partially by the EPSRC Centre for Doctoral Training in Mathematics
of Random Systems: Analysis, Modelling and Simulation (EP/S023925/1).

\end{document}